\newcommand{\blankbrac}{[ \cdot,\cdot ]}
\newcommand{\brac}[2]{\left \{ #1,#2 \right\}}
\newcommand{\sbrac}[2]{\left [ #1,#2 \right ]}
\newcommand{\lbrac}[2]{\left [ #1,#2 \right ]_{L}}
\newcommand{\lpbrac}[2]{\left [ #1,#2 \right ]_{L'}}
\newcommand{\cbrac}[2]{\left [ #1,#2 \right ]_{C}}
\newcommand{\tcbrac}[2]{\left [ #1,#2 \right ]_{C}}
\newcommand{\abrac}[2]{\left [#1,#2 \right]_{A}}
\newcommand{\aibrac}[2]{\left [#1,#2 \right]_{i}}
\newcommand{\tabrac}[2]{\left [#1,#2 \right]_{A}}
\newcommand{\dbrac}[2]{\left \llbracket #1,#2 \right \rrbracket_{C}}
\newcommand{\tdbrac}[2]{\left \llbracket #1,#2 \right  \rrbracket_{C}}
\newcommand{\innerprodp}[2]{\bigl \langle #1  ,  #2 \bigr \rangle^{+}}
\newcommand{\innerprodm}[2]{\bigl \langle #1 ,  #2 \bigr \rangle^{-} }
\newcommand{\ham}{\Omega^{1}_{\mathrm{Ham}}(M)}
\newcommand{\X}{\mathfrak{X}}
\newcommand{\Xham}{\mathfrak{X}_{\mathrm{Ham}}(M)}
\newcommand{\cinf}{C^{\infty}}
\newcommand{\lie}[2]{\mathcal{L}\,_{v_{#1}}\, #2}
\renewcommand{\L}{\mathcal{L}}
\newcommand{\U}{\mathcal{U}}
\newcommand{\ip}[1]{\iota_{v_{#1}}}
\newcommand{\sh}[1]{\underline{#1}}
\newcommand{\st}[1]{\mathcal{#1}}
\newcommand{\cOmega}{\Omega_{\mathrm{cl}}}
\newcommand{\R}{\mathbb{R}}
\newcommand{\C}{\mathbb{C}}
\newcommand{\Z}{\mathbb{Z}}
\newcommand{\innerprod}[2]{\langle #1,#2 \rangle}
\newcommand{\g}{\mathfrak{g}}
\renewcommand{\u}{\mathfrak{u}}
\newcommand{\tensor}{\otimes}
\newcommand{\maps}{\colon}
\renewcommand{\i}{\sqrt{-1}}
\newcommand{\half}{\frac{1}{2}}
\newcommand{\epi}{\twoheadrightarrow}
\DeclareMathOperator{\Aut}{\mathrm{Aut}}
\DeclareMathOperator{\Hom}{\mathrm{Hom}}
\DeclareMathOperator{\id}{\mathrm{id}}
\DeclareMathOperator{\im}{\mathrm{im}}
\DeclareMathOperator{\cp}{\mathrm{c.\!p}}
\theoremstyle{plain}
\newtheorem{theorem}{Theorem}[section]
\newtheorem{prop}[theorem]{Proposition}
\newtheorem{lemma}[theorem]{Lemma}
\newtheorem{corollary}[theorem]{Corollary}
\newtheorem{definition}[theorem]{Definition}
\theoremstyle{remark}
\newtheorem{example}{Example}
\begin{document}

\title[2-plectic geometry, Courant algebroids] {2-plectic geometry,
  Courant algebroids, and categorified prequantization}
\author{Christopher L.\ Rogers} \email{\texttt{chris@math.ucr.edu}}
\address{Department of Mathematics, University of California,
  Riverside, California 92521, USA} \date{\today}
\curraddr{Mathematisches Institut, Georg-August-Universiti\"{a}t
  G\"{o}ttingen, Bunsenstr. 3-5, D-37073, G\"{o}ttingen, DE}
\thanks{This work was supported by a Junior Research Fellowship from
  the Erwin Schr\"{o}dinger International Institute for Mathematical
  Physics and FQXi grant RFP2-08-04.}


\begin{abstract}
  A 2-plectic manifold is a manifold equipped with a closed
  nondegenerate 3-form, just as a symplectic manifold is equipped with
  a closed nondegenerate 2-form.  In 2-plectic geometry one finds the higher
  analogues of many structures familiar from symplectic geometry.  For
  example, any 2-plectic manifold has a Lie 2-algebra consisting of
  smooth functions and Hamiltonian 1-forms.  This is equipped with a
  Poisson-like bracket which only satisfies the Jacobi identity up
  to `coherent chain homotopy'. Over any 2-plectic manifold is a vector
  bundle equipped with extra structure called an exact Courant
  algebroid.  This Courant algebroid is the 2-plectic analogue of a
  transitive Lie algebroid over a symplectic manifold. Its space of
  global sections also forms a Lie 2-algebra. We show that this Lie
  2-algebra contains an important sub-Lie 2-algebra which is
  isomorphic to the Lie 2-algebra of Hamiltonian 1-forms.
  Furthermore, we prove that it is quasi-isomorphic to a central
  extension of the (trivial) Lie 2-algebra of Hamiltonian vector
  fields, and therefore is the higher analogue of the well-known
  Kostant-Souriau central extension in symplectic geometry. We
  interpret all of these results within the context of a categorified
  prequantization procedure for 2-plectic manifolds. In doing so, we
  describe how $U(1)$-gerbes, equipped with a connection and curving,
  and Courant algebroids are the 2-plectic analogues of principal
  $U(1)$ bundles equipped with a connection and their associated
  Atiyah Lie algebroids.
  \end{abstract}

\maketitle

\section{Introduction}
\label{introduction}
A multisymplectic manifold is a smooth manifold equipped with a
closed, nondegenerate form of degree $\geq 2$.
In this paper, we call a manifold `$n$-plectic' if the form has degree $(n+1)$. These
manifolds naturally arise in certain covariant Hamiltonian formalisms
for classical field theory \cite{Helein,Kijowski,RomanRoy:2005en}.
In these formalisms, one describes a $(n+1)$-dimensional field theory
by using a finite-dimensional $n$-plectic manifold as
a `multi-phase space' instead of an infinite-dimensional
phase space. The $n$-plectic form can be used to define a
system of partial differential equations which are the analogue of
Hamilton's equations in classical mechanics. The solutions to these
equations correspond to particular submanifolds of the multi-phase space
that encode the value of the field at each point in space-time as
well as the values of its time and spatial derivatives.

Other formalisms, such as higher gauge theory
\cite{BaezSchreiber:2005,Bartels:2004,Schreiber:2005}, suggest that
structures found in classical mechanics can be generalized by using
higher category and homotopy theory and then applied to the study of
field, string, and brane theories.  Motivated by these ideas, we
hypothesized in our previous work with Baez and Hoffnung
\cite{Baez:2008bu} that the higher analogues of well-known algebraic
and geometric structures on symplectic manifolds should naturally
arise on $n$-plectic manifolds. Algebraically, this is indeed true.
Just as a symplectic structure makes the ring of smooth functions a
Poisson algebra, an $n$-plectic structure gives a Lie $n$-algebra on a
$n$-term chain complex consisting of differential $p$-forms for $0
\leq p \leq n-2$ and certain $(n-1)$-forms which we call Hamiltonian
\cite{Rogers:2010nw}.  A Lie $n$-algebra (or $n$-term
$L_{\infty}$-algebra) is a higher analogue of a differential graded
Lie algebra. It consists of a graded vector space concentrated in
degrees $0,\ldots,n-1$ equipped with a collection of
skew-symmetric $k$-ary brackets, for $1 \leq k \leq n+1$, that satisfy
a generalized Jacobi identity \cite{Lada-Markl,LS}. In particular, the
$k=2$ bilinear bracket behaves like a Lie bracket that only satisfies
the ordinary Jacobi identity up to higher coherent chain homotopy.
When $n=1$, the relevant Lie $1$-algebra is just the underlying Lie
algebra of the usual Poisson algebra. When $n=2$, we obtain a Lie
2-algebra whose underlying 2-term chain complex consists of smooth
functions and Hamiltonian 1-forms.

Now let us consider the geometric picture.  Interesting geometric
structures appear, in particular, on prequantizable symplectic
manifolds i.e.\ those manifolds $(M,\omega)$ with the property that
the integral of the symplectic form $\omega$ over any closed oriented
2-surface is an integer multiple of $2 \pi \i$. In this case, there
exists a principal $U(1)$-bundle $P \stackrel{\pi}{\to} M$ over the
manifold equipped with a connection whose curvature is $\pi$-related
to the symplectic form. Equivalently, in terms of cohomology, the
symplectic structure gives a representative of a degree 2 class in
integer-valued cohomology, while the data encoding the principal
bundle with connection give a representative of a degree 1 class in
Deligne cohomology. Deligne cohomology can be interpreted as a
refinement of the more familiar $U(1)$-valued \v{C}ech cohomology.  In
degree 1, it classifies not just principal $U(1)$-bundles, but
principal $U(1)$-bundles equipped with connection.

Another geometric structure, called the Atiyah algebroid, is also
present on a prequantized symplectic manifold. The Atiyah algebroid is an example
of a Lie algebroid: roughly, a vector bundle $A \to M$ equipped with a
bundle map to the tangent bundle of $M$, and a Lie algebra structure
on its space of global sections.  The total space of the Atiyah
algebroid is the quotient $A=TP/U(1)$, where $P \to M$ is the
aforementioned principal $U(1)$-bundle. Sections of $A$ are
$U(1)$-invariant vector fields on $P$. A connection on $P$ is
equivalent to a splitting of the short exact sequence
\[
0 \to \R \times M \to A\stackrel{\pi_{\ast}}{\to} TM \to 0
\]
where the map $\R \times M \to A$ corresponds to identifying the
vertical subspace of $T_pP$ with the Lie algebra $\u(1) \cong
\R$. Those sections of $A$ which preserve the connection (or
splitting) form a Lie subalgebra that is isomorphic to the Poisson
algebra.  This implies that there is a well-defined action of the
Poisson algebra on the $\C$-valued functions on $P$. Compactly
supported global sections of the line bundle associated to $P$ form a
pre-Hilbert space and can be identified with $U(1)$-homogeneous
$\C$-valued functions on $P$ of degree $-1$.  In this way one obtains
a faithful representation, or a quantization, of the Poisson algebra
by linear operators on a Hilbert space. Moreover, if the symplectic
manifold is connected, then the Poisson algebra gives what is known as
the Kostant-Souriau central extension of the Lie algebra of
Hamiltonian vector fields \cite{Kostant:1970}. The symplectic form,
evaluated at a point, gives a representative of the degree 2 class in
the Lie algebra cohomology of the Hamiltonian vector fields (with
values in the trivial representation) corresponding to this
extension. The fact that this central extension is quantized, rather
than the Hamiltonian vector fields themselves, is the reason why the
concept of `phase' is introduced in quantum mechanics.

The process described above is known as prequantization
\cite{Kostant:1970}. It is the first step towards geometrically
quantizing a symplectic manifold \cite{Kostant:1970, Souriau:1967}. We
are interested in the higher analogues of the geometric structures
described above. Indeed, the geometric quantization of what we call an
$n$-plectic manifold remains a long-standing open problem. In this
paper, we focus particularly on the prequantization of 2-plectic
manifolds, since this is the first really new case of $n$-plectic
geometry. Hence, we study prequantized 2-plectic manifolds and
the 2-plectic analogues of principal $U(1)$-bundles, Atiyah
algebroids, and the Kostant-Souriau central extension.

In analogy with the symplectic case, a 2-plectic manifold $(M,\omega)$
is prequantizable if the integral of the 2-plectic form $\omega$ over
any closed oriented 3-surface is an integer multiple of $2 \pi
\i$. Hence, the 2-plectic structure gives a
representative of a degree 3 class in integer-valued cohomology. This
degree 3 class corresponds to a (not necessarily unique) degree 2
class in Deligne cohomology. It is well-known that a geometric object
which realizes this degree 2 class is a $U(1)$-gerbe over $M$ equipped
with a connection and curving whose 3-curvature is $\omega$
\cite{Brylinski:1993}. Roughly, a $U(1)$-gerbe is a stack (or sheaf of
groupoids) over $M$ that is locally isomorphic to the stack of
$U(1)$-bundles over $M$.  Just as a connection on a principal
$U(1)$-bundle is equivalent to specifying local 1-forms on $M$
satisfying a cocycle condition, the connection and curving on a
$U(1)$-gerbe correspond to specifying local 1-forms and 2-forms on $M$
satisfying a pair of cocycle conditions.  So, by going from symplectic
geometry to 2-plectic geometry, we are replacing sets of local
sections of a principal bundle (i.e.\ sheaves) by categories of
principal bundles defined over open sets (i.e.\ stacks). Therefore the
prequantization of a 2-plectic manifold is in some sense `categorified
prequantization'.

What is the 2-plectic analogue of the Atiyah algebroid? We answer this
question by considering a more general problem: understanding the
relationship between 2-plectic geometry and the theory of Courant
algebroids. Roughly, a Courant algebroid is a vector bundle that
generalizes the structure of a Lie algebroid equipped with a symmetric
nondegenerate bilinear form on the fibers. They were first used by
Courant \cite{Courant} to study generalizations of pre-symplectic and
Poisson structures in the theory of constrained mechanical
systems. Curiously, many of the ingredients found in 2-plectic
geometry are also found in the theory of `exact' Courant
algebroids. An exact Courant algebroid is a Courant algebroid whose
underlying vector bundle $C \to M$ is an extension of the tangent
bundle by the cotangent bundle:
\[
0 \to T^{\ast}M \to C \to TM \to 0.
\] 
In a letter to Weinstein, \v{S}evera \cite{Severa1} described how
exact Courant algebroids arise in 2 dimensional variational problems
(e.g.\ bosonic string theory) and showed that they are classified up
to isomorphism by the degree 3 de Rham cohomology of $M$. From any
closed 3-form on $M$, one can explicitly construct an exact Courant
algebroid equipped with an `isotropic' splitting of the above short
exact sequence using local 1-forms and 2-forms that satisfy cocycle
conditions \cite{Bressler-Chervov,Hitchin:2004ut,Gualtieri:2007}.

Obviously, \v{S}evera's classification implies that every 2-plectic
manifold $(M,\omega)$ gives a unique exact Courant algebroid (up to
isomorphism) $C$ whose class is represented by the 2-plectic
structure. However, there are more interesting similarities between
2-plectic structures and exact Courant algebroids.  Roytenberg and
Weinstein \cite{Roytenberg-Weinstein} showed that the bracket on the
space of global sections of a Courant algebroid induces an
$L_{\infty}$ structure. If we are considering an exact Courant
algebroid, then the global sections can be identified with vector
fields and 1-forms on the base space. Roytenberg and Weinstein's
results imply that these sections, when combined with the smooth
functions on the base space, form a Lie 2-algebra
\cite{Roytenberg_L2A}. Moreover, the ``higher brackets'' of the Lie 2-algebra
encode a closed 3-form representing the \v{S}evera class
\cite{Severa-Weinstein}.

The first result we present in this paper is that there exists a Lie
2-algebra morphism which embeds the Lie 2-algebra of Hamiltonian
1-forms on a 2-plectic manifold $(M,\omega)$ into the Lie 2-algebra of
global sections of the corresponding exact Courant algebroid
$C$ equipped with an isotropic splitting.  Moreover, this
morphism gives an isomorphism between the Lie 2-algebra of Hamiltonian
1-forms and the sub Lie 2-algebra consisting of sections of
$C$ which preserve the splitting via a particular kind of
adjoint action. This result holds without any integrality condition on
the 2-plectic structure. However, its meaning becomes clear in the
context of prequantization: it is the higher analogue of the isomorphism
between the underlying Lie algebra of the Poisson algebra on a
prequantized symplectic manifold and the Lie sub-algebra of sections
of the Atiyah algebroid that preserve the connection on the associated
principal bundle. Hence, we see that the 2-plectic analogue of the
Atiyah algebroid associated to a principal $U(1)$-bundle is an exact
Courant algebroid associated to a $U(1)$-gerbe. This idea that exact
Courant algebroids are `higher Atiyah algebroids' has been discussed
previously in the literature
\cite{Bressler-Chervov,Gualtieri:2007}. However, this is the first
time the analogy has been understood using Lie $n$-algebras within the
context of multisymplectic geometry.

The second result presented here involves identifying the 2-plectic
analogue of the Kostant-Souriau central extension and therefore the
source of `phase' in categorified prequantization. On a 2-plectic
manifold, associated to every Hamiltonian 1-form is a Hamiltonian
vector field. These vector fields form a Lie algebra which we can view
as a trivial Lie 2-algebra whose underlying chain complex is
concentrated in degree 0, and whose bracket satisfies the Jacobi
identity on the nose. For any 1-connected (i.e.\ connected and simply
connected) 2-plectic manifold, we show that the Lie 2-algebra of
Hamiltonian 1-forms is quasi-isomorphic to a `strict central
extension' of the trivial Lie 2-algebra of Hamiltonian vector fields
by the abelian Lie 2-algebra $\R \to 0$. This abelian Lie 2-algebra is
known as $b\u(1)$. Furthermore, we show this extension corresponds to
a degree 3 class in the Lie algebra cohomology of the Hamiltonian
vector fields with values in the trivial representation. In analogy
with the symplectic case, a 3-cocycle representing this class can be
constructed by using the 2-plectic form. It follows from the
aforementioned results relating a 2-plectic manifold $(M,\omega)$ to
the Courant algebroid $C$ that the sub Lie 2-algebra of sections of
$C$ that preserve the splitting is also quasi-isomorphic to this
central extension, and can be interpreted as the quantization of the
Lie 2-algebra of Hamiltonian 1-forms. Phases originate from the
presence of $b\u(1)$, which integrates to an important Lie 2-group
called $BU(1)$.

In the next section, we briefly review the construction of transitive
Lie algebroids on symplectic manifolds and describe an embedding of
the Poisson algebra into the Lie algebra of sections of the
algebroid. We recall some basic facts concerning Deligne cohomology
and then consider prequantized symplectic manifolds. We emphasize the
role played by the Atiyah algebroid in prequantization and the
construction of the Kostant-Souriau central extension.  The remainder
of the paper is devoted to the 2-plectic analogue.  In Sections
\ref{2plectic_sec} and \ref{courant_sec} we introduce 2-plectic
manifolds and Courant algebroids as well as review \v{S}evera's
classification theorem for exact Courant algebroids. Section
\ref{geometric} contains a description of the geometric relationship
between 2-plectic manifolds and exact Courant algebroids. After
reviewing Lie 2-algebras in Section \ref{L2A_sec}, we present the
algebraic relationship between 2-plectic and Courant in Section
\ref{algebraic}. In Section \ref{2plectic_quant}, we introduce
prequantized 2-plectic manifolds and describe how the exact Courant
algebroid plays the role of a higher Atiyah algebroid. We then present
in Section \ref{2plectic_extend} the 2-plectic analogue of the
Kostant-Souriau central extension. We assume the reader is comfortable
with basic results in symplectic geometry and geometric quantization,
but not necessarily familiar with Deligne cohomology, gerbes, Courant
algebroids, or Lie 2-algebras. Therefore, our presentation of these
topics is mostly self-contained.

\section{Lie algebroids, symplectic manifolds, and
  prequantization} \label{symplectic} 
\subsection{Lie algebroids from closed 2-forms} \label{symp_algebroid}
We begin by reviewing the construction of a Lie algebroid which
ultimately will describe how phases arise in the prequantization of
symplectic manifolds. A section of this Lie algebroid is a vector
field on the base manifold together with a `phase', or more precisely,
a real-valued function.

Recall that a \textbf{Lie algebroid} \cite{Mackenzie:1987} over a manifold $M$ is a real
vector bundle $A \to M$ equipped with a bundle map (called the anchor) $\rho \maps A \to
TM$, and a Lie algebra bracket $\blankbrac_{A} \maps \Gamma(A) \tensor
\Gamma(A) \to \Gamma(A)$ such that the induced map
\[
\Gamma(\rho) \maps \Gamma(A) \to \X(M)
\]
is a morphism of Lie algebras, and for all $f \in \cinf(M)$ and
$e_{1},e_{2} \in \Gamma(A)$ we have
the Leibniz rule
\[
\abrac{e_{1}}{f e_{2}} = f \abrac{e_{1}}{e_{2}} + \rho(e_{1})(f)e_{2}. 
\]
A Lie algebroid with surjective anchor map is called a
\textbf{transitive Lie algebroid}. 

The main ideas of the following construction are presented in Sec.\ 17
of Cannas da Silva and Weinstein \cite{CdS-Weinstein:1999}. We provide
the details here in order to compare to the 2-plectic case in Sec.\
\ref{geometric}. Let $(M,\omega)$ be a manifold equipped with a closed
2-form, e.g.\ a pre-symplectic manifold. By a \textbf{trivialization}
of $\omega$, we mean a cover $\{U_{i}\}$ of $M$, equipped with 1-forms
$\theta_{i} \in \Omega^{1}(U_{i})$, and smooth functions $g_{ij} \in
\cinf(U_{i} \cap U_{j})$ such that
\begin{align} 
 \omega \vert_{U_{i}} = d\theta_{i} \\
(\theta_{j}-\theta_{i}) \vert_{U_{ij}} = dg_{ij}, \label{cocycle1}
\end{align}
where $U_{ij}=U_{i} \cap U_{j}$.
Every manifold admits a good cover (i.e.\ a cover where all non-empty finite
intersections $U_{i_{i}\ldots i_{k}}=U_{i_{1}} \cap \ldots \cap
U_{i_{k}}$ are contractible), hence every closed 2-form admits a trivialization.   
Given such a trivialization of $\omega$, we can construct a transitive Lie
algebroid over $M$. Over each $U_{i}$ we consider the Lie algebroid
\[
A_{i}=TU_{i} \oplus \R \to U_{i},
\]
with bracket 
\[
\aibrac{v_{1} + f_{1}}{v_{2} + f_{2}} = [v_{1},v_{2}] + v_{1}(f_{2}) -
v_{2}(f_{1}) 
\]
for all  $v_{i} + f_{i} \in \X(U_{i}) \oplus \cinf(U_{i})$, and anchor $\rho$ given
by the projection onto $TU_{i}$.
From the 1-forms $dg_{ij} \in \Omega^{1}(U_{ij})$, we can construct
transition functions
\begin{align*}
G_{ij} \maps U_{ij} \to GL(n+1),\\
G_{ij}(x)= 
\begin{pmatrix}
1 & 0\\
dg_{ij} \vert_{x} & 1
\end{pmatrix},
\end{align*}
which act on a point $v_{x} + r \in A_{i} \vert_{U_{ij}}$ by 
\[
G_{ij}(x)(v_{x} + r) = v_{x} + r + dg_{ij}(v_{x}).
\]
Clearly, each $G_{ij}$ satisfies the cocycle conditions on $U_{ijk}$ by virtue of Eq.\ \ref{cocycle1}.
Therefore, we have over $M$ the vector bundle
\[
A= \coprod_{x \in M} T_{x}U_{i} \oplus \R / \sim,
\]
where the equivalence is defined via the functions $G_{ij}$ in
the usual way. For any sections $v_{i} + f_{i}$ of $A_{i}
\vert_{U_{ij}}$, a direct calculation shows that
\[
\aibrac{G_{ij}(v_{1} + f_{1})}{G_{ij}(v_{2} + f_{2})} = G_{ij}([v_{1},v_{2}] + v_{1}(f_{2}) -
v_{2}(f_{1})).
\]
Hence the local bracket descends to a well-defined bracket
$\tabrac{\cdot}{\cdot}$ on the quotient. Henceforth, 
$(A,\tabrac{\cdot}{\cdot},\rho)$ will denote this transitive Lie algebroid
associated to the closed 2-form $\omega$.

It's easy to see that the above Lie algebroid is an extension of the
tangent bundle
\[
0 \to M \times \R \to A \stackrel{\rho}{\to} TM \to 0.
\]
Moreover, the 1-forms $\theta_{i} \in \Omega^{1}(U_{i})$ induce a splitting
\[
s \maps TM \to A
\]
of the above sequence defined as
\begin{equation} \label{A_splitting}
s(v_{x})= v_{x} - \theta_{i}(v_{x}), \quad \forall ~ v_{x} \in TU_{i}.
\end{equation}
By a slight abuse of notation, we denote the horizontal lift
$\Gamma(s) \maps \X(M) \to \Gamma(A)$ also by $s$. Hence
every section $e \in \Gamma(A)$ is of the form $e=s(v) +f$,
for some $v\in \X(M)$ and $f \in \cinf(M)$. Using the local definition
of the splitting and the fact that $\omega \vert_{U_{i}}=d\theta_{i}$,
a direct calculation shows that
\begin{equation} \label{tabrac_def}
\tabrac{s(v_{1}) + f_{1}}{s(v_{2}) +
  f_{2}} = s \bigl ([v_{1},v_{2}] \bigr) +
v_{1}(f_{2})- v_{2}(f_{1}) 
-\ip{2}\ip{1}\omega,
\end{equation}
for all sections $s(v_{i}) + f_{i}$. The failure of
the splitting $s \maps TM \to A$ to preserve the Lie bracket on
sections is measured by the 2-form $\omega$:
\[
[s(v_{1}),s(v_{2})]_{A} = s([v_{1},v_{2}]) - \omega(v_{1},v_{2}),
\quad \forall v_{1},v_{2} \in \X(M).
\]

It is a simple exercise to show that a different choice of
trivialization gives a Lie algebroid equipped with a splitting that
is isomorphic to $A$ equipped with the splitting given in Eq.\ \ref{A_splitting}.

\subsection{The Poisson algebra}\label{symp_preserve} 
Let $(M,\omega)$ be a symplectic
manifold. Here $\brac{f}{g}=\omega(v_{f},v_{g})$ denotes the Poisson
bracket on smooth functions. The vector field $v_{f}$, satisfying 
the equality $df=-\ip{f}\omega$, is the unique Hamiltonian vector
field corresponding to the function $f$.  We denote the Lie algebra of
Hamiltonian vector fields by $\Xham$.  Let
$(A,\tabrac{\cdot}{\cdot},\rho)$ be the Lie algebroid
associated to $\omega$ and $s \maps TM \to A$ be the
splitting defined in Eq.\ \ref{A_splitting}. We are interested in a particular Lie
sub-algebra of $\Gamma(A)$  acting on the subspace
$s(\X(M)) \subseteq \Gamma(A) $ via the adjoint action.
\begin{definition} \label{preserve_split_def}
A section $a=s(v) + f \in \Gamma(A)$ {\bf preserves the splitting} $s \maps TM
\to A$ iff $\forall v' \in \X(M)$ 
\[
\bigl [a,s(v') \bigr]_{A} = s([v,v']).
\]
The subspace of sections that preserve the splitting is 
denoted as $\Gamma(A)^{s}$.
\end{definition}

\begin{prop} \label{preserve_lie_alg}
$\Gamma(A)^{s}$ is a Lie subalgebra of $\Gamma(A)$.
\end{prop}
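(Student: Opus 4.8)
The plan is to verify the two requirements for a Lie subalgebra separately: that $\Gamma(A)^{s}$ is a linear subspace of $\Gamma(A)$, and that it is closed under the bracket $\tabrac{\cdot}{\cdot}$. The first is immediate. For $a = s(v) + f$ write $\rho(a) = v$; this is consistent because $s$ is a splitting, so $\rho \circ s = \id$, while $\rho$ annihilates the $\R$-summand. Since the defining condition $\tabrac{a}{s(v')} = s\bigl([\rho(a),v']\bigr)$, required to hold for all $v' \in \X(M)$, is $\R$-linear in $a$ (both $\tabrac{a}{s(v')}$ and $s\bigl([\rho(a),v']\bigr)$ depend linearly on $a$), the set $\Gamma(A)^{s}$ is the kernel of a linear map and hence a subspace.

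The real content is closure under the bracket, and here I would argue purely formally, using only that $\Gamma(\rho)$ is a morphism of Lie algebras together with the Jacobi identities for $\tabrac{\cdot}{\cdot}$ and for vector fields; no appeal to the explicit formula \ref{tabrac_def} or to $\omega$ is needed. Let $a_{1}, a_{2} \in \Gamma(A)^{s}$ and set $v_{i} = \rho(a_{i})$. Because $\Gamma(\rho)$ is a Lie algebra morphism, $\rho\bigl(\tabrac{a_{1}}{a_{2}}\bigr) = [v_{1},v_{2}]$, so the statement to prove is that $\tabrac{\tabrac{a_{1}}{a_{2}}}{s(v')} = s\bigl([[v_{1},v_{2}],v']\bigr)$ for all $v'$. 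Applying the Jacobi identity for the algebroid bracket gives
\[
\tabrac{\tabrac{a_{1}}{a_{2}}}{s(v')} = \tabrac{a_{1}}{\tabrac{a_{2}}{s(v')}} - \tabrac{a_{2}}{\tabrac{a_{1}}{s(v')}}.
\]
Now I use that $a_{2}$ preserves the splitting to replace $\tabrac{a_{2}}{s(v')}$ by $s([v_{2},v'])$, and then that $a_{1}$ preserves the splitting --- applied to the vector field $[v_{2},v']$ --- to obtain $\tabrac{a_{1}}{s([v_{2},v'])} = s\bigl([v_{1},[v_{2},v']]\bigr)$; symmetrically for the second term. This reduces the right-hand side to $s\bigl([v_{1},[v_{2},v']] - [v_{2},[v_{1},v']]\bigr)$, which by the Jacobi identity in $\X(M)$ equals $s\bigl([[v_{1},v_{2}],v']\bigr)$, exactly what is required.

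The only point demanding attention is that at the inner stage one must invoke the preservation condition for $a_{1}$ with the argument $[v_{2},v']$ rather than $v'$; this is legitimate precisely because Definition \ref{preserve_split_def} quantifies over all $v' \in \X(M)$. Beyond that, the argument is a routine chain of two Jacobi identities together with the anchor being a morphism, so I anticipate no serious obstacle. If one preferred a concrete route instead, formula \ref{tabrac_def} identifies $\Gamma(A)^{s}$ with $\bigl\{\, s(v_{f}) + f : f \in \cinf(M) \,\bigr\}$, and closure could be checked by direct computation together with the Jacobi identity for the Poisson bracket; the formal argument above simply avoids that bookkeeping.
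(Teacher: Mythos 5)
Your proof is correct and follows essentially the same route as the paper, which simply observes that the claim follows from the Jacobi identities for the bracket on $\Gamma(A)$ and for vector fields (together with the anchor being a Lie algebra morphism). You have merely written out the details that the paper leaves implicit, including the correct use of the preservation condition for $a_{1}$ with argument $[v_{2},v']$.
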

\begin{proof}
Follows directly from the fact that the bracket on $\Gamma(A)$ and the
bracket on $\X(M)$ both satisfy the Jacobi identity.
\end{proof}
It is easy to show that a section $s(v)+f$ preserves the splitting if
and only if $v=v_{f}$. In fact: 
\begin{prop} \label{lie_alg_iso}
The underlying Lie algebra of the Poisson algebra
$\bigl(\cinf(M),\brac{\cdot}{\cdot} \bigr)$ is isomorphic to the Lie algebra
$\bigl(\Gamma(A)^{s},\tabrac{\cdot}{\cdot} \bigr)$.
\end{prop}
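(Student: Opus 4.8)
The plan is to construct an explicit $\R$-linear map $\Phi \maps \cinf(M) \to \Gamma(A)^{s}$ and show that it is a Lie algebra isomorphism. Given the characterization of $\Gamma(A)^{s}$ asserted just before the statement, the natural candidate is
\[
\Phi(f) = s(v_{f}) + f,
\]
where $v_{f}$ is the Hamiltonian vector field of $f$. So the first task is to justify that characterization, namely that a section $s(v) + f$ preserves the splitting if and only if $v = v_{f}$. I would obtain this by specializing the bracket formula \ref{tabrac_def} to the sections $s(v) + f$ and $s(v')$, which gives
\[
\tabrac{s(v) + f}{s(v')} = s([v,v']) - v'(f) - \omega(v,v').
\]
Comparing with Definition \ref{preserve_split_def}, the section preserves the splitting exactly when $v'(f) + \omega(v,v') = 0$ for all $v' \in \X(M)$, i.e.\ when $df + \iota_{v}\omega = 0$. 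Since $\omega$ is nondegenerate this is equivalent to $v = v_{f}$.

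With this in hand, $\Gamma(A)^{s}$ consists precisely of the sections $s(v_{f}) + f$, so $\Phi$ is well defined and surjective. It is $\R$-linear because $f \mapsto v_{f}$ is linear (again by nondegeneracy of $\omega$) and $s$ is linear, and it is injective since the $\R$-summand of $\Phi(f)$ recovers $f$. Hence $\Phi$ is a linear bijection, and it remains only to verify that it intertwines the Poisson bracket with $\tabrac{\cdot}{\cdot}$.

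For the homomorphism property I would again apply \ref{tabrac_def}, now to $s(v_{f}) + f$ and $s(v_{g}) + g$:
\[
\tabrac{s(v_{f}) + f}{s(v_{g}) + g} = s([v_{f},v_{g}]) + v_{f}(g) - v_{g}(f) - \omega(v_{f},v_{g}).
\]
At this point I would invoke two standard facts about the Hamiltonian correspondence under the conventions $\brac{f}{g} = \omega(v_{f},v_{g})$ and $df = -\ip{f}\omega$: first, that $[v_{f},v_{g}] = v_{\brac{f}{g}}$, which follows from $\L_{v_{f}}\omega = 0$ together with the closedness of $\omega$; and second, that $v_{f}(g) = \brac{f}{g} = -v_{g}(f)$. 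Substituting, the two derivative terms contribute $2\brac{f}{g}$ while the last term contributes $-\omega(v_{f},v_{g}) = -\brac{f}{g}$, so the right-hand side collapses to $s(v_{\brac{f}{g}}) + \brac{f}{g} = \Phi(\brac{f}{g})$, as desired.

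The one genuinely delicate point is this sign and coefficient bookkeeping: the derivative terms and the curvature term $\omega(v_{f},v_{g})$ each separately reproduce a copy of $\brac{f}{g}$, and it is precisely their combination $2\brac{f}{g} - \brac{f}{g} = \brac{f}{g}$ that makes the $\R$-component of the bracket agree with $\Phi(\brac{f}{g})$. I would therefore take care to check the relative signs in $v_{f}(g) = -v_{g}(f)$ and in the contraction term of \ref{tabrac_def} against the chosen conventions before concluding. Everything else in the argument is routine.
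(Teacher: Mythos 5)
Your proposal is correct and follows essentially the same route as the paper: it characterizes $\Gamma(A)^{s}$ via Eq.\ \ref{tabrac_def} as the sections $s(v_{f})+f$, defines the same bijection $f \mapsto s(v_{f})+f$, and verifies the bracket by the identical computation in which $v_{f}(g)-v_{g}(f)=2\brac{f}{g}$ combines with $-\omega(v_{f},v_{g})=-\brac{f}{g}$ to give $\brac{f}{g}$. The sign bookkeeping you flag is handled correctly under the paper's conventions $df=-\ip{f}\omega$ and $\brac{f}{g}=\omega(v_{f},v_{g})$.
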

\begin{proof}
For any vector field $v' \in \X(M)$, it follows from Eq.\
\ref{tabrac_def} that we have $\tabrac{s(v)+f}{s(v')}=s([v,v'])$ if
and only if
\[
v'(f)+\omega(v,v')=0,
\]
and hence $df=-\iota_{v}\omega$.
Therefore the injective map
\[
\phi \maps \cinf(M) \to \Gamma(A)^{s}, \quad \phi(f)=s(v_{f})
+ f
\]
is also surjective. If  $v_{f}$ and $v_{g}$ are Hamiltonian vector fields corresponding to the functions
$f$ and $g$, respectively then
\begin{align*}
\left [ \phi(f), \phi(g) \right ]_{A} &=\left [s(v_{f}) + f, s(v_{g}) + g
\right ]_{A} \\
&=  s([v_{f},v_{g}]) + \bigl (v_{f}(g) - v_{g}(f) \bigr) 
- \ip{g}\ip{f}\omega \\
&=s([v_{f},v_{g}]) + \omega(v_{f},v_{g})\\
&= \phi(\brac{f}{g}).
\end{align*}
\end{proof}

\subsection{Deligne cohomology} \label{Deligne_sec}
We now briefly review some basic facts concerning smooth Deligne
cohomology. We will mainly use this as a convenient language for dealing
with geometric objects, such as principal $U(1)$-bundles or
$U(1)$-gerbes, equipped with extra structure. Our presentation follows
Sec.\ 3 of Carey, Johnson, and Murray \cite{Carey:2004}. For more details, we refer
the reader to the book by Brylinski \cite{Brylinski:1993}.

Let $\sh{U(1)}$ and $\Omega^{k}$ denote the sheaves of smooth $U(1)$-valued functions
and differential $k$-forms,
respectively, on a manifold $M$. Consider the exact sequence of sheaves $D_{p}^{\bullet}$:
\[
\sh{U(1)} \stackrel{d \log }{\to} \Omega^{1} \stackrel{d}{\to}
\cdots \stackrel{d}{\to} \Omega^{p}, \quad p \geq 1.
\]
Define the Deligne cohomology $H^{\bullet}(M,D_{p}^{\bullet})$ to be the \v{C}ech
hyper-cohomology of $D_{p}^{\bullet}$. This is the total cohomology
of the double complex:
\[
\xymatrix{
\vdots & \vdots& \vdots && \vdots\\
\sh{U(1)}(\U^{[2]}) \ar[u]^{\delta} \ar[r]^{d\log}& \Omega^{1}(\U^{[2]})
\ar[u]^{\delta} \ar[r]^{d} &
\Omega^{2}(\U^{[2]}) \ar[u]^{\delta} \ar[r]^{d}&\cdots
\ar[r]^{d} & \ar[u]^{\delta} \Omega^{p}(\U^{[2]}) \\
\sh{U(1)}(\U^{[1]})\ar[u]^{\delta} \ar[r]^{d\log}& \Omega^{1}(\U^{[1]})
\ar[u]^{\delta} \ar[r]^{d} &
\Omega^{2}(\U^{[1]}) \ar[u]^{\delta} \ar[r]^{d}&\cdots
\ar[r]^{d} & \ar[u]^{\delta} \Omega^{p}(\U^{[1]}) \\
\sh{U(1)}(\U^{[0]}) \ar[u]^{\delta} \ar[r]^{d \log}& \Omega^{1}(\U^{[0]})
\ar[u]^{\delta} \ar[r]^{d} &
\Omega^{2}(\U^{[0]}) \ar[u]^{\delta} \ar[r]^{d}&\cdots
\ar[r]^{d} & \ar[u]^{\delta}\Omega^{p}(\U^{[0]}) \\
}
\]
where $\U=\{U_{i}\}$ is a good cover of $M$, $\delta$ is the usual
\v{C}ech co-boundary operator, and $\sh{U(1)}(\U^{[n]})$ and 
$\Omega^{k} (\U^{[n]})$ denote the abelian groups
\[
\sh{U(1)}(\U^{[n]})=\prod_{i_{0} \neq i_{1} \neq \cdots \neq i_{n}}
\sh{U(1)} \left (U_{i_{0}} \cap U_{i_{2}} \cdots \cap U_{i_{n}} \right),
\]
\[
\Omega^{k} (\U^{[n]})=\prod_{i_{0} \neq i_{1} \neq \cdots \neq i_{n}}
\Omega^{k} \left (U_{i_{0}} \cap U_{i_{2}} \cdots \cap U_{i_{n}} \right).
\]
We will focus on the groups $H^{p}(M,D_{p}^{\bullet})$.
They can be thought of as a refinement of the usual
\v{C}ech cohomology groups $H^{\bullet}(M,\sh{U(1)})$. In particular, there
is a surjection
\[
H^{p}(M,D_{p}^{\bullet}) \epi  H^{p}(M,\sh{U(1)}). 
\]
Hence, via the usual isomorphism $H^{p}(M,\sh{U(1)}) \cong
H^{p+1}(M,\Z)$, we have a surjection
\begin{equation} \label{chern}
c \maps H^{p}(M,D_{p}^{\bullet})\epi  H^{p+1}(M,\Z). 
\end{equation}
If $[\xi] \in H^{p}(M,D_{p}^{\bullet})$, then $c([\xi])$ is called
the \textbf{Chern class} of $[\xi]$. 

There is also a map of complexes
\[
\xymatrix{
\sh{U(1)} \ar[d] \ar[r]^{d \log}& \Omega^{1}
\ar[d] \ar[r]^{d} &
\Omega^{2} \ar[d] \ar[r]^{d}&\cdots
\ar[r]^{d} & \ar[d]^{d}\Omega^{p} \\
0 \ar[r] & 0 \ar[r]&  0 \ar[r] & \cdots \ar[r] & \Omega^{p+1}
}
\]
where $d$ is the de Rham differential. This induces a map 
\begin{equation}
\kappa \maps H^{p}(M,D_{p}^{\bullet}) \to  \cOmega^{p+1}(M),
\end{equation}
where $\cOmega^{p+1}(M)$ are the closed $(p+1)$-forms on $M$. 
If $[\xi] \in H^{p}(M,D_{p}^{\bullet})$, then $\kappa([\xi])$
is called the {\bf $(\mathbf{p+1})$-curvature} of $[\xi]$.
If $ \jmath \maps H^{k}(M,\Z) \to H^{k}(M,\R)$ is the map induced from
the inclusion of the constant sheaves $\Z \hookrightarrow \R$, then one
can prove that $\jmath(c([\xi])) \in H^{p+1}(M,\R)$ corresponds to the class 
$(-1)^{p-1}[\kappa([\xi])] \in H^{p+1}_{\mathrm{DR}}(M)$ via the
isomorphism between \v{C}ech and de Rham cohomology. 

\subsection{Prequantization} \label{symp_prequant}
A symplectic manifold $(M,\omega)$  admits a \textbf{prequantization}
iff the cohomology class $[\omega]$ lies in the image of the map $H^{2}(M,\Z)
\to H^{2}(M,\R)\cong H_{\mathrm{DR}}^{2}(M)$. By virtue of Eq.\ \ref{chern}, there exists a
Deligne class in $H^{1}(M,D_{1}^{\bullet})$ whose 2-curvature is
$\omega$. By definition, a representative of this class defined on a
good cover $\{U_{i}\}$ is a collection of 1-forms $\{ \theta_{i} \in
\Omega^{1}(U_{i}) \}$, and $U(1)$-valued functions $\{g_{ij} \maps
U_{ij} \to U(1)\}$ such that
\begin{align*}
\omega&=d\theta_{i} \quad \text{on $U_{i}$},\\
\theta_{j} -\theta_{i} &= g^{-1}_{ij}dg_{ij} \quad \text{on $U_{ij}$},\\
g_{jk}g^{-1}_{ik}g_{ij}&=1 \quad \text{on $U_{ijk}$}.
\end{align*}
Hence a 1-cocycle is a principal $U(1)$-bundle $P \stackrel{\pi}{\to}
M$ equipped with a connection $\theta \in \Omega^{1}(P)$ with
curvature $d\theta = \pi^{\ast}\omega$. A symplectic manifold equipped
with such a 1-cocycle is said to be \textbf{prequantized}.

The Deligne 1-cocycle also gives, of course, a trivialization of the
2-form $\omega$, and therefore the transitive Lie algebroid
$(A,\tabrac{\cdot}{\cdot},\rho)$ over $M$ equipped with the
splitting $s\maps TM \to A$. However in this case, the
functions $\{g_{ij} \maps U_{ij} \to U(1)\}$ are the transition
functions of the bundle $P$. Therefore, by identifying $\u(1)$ with
$\R$, we see that $A$ is isomorphic to the \textbf{Atiyah
  algebroid} $TP/U(1)$. A point in $A$ corresponds to a
vector field along the fiber $\pi^{-1}(x)$ that is invariant under the
right $U(1)$ action. Hence a global section of $A$
corresponds to a $U(1)$-invariant vector field on $P$.

In general, splittings of $0 \to M \times \R \to A \to TM \to 0$
correspond to connection 1-forms on $P$. 
The connection 1-form $\theta \in \Omega^{1}(P)$ induces a `left-splitting'
$\hat{\theta} \maps A \to M \times \R$ such that $\hat{\theta} \circ s =0$.
It is straightforward to show that $a \in \Gamma(A)^{s}$ if
and only if 
\[
\L_{a} \theta =0.
\]
That is, a section of the Atiyah algebroid preserves the splitting if
and only if it preserves the corresponding connection on $P$. For a
prequantized symplectic manifold, the Lie algebra
$\Gamma(A)^{s}$ is a Lie sub-algebra of derivations on
$\cinf(P)_{\C}$ and therefore on the global sections of the associated
line bundle of $P$. Proposition \ref{lie_alg_iso} then implies that we
have a faithful representation, or quantization, of the Poisson
algebra $\left(\cinf(M), \brac{\cdot}{\cdot} \right)$.

\subsection{The Kostant-Souriau central extension} \label{symp_extend}
If $(M,\omega)$ is a connected symplectic manifold, then we
have a short exact sequence of Lie algebras
\begin{equation} \label{KS1}
0 \to \u(1) \to \cinf(M) \to \Xham \to 0
\end{equation}
The underlying Lie algebra of the Poisson algebra is known as the
Kostant-Souriau central extension of the Lie algebra of Hamiltonian
vector fields \cite{Kostant:1970}. If $\sigma \maps \Xham \to \cinf(M)$ is
a splitting of the underlying sequence of vector spaces, then the
failure of $\sigma$ to be a strict (i.e.\ bracket-preserving) Lie algebra
morphism is measured by the difference
\[
\{\sigma(v_{1}),\sigma(v_{2})\} - \sigma([v_{1},v_{2}])
\]
which represents a degree 2 class in the
Chevalley-Eilenberg cohomology $H^{2}_{\mathrm{CE}}(\Xham,\R)$. This class
can be represented by using the symplectic form. More specifically, pick a point
$x\in M$ and let $c \in \Hom(\Lambda^{2}\Xham,\R)$ be the cochain
given by:
\[
c(v,v')= - \omega(v,v') \vert_{x}, \quad \forall v,v' \in \Xham.
\] 
The fact that $c$ is a cocycle follows from the bracket
$\brac{\cdot}{\cdot}$ satisfying the Jacobi identity.
One can show that the class $[c]$ does not depend on the choice of $x\in
M$.

If $(M,\omega)$ is a prequantized connected symplectic manifold, then
Prop.\ \ref{lie_alg_iso} implies that the `quantized Poisson algebra'
gives an isomorphic central extension
\[
0 \to \u(1) \to \Gamma(A)^{s} \to \Xham \to 0.
\]
This central extension is
responsible for introducing phases into the quantized system. Two
functions $f$ and $f'$ differing by a constant $r \in \u(1)$ will have
the same Hamiltonian vector fields and therefore give the same flows
on $M$. However, their quantizations will give unitary transformations
which differ by a phase $\exp (2\pi \i r )$.

\section{$2$-plectic geometry} \label{2plectic_sec} In this section we
give an overview of $2$-plectic geometry.  Motivation for the
definitions presented here, as well as examples and applications can
be found in previous work \cite{Baez:2008bu,Baez:2009uu}. All of the
following definitions and propositions generalize to arbitrary
$n$-plectic manifolds, so we refer the reader to our recent work
\cite{Rogers:2010nw} for proofs and more details.

\begin{definition}
\label{2-plectic_def}
A $3$-form $\omega$ on a smooth manifold $M$ is 
{\bf 2-plectic}, or more specifically
a {\bf 2-plectic structure}, if it is both closed:
\[
    d\omega=0,
\]
and nondegenerate:
\[
    \forall v \in T_{x}M,\ \iota_{v} \omega =0 \Rightarrow v =0
\]
If $\omega$ is a $2$-plectic form on $M$ we call the pair $(M,\omega)$ 
a {\bf 2-plectic manifold}.
\end{definition}

The $2$-plectic structure induces an injective map from the
space of vector fields on $M$ to the space of 2-forms on $M$. This leads
us to the following definition:

\begin{definition} \label{hamiltonian}
Let $(M,\omega)$ be a $2$-plectic manifold.  A 1-form $\alpha$ on $M$ 
is {\bf Hamiltonian} if there exists a vector field $v_\alpha$ on $M$ such that
\[
d\alpha= -\ip{\alpha} \omega.
\]
We say $v_\alpha$ is the {\bf Hamiltonian vector field} corresponding to $\alpha$. 
The set of Hamiltonian 1-forms and the set of Hamiltonian vector
fields on a $2$-plectic manifold are both vector spaces and are denoted
as $\ham$ and $\Xham$, respectively.
\end{definition}

The Hamiltonian vector field $v_\alpha$ is unique if
it exists, but there may
be 1-forms $\alpha$ having no Hamiltonian vector field.  
Furthermore, two distinct Hamiltonian 1-forms may differ by a closed
1-form and therefore share the same Hamiltonian vector field.

We can generalize the Poisson bracket on functions in symplectic geometry by  
defining a bracket on Hamiltonian 1-forms. 
\begin{definition}
\label{semi-bracket.defn}
Given $\alpha,\beta\in \ham$, the {\bf bracket} $\brac{\alpha}{\beta}$
is the 
\break
1-form given by 
\[  \brac{\alpha}{\beta} = \ip{\beta}\ip{\alpha}\omega .\]
\end{definition}

\begin{prop}\label{semi-bracket} Let $\alpha,\beta,\gamma \in \ham$ and let
$v_\alpha,v_\beta,v_\gamma$ be the respective Hamiltonian
vector fields.  The bracket $\brac{\cdot}{\cdot}$ has the following properties:
  \begin{enumerate}
\item The bracket of Hamiltonian forms is Hamiltonian:
  \begin{eqnarray} \label{semi-bracket.closed}
    d\brac{\alpha}{\beta} = -\iota_{[v_\alpha,v_\beta]} \omega, 
  \end{eqnarray}
so in particular we have 
\[     v_{\brac{\alpha}{\beta}} = [v_\alpha,v_\beta]  .\]
\item The bracket is skew-symmetric: 
  \begin{eqnarray}
    \brac{\alpha}{\beta} = -\brac{\beta}{\alpha}
  \end{eqnarray}

\item The bracket satisfies the Jacobi identity up to an exact 1-form:
\begin{eqnarray}
    \brac{\alpha}{\brac{\beta}{\gamma}} -
    \brac{\brac{\alpha}{\beta}}{\gamma} 
    -\brac{\beta}{\brac{\alpha}{\gamma}}=d \ip{\alpha}\ip{\beta}\ip{\gamma}\omega.
  \end{eqnarray}
\end{enumerate}
\end{prop}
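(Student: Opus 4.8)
The plan is to derive all three properties from Cartan calculus, with the single recurring input that every Hamiltonian vector field preserves $\omega$. First I would record this preliminary fact: if $\alpha \in \ham$ then $\lie{\alpha}{\omega} = d\ip{\alpha}\omega + \ip{\alpha}d\omega = -dd\alpha + 0 = 0$, using the defining relation $d\alpha = -\ip{\alpha}\omega$ together with $d\omega = 0$. Throughout I would freely use the graded commutation relation $\L_{X}\iota_{Y} - \iota_{Y}\L_{X} = \iota_{[X,Y]}$, the anticommutativity $\iota_{X}\iota_{Y} = -\iota_{Y}\iota_{X}$, and Cartan's formula $\L_{X} = d\iota_{X} + \iota_{X}d$.

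Property (2) is immediate: since $\brac{\alpha}{\beta} = \ip{\beta}\ip{\alpha}\omega$ and contractions anticommute, $\ip{\beta}\ip{\alpha}\omega = -\ip{\alpha}\ip{\beta}\omega = -\brac{\beta}{\alpha}$. For property (1) I would compute $d\brac{\alpha}{\beta} = d\ip{\beta}\ip{\alpha}\omega$ by applying Cartan's formula to $v_{\beta}$, obtaining $\lie{\beta}{(\ip{\alpha}\omega)} - \ip{\beta}d\ip{\alpha}\omega$. The second term vanishes because $d\ip{\alpha}\omega = \lie{\alpha}{\omega} - \ip{\alpha}d\omega = 0$, while the first term equals $\ip{\alpha}\lie{\beta}{\omega} + \iota_{[v_{\beta},v_{\alpha}]}\omega = -\iota_{[v_{\alpha},v_{\beta}]}\omega$ by the commutation relation and the preliminary fact. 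This yields $d\brac{\alpha}{\beta} = -\iota_{[v_{\alpha},v_{\beta}]}\omega$; in particular $\brac{\alpha}{\beta}$ is Hamiltonian, and uniqueness of Hamiltonian vector fields gives $v_{\brac{\alpha}{\beta}} = [v_{\alpha},v_{\beta}]$.

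Property (3) is the main obstacle, though the heavy lifting is organized by part (1). Using $v_{\brac{\beta}{\gamma}} = [v_{\beta},v_{\gamma}]$ and its analogues, I would rewrite each nested bracket directly as a contraction: $\brac{\alpha}{\brac{\beta}{\gamma}} = \iota_{[v_{\beta},v_{\gamma}]}\ip{\alpha}\omega$, $\brac{\brac{\alpha}{\beta}}{\gamma} = \ip{\gamma}\iota_{[v_{\alpha},v_{\beta}]}\omega$, and $\brac{\beta}{\brac{\alpha}{\gamma}} = \iota_{[v_{\alpha},v_{\gamma}]}\ip{\beta}\omega$. After reordering contractions by anticommutativity, the left-hand Jacobiator becomes $-\ip{\alpha}\iota_{[v_{\beta},v_{\gamma}]}\omega + \iota_{[v_{\alpha},v_{\beta}]}\ip{\gamma}\omega + \ip{\beta}\iota_{[v_{\alpha},v_{\gamma}]}\omega$. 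Separately I would expand the right-hand side $d\ip{\alpha}\ip{\beta}\ip{\gamma}\omega$ by peeling off one contraction at a time with Cartan's formula: writing it as $\lie{\alpha}{(\ip{\beta}\ip{\gamma}\omega)} - \ip{\alpha}d\ip{\beta}\ip{\gamma}\omega$, then using $d\ip{\gamma}\omega = 0$ to get $d\ip{\beta}\ip{\gamma}\omega = \iota_{[v_{\beta},v_{\gamma}]}\omega$, and repeatedly applying the commutation relation together with $\lie{\alpha}{\omega} = \lie{\beta}{\omega} = \lie{\gamma}{\omega} = 0$. This produces exactly the same three terms, establishing the identity. The only delicate points are the bookkeeping of signs coming from the anticommuting interior products and the order in which the Lie derivative is pushed past successive contractions; no genuinely new idea beyond Cartan calculus and the closedness of $\omega$ is required, and indeed it is precisely $d\omega = 0$ that prevents extra terms from surviving.
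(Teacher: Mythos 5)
Your proof is correct: the preliminary fact $\L_{v_\alpha}\omega=0$, the Cartan-formula computation for part (1), and the term-by-term expansion of $d\ip{\alpha}\ip{\beta}\ip{\gamma}\omega$ for part (3) all check out, and this is exactly the standard Cartan-calculus argument that the paper outsources to Propositions 3.5 and 3.6 of the cited reference \cite{Rogers:2010nw}. No gaps; the only thing you leave implicit is the uniqueness of Hamiltonian vector fields (nondegeneracy of $\omega$), which you correctly invoke to conclude $v_{\brac{\alpha}{\beta}}=[v_\alpha,v_\beta]$.
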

\begin{proof}
See Propositions 3.5 and 3.6 in \cite{Rogers:2010nw}.
\end{proof}
Note that Eq.\ \ref{semi-bracket.closed} in the above proposition
implies that $\Xham$ is a Lie algebra.

\section{Courant algebroids} \label{courant_sec}
Here we recall some basic facts and examples of Courant
algebroids and then we proceed to describe \v{S}evera's classification of exact Courant
algebroids. There are several equivalent definitions of a Courant
algebroid found in the literature. The following
definition, due to Roytenberg \cite{Roytenberg_thesis}, is equivalent
to the original definition given by Liu, Weinstein, and Xu \cite{Liu:1997}.

\begin{definition} \label{courant_algebroid} A {\bf Courant algebroid}
  is a vector bundle $C \to M$ equipped with a nondegenerate symmetric
  bilinear form $\innerprod{\cdot}{\cdot}$ on the bundle, a
  skew-symmetric bracket $\cbrac{\cdot}{\cdot}$ on $\Gamma(C)$, and a
  bundle map (called the {\bf anchor}) $\rho \maps C \to TM$ such that
  for all $ e_{1},e_{2},e_{3}\in \Gamma (C)$ and for all $f,g \in
  \cinf(M)$ the following properties hold:
\begin{enumerate}
\item{$\cbrac{e_1}{\cbrac{e_2}{e_3}} - \cbrac{\cbrac{e_1}{e_2}}{e_3} 
-\cbrac{e_2}{\cbrac{e_1}{e_3}}=-DT(e_{1},e_{2},e_{3}),$  
}

\item{$\rho(\cbrac{e_{1}}{e_{2}})=[\rho(e_{1}),\rho(e_{2})]$, \label{axiom2}
} 

\item{$\cbrac{e_{1}}{fe_{2}}=f\cbrac{e_{1}}{e_{2}}+\rho (e_{1})(f)e_{2}-\half
\langle e_{1},e_{2}\rangle {D}f$,
}

\item{$\langle {D}f,{D}g\rangle =0$,}

\item{$\rho(e_{1})\left(\langle
e_{2},e_{3}\rangle\right) =\langle \cbrac{e_{1}}{e_{2}}+\half {D}\langle
e_{1},e_{2}\rangle ,e_{3}\rangle +\langle e_{2},\cbrac{e_{1}}{e_{3}}+\half
{D}\langle e_{1},e_{3}\rangle \rangle$,}
\end{enumerate}
where $[\cdot,\cdot]$ is the Lie bracket of vector fields,
$D \maps \cinf(M) \to \Gamma(C)$ is the map defined by $\innerprod{D f}{e}=\rho(e)f$, and 
\[
T(e_{1},e_{2},e_{3})= \frac{1}{6}
\left(\innerprod{\cbrac{e_1}{e_2}}{e_3} +
\innerprod{\cbrac{e_3}{e_1}}{e_2} + \innerprod{\cbrac{e_2}{e_3}}{e_1} \right). 
\]
\end{definition}
The bracket in Definition \ref{courant_algebroid} is skew-symmetric,
but the first property implies that it needs only to satisfy the
Jacobi identity ``up to $D T$''. Note that the vector bundle $C \to M$ may
be identified with $C^{\ast} \to M$ via the bilinear form
$\innerprod{\cdot}{\cdot}$ and therefore we have the dual map
\[ \rho^{\ast} \maps T^{\ast}M \to C.\] 
Hence the map $D$ is simply the pullback of the de Rham differential by $\rho^{\ast}$. 

There is an alternate definition given by \v{S}evera \cite{Severa1}
for Courant algebroids which uses a bracket operation on sections
that satisfies a Jacobi identity but is not skew-symmetric.
\begin{definition}\label{alt_courant_algebroid}
  A {\bf Courant algebroid} is a vector bundle $ C\rightarrow M $
  together with a nondegenerate symmetric bilinear form $ \innerprod
  {\cdot }{\cdot } $ on the bundle, a bilinear operation $\dbrac{\cdot}{\cdot}$ on
  $ \Gamma (C) $, and a bundle map $ \rho \maps C\rightarrow TM $ such
  that for all $ e_{1},e_{2},e_{3}\in \Gamma (C)$ and for all $f \in
  \cinf(M)$ the following properties hold:
\begin{enumerate}
\item{$
    \dbrac{e_{1}}{\dbrac{e_{2}}{e_{3}}}=\dbrac{\dbrac{e_{1}}{e_{2}}}{e_{3}}+\dbrac{e_{2}}{\dbrac{e_{1}}{e_{3}}}$,}
\item {$\rho (\dbrac{e_{1}}{e_{2}})=[\rho (e_{1}),\rho (e_{2})]$,} 
\item {$ \dbrac{e_{1}}{fe_{2}}=f\dbrac{e_{1}}{e_{2}}+\rho (e_{1})(f)e_{2}$,}
\item {$ \dbrac{e_{1}}{e_{1}}=\half D \innerprod {e_{1}}{e_{1}}$,}
\item {$ \rho (e_{1})\left(\innerprod {e_{2}}{e_{3}}\right)=\innerprod
    {\dbrac{e_{1}}{e_{2}}}{e_{3}}+\innerprod {e_{2}}{\dbrac{e_{1}}{e_{3}}}$,}
\end{enumerate}
where $[\cdot,\cdot]$ is the Lie bracket of vector fields, and
$D \maps \cinf(M) \to \Gamma(C)$ is the map defined by $\innerprod{D f}{e}=\rho(e)f$.
\end{definition}
Roytenberg \cite{Roytenberg_thesis} showed that $C \to M$ is a Courant
algebroid in the sense of Definition \ref{courant_algebroid} with
bracket $\cbrac{\cdot}{\cdot}$, bilinear form
$\innerprod{\cdot}{\cdot}$ and anchor $\rho$ if and only if $C \to M$
is a Courant algebroid in the sense of Definition
\ref{alt_courant_algebroid} with the same anchor and bilinear form but
with bracket $\dbrac{\cdot}{\cdot}$ given by
\begin{equation}
 \dbrac{e_{1}}{e_{2}} = \cbrac{e_{1}}{e_{2}} + \half D
 \innerprod{e_{1}}{e_{2}}. \label{dorfman}
\end{equation}
All Courant algebroids mentioned in this paper are Courant algebroids in
the sense of Definition \ref{courant_algebroid}. We introduced
Definition \ref{alt_courant_algebroid} mainly to connect our
discussion here with previous results in the literature.

\begin{example} \label{standard}
An important example of a Courant algebroid is the \textbf{standard
  Courant algebroid} $C=TM \oplus T^{\ast}M$ over any manifold
  $M$ equipped with the \textbf{standard Courant
  bracket}:
\begin{equation} 
\cbrac{v_{1} + \alpha_{1}}{v_{2} + \alpha_{2}}=
[v_{1},v_{2}] +\L_{v_{1}}\alpha_{2}- \L_{v_{2}}\alpha_{1} -
\half d\innerprodm{v_{1} + \alpha_{1}}{v_{2} + \alpha_{2}}, \label{standard_bracket}
 \end{equation}
where
\begin{equation} \label{skew}
\innerprodm{v_{1} + \alpha_{1}}{v_{2} + \alpha_{2}} = \ip{1}\alpha_{2}
- \ip{2}\alpha_{1}
\end{equation}
is the \textbf{standard skew-symmetric pairing}.
The bilinear form is given by the \textbf{standard symmetric pairing}:
\begin{equation}
\innerprodp{v_{1} + \alpha_{1}}{v_{2} + \alpha_{2}}=
\ip{1}\alpha_{2} + \ip{2}\alpha_{1}. \label{standard_innerprod}
\end{equation}
The anchor $\rho \maps C \to TM$ is the projection
map, and $D=d$ is the de Rham differential. The bracket
$\cbrac{\cdot}{\cdot}$ is the skew-symmetrization of the \textbf{standard
  Dorfman bracket} \cite{Dorfman1,Dorfman2}:
\begin{equation} \label{std_dorf}
\dbrac{v_{1} + \alpha_{1}}{v_{2} + \alpha_{2}}=
[v_{1},v_{2}] + \L_{v_{1}}\alpha_{2} - \ip{2}d\alpha_{1},
\end{equation}
which plays the role of the bracket given in
Definition \ref{alt_courant_algebroid}. 
\end{example}

The standard Courant algebroid is the prototypical example of
an \textbf{exact Courant algebroid} \cite{Bressler-Chervov}.
\begin{definition} \label{exact} A Courant algebroid $C \to M$ with
  anchor map $\rho \maps C \to TM$ is {\bf exact} iff
\[
0 \to T^{\ast}M \stackrel{\rho^{\ast}}{\to} C \stackrel{\rho}{\to} TM
\to 0\]
is an exact sequence of vector bundles.
\end{definition}

\subsection{The \v{S}evera class of an exact Courant algebroid} \label{class}
\v{S}evera's classification \cite{Severa1} originates in the idea that a particular
kind of splitting of the above short exact sequence corresponds to defining a
connection.
\begin{definition}\label{connection}
A {\bf splitting} of an exact Courant algebroid $C$ over a manifold $M$ is a map of
vector bundles $s \maps TM \to C $ such that
\begin{enumerate}
\item{ $\rho \circ s = \id_{TM}$,}
\item{$\innerprod{s(v_{1})}{s(v_{2})}=0$ for all $v_{1},v_{2} \in TM$,}
\end{enumerate}
where $\rho \maps C \to TM$ and $\innerprod{\cdot}{\cdot}$ are the
anchor and bilinear form, respectively.
\end{definition}
In other words, a splitting of an exact Courant algebroid is an isotropic splitting of
the sequence of vector bundles. Bressler and Chervov call splittings `connections'
\cite{Bressler-Chervov}.
If $s$ is a splitting and $B \in
\Omega^2(M)$ is a 2-form then one can construct a new splitting:
\begin{equation}
\left(s+B\right)(v)=s(v)+ \rho^{\ast}B(v,\cdot). \label{2-form_action}
\end{equation}
Furthermore, one can show that any two splittings on an exact Courant
algebroid must differ by a 2-form on
$M$ in this way. Hence the space of splittings on an exact Courant algebroid
is an affine space modeled on the vector space of 2-forms
$\Omega^{2}(M)$ \cite{Bressler-Chervov}.

The failure of a splitting to preserve the bracket gives a suitable
notion of curvature:
\begin{definition}[\cite{Bressler-Chervov}]
  If $C$ is an exact Courant algebroid over $M$ with bracket
  $\cbrac{\cdot}{\cdot}$ and $s \maps TM \to C$ is a splitting then
  the {\bf curvature} is a map $F \maps TM \times TM \to C$ defined
  by
\[ F(v_{1},v_{2}) = \cbrac{s(v_{1})}{s(v_{2})} - s\left( \left[
    v_{1},v_{2} \right] \right).
\]
\end{definition}
If $F$ is the curvature of a splitting $s$, then given $v_{1},v_{2}
\in TM$, it follows from exactness and axiom \ref{axiom2} in Definition
\ref{courant_algebroid} that there exists a 1-form
$\alpha_{v_{1},v_{2}} \in \Omega^1(M)$ such that
$F(v_{1},v_{2})=\rho^{\ast}(\alpha_{v_{1},v_{2}})$. Since $s$ is a
splitting, its image is isotropic in $C$. Therefore for any $v_{3}
\in TM$ we have:
\[
\innerprod{F(v_{1},v_{2})}{s(v_{3})} = 
\innerprod{\cbrac{s\left(v_{1}\right)}{s\left(v_{2}\right)}}{s(v_{3})}.
\] 
The above formula allows one to associate the curvature $F$ to a 3-form on $M$:  
\begin{prop} \label{class_eq}
  Let $C$ be an exact Courant algebroid over a manifold $M$ with
  bracket $\cbrac{\cdot}{\cdot}$ and bilinear form
  $\innerprod{\cdot}{\cdot}$. Let $s \maps TM \to C$ be a splitting on $C$. Then
  given vector fields $v_{1}, v_{2}, v_{3}$ on $M$:
\begin{enumerate}
\item{ 
The function
\[
\omega(v_{1},v_{2},v_{3})= \innerprod{\cbrac{s\left(v_{1}\right)}{s\left(v_{2}\right)}}{s(v_{3})}
\]
defines a closed 3-form on $M$.}
\item{If $\theta \in \Omega^{2}(M)$ is a 2-form and
    $\tilde{s}=s+\theta$ then
\begin{align*}
\tilde{\omega}(v_{1},v_{2},v_{3}) &=
\innerprod{\left [
  \tilde{s}\left(v_{1}\right),\tilde{s}\left(v_{2}\right) \right ]_{C}}{\tilde{s}(v_{3})}\\
&=\omega(v_{1},v_{2},v_{3}) + d\theta(v_{1},v_{2},v_{3}).
\end{align*}
}
\end{enumerate}
\end{prop}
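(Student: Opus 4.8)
The plan is to establish part (1) in two stages --- first that $\omega$ is a genuine (tensorial, totally antisymmetric) $3$-form, then that it is closed --- and to establish part (2) by a direct computation of the effect of the shift $\tilde{s}=s+\theta$. For tensoriality I would check $C^{\infty}(M)$-linearity slot by slot. Linearity in the third slot is immediate, since $s$ is a bundle map and $\innerprod{\cdot}{\cdot}$ is $C^{\infty}(M)$-bilinear. For the second slot I would apply the Leibniz axiom (3) of Definition \ref{courant_algebroid}, $\cbrac{e_1}{fe_2}=f\cbrac{e_1}{e_2}+\rho(e_1)(f)e_2-\half\innerprod{e_1}{e_2}Df$, to $e_1=s(v_1)$, $e_2=s(v_2)$: the anomalous $Df$ term drops because $\innerprod{s(v_1)}{s(v_2)}=0$ by the isotropy condition of Definition \ref{connection}, and the $\rho(e_1)(f)e_2$ term pairs to zero against $s(v_3)$ since $\innerprod{s(v_2)}{s(v_3)}=0$; linearity in the first slot then follows using skew-symmetry of $\cbrac{\cdot}{\cdot}$. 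For total antisymmetry, skew-symmetry of the bracket gives $\omega(v_1,v_2,v_3)=-\omega(v_2,v_1,v_3)$, while specializing axiom (5) of Definition \ref{courant_algebroid} to the isotropic sections $s(v_i)$ --- so that every $\innerprod{s(v_i)}{s(v_j)}$, every $D\innerprod{s(v_i)}{s(v_j)}$, and $\rho(s(v_1))\innerprod{s(v_2)}{s(v_3)}$ vanish --- collapses to $0=\innerprod{\cbrac{s(v_1)}{s(v_2)}}{s(v_3)}+\innerprod{s(v_2)}{\cbrac{s(v_1)}{s(v_3)}}$, i.e.\ $\omega(v_1,v_2,v_3)=-\omega(v_1,v_3,v_2)$. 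These two transpositions generate $S_3$, so $\omega\in\Omega^3(M)$.

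The substantive step, and the one I expect to be the main obstacle, is closedness. I would expand $d\omega(v_0,v_1,v_2,v_3)$ by the global (Koszul) formula and pass to the Dorfman bracket $\dbrac{\cdot}{\cdot}$, which by Eq.\ \ref{dorfman} agrees with $\cbrac{\cdot}{\cdot}$ on the isotropic sections $s(v_i)$, so that $\omega(v_1,v_2,v_3)=\innerprod{\dbrac{s(v_1)}{s(v_2)}}{s(v_3)}$. The advantage is that the Dorfman bracket is an honest derivation in both senses needed: of the pairing (axiom (5) of Definition \ref{alt_courant_algebroid}) and of itself (the Jacobi identity, axiom (1) of Definition \ref{alt_courant_algebroid}). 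Using the former I would rewrite each directional-derivative term $v_i\,\omega(\cdots)$ as a sum of nested-bracket pairings, and using the latter I would reorganize those nested brackets to match the bracket-of-bracket terms $\omega([v_i,v_j],\cdots)$ produced by the Koszul formula; the discrepancy between $\dbrac{s(v_i)}{s(v_j)}$ and $s([v_i,v_j])$ lies in $\rho^{\ast}\Omega^1(M)$ and is tracked via $\innerprod{\rho^{\ast}\alpha}{s(v)}=\alpha(v)$ and $\innerprod{Df}{e}=\rho(e)f$. Everything should cancel identically; equivalently one may phrase this through the Jacobiator identity $\cbrac{\cbrac{s(v_1)}{s(v_2)}}{s(v_3)}+\mathrm{c.p.}=DT(s(v_1),s(v_2),s(v_3))$ paired with $s(v_4)$, which equals $v_4\big(T(s(v_1),s(v_2),s(v_3))\big)=\half v_4\,\omega(v_1,v_2,v_3)$ by the skew-symmetry already established. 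The difficulty here is purely the combinatorial bookkeeping of signs and the repeated use of isotropy to discard non-isotropic pairings.

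For part (2) I would shift by the $2$-form using Eq.\ \ref{2-form_action}, writing $\tilde{s}(v_i)=s(v_i)+\rho^{\ast}(\iota_{v_i}\theta)$, expand $\cbrac{\tilde{s}(v_1)}{\tilde{s}(v_2)}$ by bilinearity into four brackets, and pair each with $\tilde{s}(v_3)=s(v_3)+\rho^{\ast}(\iota_{v_3}\theta)$. Exactness, $\rho\circ\rho^{\ast}=0$, immediately kills every term with two factors of $\rho^{\ast}$, since $\innerprod{\rho^{\ast}\alpha}{\rho^{\ast}\beta}=\alpha(\rho\rho^{\ast}\beta)=0$ and $\rho\cbrac{\rho^{\ast}\alpha}{\,\cdot\,}=0$ by axiom (2) of Definition \ref{courant_algebroid}; moreover axiom (5) forces $\innerprod{\cbrac{\rho^{\ast}\alpha}{\rho^{\ast}\beta}}{s(v_3)}=0$. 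The purely-$s$ pairing reproduces $\omega(v_1,v_2,v_3)$, and the three surviving cross terms are evaluated using axiom (5) together with $\innerprod{\rho^{\ast}\alpha}{s(v)}=\alpha(v)$, $\rho(s(v))=v$, and $\rho\cbrac{s(v_1)}{s(v_3)}=[v_1,v_3]$; for instance $\innerprod{\cbrac{s(v_1)}{\rho^{\ast}(\iota_{v_2}\theta)}}{s(v_3)}=v_1\theta(v_2,v_3)+\half v_3\theta(v_1,v_2)-\theta(v_2,[v_1,v_3])$. Collecting the surviving terms and comparing against the Koszul expansion of $d\theta(v_1,v_2,v_3)$ then yields $\tilde{\omega}=\omega+d\theta$ exactly. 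This part is a finite, self-contained calculation with no real obstacle once the $\rho^{\ast}\rho^{\ast}$ terms are dispatched.
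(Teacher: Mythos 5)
Your proposal is correct, but it takes a genuinely different route from the paper: the paper's own proof of Proposition \ref{class_eq} is essentially a citation. It observes that Bressler and Chervov prove these statements for the curvature 3-form $\nu(v_1,v_2,v_3)=\innerprod{\dbrac{s(v_1)}{s(v_2)}}{s(v_3)}$ defined via the non-skew bracket of Definition \ref{alt_courant_algebroid}, and then supplies only the bridging observation that Eq.\ \ref{dorfman} together with isotropy of the splitting gives $\dbrac{s(v_1)}{s(v_2)}=\cbrac{s(v_1)}{s(v_2)}$, so $\nu=\omega$ and their lemmas apply verbatim. You independently rediscover exactly that bridge (passing to the Dorfman bracket to exploit its Jacobi identity and its derivation property for the pairing) but then carry out the whole argument from the axioms: tensoriality via the Leibniz axiom plus isotropy, total antisymmetry via skew-symmetry of $\cbrac{\cdot}{\cdot}$ and axiom (5), closedness via the Jacobiator identity paired against $s(v_4)$ (using $T(s(v_1),s(v_2),s(v_3))=\half\omega(v_1,v_2,v_3)$ and $\innerprod{Df}{e}=\rho(e)f$), and part (2) by direct expansion, with exactness killing the double-$\rho^{\ast}$ terms and the three cross terms assembling into the Koszul formula for $d\theta$. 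I checked your cross-term formula $\innerprod{\cbrac{s(v_1)}{\rho^{\ast}(\iota_{v_2}\theta)}}{s(v_3)}=v_1\theta(v_2,v_3)+\half v_3\theta(v_1,v_2)-\theta(v_2,[v_1,v_3])$ and the resulting cancellation against $d\theta(v_1,v_2,v_3)$; it is right. What your approach buys is a self-contained proof that makes visible exactly which axioms are responsible for which property (isotropy for tensoriality and antisymmetry, the Jacobiator for closedness); what the paper's approach buys is brevity and a clean reduction to the published Dorfman-bracket statements. The only part of your sketch left genuinely schematic is the sign bookkeeping in the closedness computation, which is the same bookkeeping the cited lemmas perform.
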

\begin{proof}
  The statements in the proposition are proved in Lemmas 4.2.6, 4.2.7,
  and 4.3.4 in the paper by Bressler and Chervov
  \cite{Bressler-Chervov}.  In their work they define a Courant
  algebroid using Definition \ref{alt_courant_algebroid}, and therefore
  their bracket satisfies the Jacobi identity, but is not
  skew-symmetric.  In our notation, their definition of the curvature
  3-form is:
\[
\nu(v_{1},v_{2},v_{3})= \innerprod{\dbrac{s\left(v_{1}\right)}{s\left(v_{2}\right)}}{s(v_{3})}.
\] 
In particular they show that $\dbrac{\cdot}{\cdot}$
satisfying the Jacobi identity implies $\nu$ is
closed. The bracket $\cbrac{\cdot}{\cdot}$ does not satisfy the Jacobi
identity in general. However the isotropicity of the splitting and
Eq.\ \ref{dorfman} imply
\[
\dbrac{s(v_1)}{s(v_2)}  = \cbrac{s(v_1)}{s(v_2)} \quad \forall
   v_1, v_2 \in TM.
\]
Hence $\nu=\omega$, so all the needed results in
\cite{Bressler-Chervov} apply here.
\end{proof}

Thus the above  proposition implies that the curvature 3-form of an exact
Courant algebroid over $M$ gives a well-defined cohomology class in
$H^{3}_{\mathrm{DR}}(M)$, independent of the choice of
splitting. 
\begin{definition}[\cite{Gualtieri:2007}] \label{Severa_class} The {\bf \v{S}evera class}
  of an exact Courant algebroid with bracket $\cbrac{\cdot}{\cdot}$ and bilinear
  form $\innerprod{\cdot}{\cdot}$ is the cohomology
  class $[-\omega] \in H^{3}_{\mathrm{DR}}(M)$, where
\[
\omega(v_{1},v_{2},v_{3})=
\innerprod{\cbrac{s\left(v_{1}\right)}{s\left(v_{2}\right)}}{s(v_{3})}.
\]
\end{definition}

\section{The Courant algebroid associated to a $2$-plectic
  manifold} \label{geometric} In this section we recall how to
explicitly construct an exact Courant algebroid with \v{S}evera class
$[\omega]$. This is the 3-form analogue of the construction that gives
a transitive Lie algebroid over a pre-symplectic manifold, which was
previously discussed in Sec. \ref{symp_algebroid}.  The approach
given here is essentially identical to those given by Gualtieri
\cite{Gualtieri:2007}, Hitchin \cite{Hitchin:2004ut}, and \v{S}evera
\cite{Severa1} .

Let $(M,\omega)$ be a manifold equipped with a closed 3-form.
A trivialization of $\omega$
is an open cover$\{U_{i}\}$ of $M$ equipped with 2-forms  
$B_{i} \in \Omega^{2}(U_{i})$, and 1-forms $A_{ij} \in \Omega^{1}(U_{ij})$ 
on intersections such that
\begin{equation} \label{cocycle2}
\begin{split}
 \omega \vert_{U_{i}} &= dB_{i} \\
(B_{j}-B_{i}) \vert_{U_{ij}} &= dA_{ij}.
\end{split}
\end{equation}
Given such a trivialization, over each open set $U_{i}$ consider the
bundle $C_{i}=TU_{i} \oplus
T^{\ast}U_{i}\to U_{i}$ equipped with the standard pairing
\begin{equation} \label{twist_innerprod}
\innerprodp{v_{1} + \alpha_{1}}{v_{2} + \alpha_{2}}_{i}= \ip{1}\alpha_{2}
+ \ip{2}\alpha_{1},
\end{equation}
$v_{1},v_{2} \in \X(U_{i}),$  $\alpha_{1},\alpha_{2} \in\Omega^{1}(U_{i}),$
which has signature $(n,n)$. On double intersections, it's
easy to see that 
\[
\innerprodp{v_{1} + \ip{1}dA_{ij} +\alpha_{1}}{v_{2} + \ip{2}dA_{ij} +
  \alpha_{2}}_{i} = \innerprodp{v_{1} + \alpha_{1}}{v_{2} + \alpha_{2}}_{i}.
\]
Hence the 2-forms $\{dA_{ij}\}$ generate transition functions
\begin{align*}
G_{ij} \maps U_{ij} \to SO(n,n),\\
G_{ij}(x)= 
\begin{pmatrix}
1 & 0\\
dA_{ij} \vert_{x} & 1
\end{pmatrix},
\end{align*}
which satisfy the cocycle conditions on $U_{ijk}$ by virtue of Eq.\
\ref{cocycle2}. Therefore, we have over  $M$ the vector bundle
\[
C= \coprod_{x \in M} T_{x}U_{i} \oplus T^{\ast}_{x}U_{i} / \sim,
\]
equipped with a bilinear form denoted as $\innerprodp{\cdot}{\cdot}$.
$C$ sits in the exact sequence
\[
0 \to T^{\ast}M \stackrel{\jmath}{\rightarrow} C \stackrel{\rho}\to TM \to 0,
\]
where the anchor $\rho$ is induced by the projection $T^{\ast}U_{i} \oplus TU_{i}
\to TU_{i}$, and $\jmath$ is the inclusion.

The 2-forms $\{B_{i}\}$ induce a bundle map  $s \maps TM \to C$
\begin{equation} \label{canonical_split}
s(v_{x}) = v_{x} - \iota_{v_{x} }B_{i} \quad \text{if $x\in U_{i}$},
\end{equation}
It follows from Eq.\ \ref{cocycle2} that $s$ is well-defined when $x \in
U_{ij}$. It is easy to see that this map is an isotropic splitting
(Def.\ \ref{connection}).
Hence every section $e\in \Gamma(C)$ can be uniquely expressed
as
\[
e= s(v) + \alpha,
\]
for some $v\ \in \X(M)$ and $\alpha \in \Omega^{1}(M)$.  As
before, we use $s$ to also denote the map $\Gamma(s) \maps \X(M) \to
\Gamma(C)$. The anchor map is just
\begin{equation} \label{twist_anchor}
\rho \bigl(s(v)+\alpha \bigr)=v.
\end{equation}

Given sections $s(v_{1})+\alpha_{1}$, $s(v_{2}) + \alpha_{2} \in
\Gamma(C)$, a local calculation using Eq.\
\ref{canonical_split}  gives
\begin{equation} \label{split_innerprod}
\begin{split}
\innerprodp{s(v_{1})+\alpha_{1}}{s(v_{2})+\alpha_{2}}
&= \ip{1}\alpha_{2} - \ip{1}\ip{2}B_{i} + \ip{2}\alpha_{1} -
\ip{2}\ip{1}B_{i}\\
&= \innerprodp{v_{1}+\alpha_{1}}{v_{2}+\alpha_{2}}.
\end{split}
\end{equation}
The above equality holds, in fact, for any splitting $s' \maps TM \to
C$, since $s-s'$ is a 2-form on $M$ and therefore skew-symmetric.
The bracket on $\Gamma(C)$ is defined over the open set
$U_{i}$ by:
\[
\tcbrac{s(v_{1}) + \alpha_{1}}{s(v_{2}) +
  \alpha_{2}}\vert_{U_{i}} =\sbrac{s(v_{1}) + \alpha_{1}}{s(v_{2}) +
  \alpha_{2}}_{i}
\]
where $\sbrac{\cdot}{\cdot}_{i}$ is the standard Courant bracket
(\ref{standard_bracket}) on $C_{i}$. Since the 2-forms $\{dA_{ij}\}$
are closed, it follows by direct computation that on double
intersections $U_{ij}$:
\[
\sbrac{G_{ij} (v_{1} + \alpha_{1})}{G_{ij}(v_{2} +
  \alpha_{2})}_{i} = G_{ij} \bigl(\sbrac{v_{1} + \alpha_{1}}{v_{2} +
  \alpha_{2}}_{i} \bigr).
\]
Hence the bracket $\tcbrac{\cdot}{\cdot}$ is indeed globally well-defined. 
Using the local definition of the bracket and the splitting, as well
as the fact that $dB_{i}=\omega$, it is easy to show that
\begin{equation} \label{twist_bracket}
\begin{split}
\tcbrac{s(v_{1}) + \alpha_{1}}{s(v_{2}) +
  \alpha_{2}} &= s \bigl ([v_{1},v_{2}] \bigr) +
\L_{v_{1}}\alpha_{2}- \L_{v_{2}}\alpha_{1} \\
 & \quad -\half d\innerprodm{v_{1} + \alpha_{1}}{v_{2} + \alpha_{2}}
-\ip{2}\ip{1}\omega.
\end{split}
\end{equation}
The bracket $\tcbrac{\cdot}{\cdot}$ is called the \textbf{twisted Courant bracket}. 
A analogous construction using the standard Dorfman bracket (\ref{std_dorf}) 
on $C_{i}$ gives the \textbf{twisted Dorfman
  bracket}:
\begin{equation} \label{twist_bracket_dorf}
\tdbrac{s(v_{1}) + \alpha_{1}}{s(v_{2}) + \alpha_{2}}=
s\bigl([v_{1},v_{2}]\bigr) + \L_{v_{1}}\alpha_{2} - \ip{2}d\alpha_{1} -\ip{2}\ip{1}\omega.
\end{equation}
These brackets were studied in detail by \v{S}evera and Weinstein
\cite{Severa1,Severa-Weinstein}.

It is straightforward to check that $C \to M$ equipped with the
aforementioned bilinear form, anchor, and bracket
$\tcbrac{\cdot}{\cdot}$ is an exact Courant algebroid (Definition
\ref{courant_algebroid}). Just as in Lie algebroid case (Sec.\
\ref{symp_algebroid}), the construction of $C$
is independent of the choice of trivialization up to a splitting-preserving
isomorphism.

A direct calculation shows that
\[
-\omega(v_{1},v_{2},v_{3}) =
\innerprodp{\tcbrac{s\left(v_{1}\right)}{s\left(v_{2}\right)}}{s(v_{3})}.
\]
Hence by Prop.\ \ref{class_eq}, the Courant algebroid $C$ has
\v{S}evera class $[\omega]$. Of course, we are interested in the
special case when $\omega$ is a 2-plectic structure.
We summarize the above discussion with the following proposition:
\begin{prop} \label{2plectic_courant}
Let $(M,\omega)$ be a $2$-plectic manifold. Up to isomorphism, there
exists a unique exact Courant
algebroid $C$ over $M$, with
bilinear form $\innerprodp{\cdot}{\cdot}$,  
anchor map $\rho$, and 
bracket $\tcbrac{\cdot}{\cdot}$ 
given in Eqs.\ \ref{twist_innerprod}, \ref{twist_anchor}, and
\ref{twist_bracket}, respectively,
and  equipped with a splitting whose curvature is $-\omega$.
\end{prop}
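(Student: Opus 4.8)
The plan is to read the statement as a summary of the explicit construction carried out above, so that the proof amounts to assembling that construction and then invoking \v{S}evera's classification for the uniqueness clause. First I would record that since $(M,\omega)$ is $2$-plectic, $\omega$ is in particular a closed $3$-form (Definition \ref{2-plectic_def}); hence, exactly as in the $2$-form case of Section \ref{symp_algebroid}, a good cover $\{U_i\}$ together with the Poincar\'{e} lemma furnishes local primitives $B_i \in \Omega^2(U_i)$ and $1$-forms $A_{ij} \in \Omega^1(U_{ij})$ satisfying the trivialization conditions \ref{cocycle2}. Feeding this trivialization into the gluing recipe of this section produces the bundle $C$ together with the bilinear form \ref{twist_innerprod}, the anchor \ref{twist_anchor}, the bracket \ref{twist_bracket}, and the isotropic splitting $s$ of \ref{canonical_split}; this establishes the existence of a candidate carrying the advertised data.

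Next I would verify that this candidate is genuinely an exact Courant algebroid in the sense of Definition \ref{courant_algebroid}. The key point is that the verification is entirely local: over each $U_i$ the triple $(C_i,\innerprodp{\cdot}{\cdot}_i,\sbrac{\cdot}{\cdot}_i)$ is the standard Courant algebroid of Example \ref{standard}, which is known to satisfy all five axioms, and the transition maps $G_{ij}$ take values in $SO(n,n)$ and preserve the standard bracket, so the axioms descend to the quotient. Exactness is immediate from the defining sequence, and \ref{split_innerprod} shows that $s$ is isotropic (Definition \ref{connection}). For the curvature clause I would set $\alpha_1=\alpha_2=0$ in \ref{twist_bracket} to get $\tcbrac{s(v_1)}{s(v_2)}=s([v_1,v_2])-\ip{2}\ip{1}\omega$, whence the associated $3$-form of Proposition \ref{class_eq} and Definition \ref{Severa_class} is $\innerprodp{\tcbrac{s(v_1)}{s(v_2)}}{s(v_3)}=-\omega(v_1,v_2,v_3)$; thus the splitting has curvature $-\omega$ and $C$ has \v{S}evera class $[\omega]$.

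For uniqueness I would appeal to \v{S}evera's classification theorem \cite{Severa1}, which states that exact Courant algebroids over $M$ are classified up to isomorphism by $H^3_{\mathrm{DR}}(M)$. Since the $C$ just constructed realizes the class $[\omega]$, any exact Courant algebroid carrying a splitting with curvature $-\omega$ has the same \v{S}evera class and is therefore isomorphic to $C$. The finer statement --- that the isomorphism can be chosen to intertwine the splittings --- follows from the remark, already noted in this section and in the parallel Lie-algebroid discussion of Section \ref{symp_algebroid}, that replacing one trivialization of $\omega$ by another alters the glued data only by a splitting-preserving isomorphism. I expect the genuine content, as opposed to bookkeeping, to sit in this uniqueness clause: existence and the curvature computation are local and essentially automatic once Example \ref{standard} is in hand, whereas uniqueness is not a direct computation and rests on \v{S}evera's theorem, whose proof (via the affine structure of the space of splittings over $\Omega^2(M)$ together with the de Rham class of the curvature) is the one nontrivial external input.
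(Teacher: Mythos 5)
Your proposal matches the paper's own treatment: the proposition is stated there merely as a summary of the explicit gluing construction of the surrounding section (local standard Courant algebroids patched by the closed $B$-transforms $dA_{ij}$, splitting induced by the $B_{i}$, and the curvature computation giving $-\omega$, hence \v{S}evera class $[\omega]$), with uniqueness resting on \v{S}evera's classification together with the remark that different trivializations yield splitting-preserving isomorphic results. Your write-up is correct and takes essentially the same route, only spelling out the local-to-global verification of the axioms slightly more explicitly than the paper does.
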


\section{Lie $2$-algebras} \label{L2A_sec} Both the Courant bracket
and the bracket on Hamiltonian 1-forms are, roughly, Lie brackets
which satisfy the Jacobi identity up to an exact 1-form. This leads us
to the notion of a Lie $2$-algebra.  In general, a Lie $n$-algebra is
a $n$-term $L_{\infty}$-algebra.  It consists of a graded vector space
concentrated in degrees $0,\ldots,n-1$ and is equipped with a
collection of skew-symmetric $k$-ary brackets, for $1 \leq k \leq
n+1$, that satisfy a generalized Jacobi identity
\cite{Lada-Markl,LS}. In particular, the $k=2$ bilinear bracket
behaves like a Lie bracket that only satisfies the ordinary Jacobi
identity up to higher coherent chain homotopy.  Baez and Crans showed
that Lie 2-algebras are equivalent to categories internal to the
category of vector spaces over $\R$ equipped with structures analogous
to those of a Lie algebra, for which the usual law involving the
Jacobi identity holds only up to natural isomorphism \cite{HDA6}.
(Note that what we call a Lie 2-algebra is called a `semistrict Lie
2-algebra' in \cite{HDA6}, \cite{Baez:2008bu}, and
\cite{Roytenberg_L2A}.)

As $L_{\infty}$-algebras, Lie 2-algebras are relatively easy to work
with and one can write out the axioms explicitly. Therefore we use the
following definition which is equivalent the usual definition for an
$L_{\infty}$-algebra \cite{LS} when the underlying complex is
concentrated in degrees 0 and 1.
\begin{definition} \label{L2A}
A {\bf Lie 2-algebra} is a $2$-term chain complex of vector spaces
$L_{\bullet} = (L_1\stackrel{d}\rightarrow L_0)$ equipped with:
\begin{itemize}
\item a skew-symmetric chain map $\blankbrac\maps L_{\bullet} \tensor
  L_{\bullet}\to L_{\bullet}$ called the {\bf bracket};
\item a skew-symmetric chain homotopy $J \maps L_{\bullet} \tensor L_{\bullet} \tensor L_{\bullet}
  \to L_{\bullet}$ 
from the chain map
\[     \begin{array}{ccl}  
     L_{\bullet} \tensor L_{\bullet} \tensor L_{\bullet} & \to & L_{\bullet}   \\
     x \tensor y \tensor z & \longmapsto & [x,[y,z]],  
  \end{array}
\]
to the chain map
\[     \begin{array}{ccl}  
     L_{\bullet} \tensor L_{\bullet} \tensor L_{\bullet}& \to & L_{\bullet}   \\
     x \tensor y \tensor z & \longmapsto & [[x,y],z] + [y,[x,z]]  
  \end{array}
\]
called the {\bf Jacobiator},
\end{itemize}
such that the following equation holds:
\begin{equation} \label{big_J}
\begin{array}{c}
  [x,J(y,z,w)] + J(x,[y,z],w) +
  J(x,z,[y,w]) + [J(x,y,z),w] \\ + [z,J(x,y,w)] 
  = J(x,y,[z,w]) + J([x,y],z,w) \\ + [y,J(x,z,w)] + J(y,[x,z],w) + J(y,z,[x,w]).
\end{array}
\end{equation}
\end{definition}

We will also need a suitable notion of morphism: 
\begin{definition}
\label{homo}
Given semistrict Lie $2$-algebras $L=(L_{\bullet},\blankbrac,J)$ and
$L'=(L_{\bullet}',{\blankbrac}^{\prime},J')$ a {\bf morphism} from
$L$ to $L'$ consists of:
\begin{itemize}
\item{a chain map $\phi_{\bullet} \maps L_{\bullet} \to L_{\bullet}'$, and}
\item{a chain homotopy $\Phi \maps L_{\bullet} \tensor L_{\bullet} \to L_{\bullet}'$ from the chain
  map
\[     \begin{array}{ccl}  
     L_{\bullet} \tensor L_{\bullet} & \to & L_{\bullet}'   \\
     x \tensor y & \longmapsto & \phi_{\bullet} \left( [x,y] \right)
  \end{array}
\]
to the chain map
\[     \begin{array}{ccl}  
     L_{\bullet} \tensor L_{\bullet} & \to & L_{\bullet}'   \\
     x \tensor y & \longmapsto & \left [ \phi_{\bullet}(x),\phi_{\bullet}(y) \right]^{\prime},
  \end{array}
\]
}
\end{itemize}
such that the following equation holds:
\begin{equation} \label{coherence}
\begin{array}{l}
\phi_1(J(x,y,z))- J^{\prime}(\phi_0(x),\phi_0(y), \phi_0(z)) = \\
\Phi(x,[y,z]) -\Phi([x,y],z) - \Phi(y,[x,z]) - [\Phi(x,y),\phi_0(z)]^{\prime}\\
+ [\phi_0(x), \Phi(y,z)]^{\prime}- [\phi_0(y),\Phi(x,z)]^{\prime}.
\end{array}
\end{equation}
We say a morphism is {\bf strict} iff $\Phi=0$.
\end{definition}
This definition is equivalent to the definition of a morphism between
$2$-term $L_{\infty}$-algebras \cite{Lada-Markl}.
\begin{definition}
A Lie 2-algebra morphism $(\phi_{\bullet},\Phi) \maps L \to L'$
is a  {\bf quasi-isomorphism} iff the chain map $\phi_{\bullet}$
induces an isomorphism on the homology of the underlying chain
complexes of $L$ and $L'$.
\end{definition}

\subsection{The Lie $2$-algebra from a $2$-plectic manifold}
Any $n$-plectic manifold gives a Lie $n$-algebra which can be
understood as the $n$-plectic analogue of the Poisson algebra
\cite{Rogers:2010nw}. We now review this construction for the
2-plectic case. The underlying $2$-term chain complex of our Lie
2-algebra is:
\[
L_{\bullet} \quad = \quad 
\cinf(M)  \stackrel{d}{\rightarrow} \ham  
\]
where $d$ is the de Rham differential.
This chain complex is well-defined, since 
any exact form is Hamiltonian, with $0$ as its Hamiltonian vector
field. We can construct a chain map 
\[      [\cdot,\cdot] \maps L_{\bullet} \otimes L_{\bullet} \to L_{\bullet} ,\]
by extending the bracket $\brac{\cdot}{\cdot}$ on $\ham$ trivially to $L_{\bullet}$.
In other words, in degree $0$, the chain map is given as in 
Definition \ \ref{semi-bracket.defn}: 
\[  [\alpha,\beta]=\brac{\alpha}{\beta}= \ip{\beta}\ip{\alpha}\omega, \]
and in degrees $1$ and $2$, we set it equal to zero:
\[  [\alpha,f] = 0, \qquad [f,\alpha] = 0, \qquad
    [f,g] = 0.    \]
The precise construction of this Lie $2$-algebra is 
given in the following theorem:
\begin{theorem}
\label{semistrict}
If $(M,\omega)$ is a $2$-plectic manifold, there is a 
 Lie $2$-algebra $L_{\infty}(M,\omega)=(L_{\bullet},\blankbrac,J)$ where:
\begin{itemize}
\item $L_{0} =\ham$,
\item $L_{1}=\cinf(M)$,
\item the differential $L_{1} \stackrel{d}{\to} L_{0}$ is the de Rham differential,
\item the bracket $\blankbrac$ is $\brac{\cdot}{\cdot}$ in degree 0
  and trivial otherwise,
\item the Jacobiator is given by the linear map $J\maps \ham \tensor \ham \tensor 
\ham \to \cinf$, where $J(\alpha,\beta,\gamma) = \ip{\alpha}\ip{\beta}\ip{\gamma}\omega$.
\end{itemize}
\end{theorem}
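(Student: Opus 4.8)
The plan is to verify directly that the proposed data $(L_{\bullet}, \blankbrac, J)$ satisfies Definition \ref{L2A}. There are essentially three things to check: that $\blankbrac$ is a well-defined skew-symmetric chain map, that $J$ is a skew-symmetric chain homotopy from $x\tensor y \tensor z \mapsto [x,[y,z]]$ to $x \tensor y \tensor z \mapsto [[x,y],z]+[y,[x,z]]$, and finally that the big coherence equation \eqref{big_J} holds. Since most of the brackets involving degree-1 elements vanish, the bulk of the content is concentrated in the top-degree (degree 0) terms, where everything reduces to contractions of the 3-form $\omega$ against Hamiltonian vector fields.

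First I would confirm that $\blankbrac$ is a chain map. This amounts to checking that $d[\alpha,\beta] = [d\alpha,\beta]$ wherever a differential appears; but in our complex $L_1 = \cinf(M)$ and the bracket vanishes identically in positive degree, so the only nontrivial compatibility is $d\brac{\alpha}{\beta} = [\alpha, d\beta]$-type relations. Using Proposition \ref{semi-bracket}(1), which gives $v_{\brac{\alpha}{\beta}} = [v_\alpha, v_\beta]$ and identifies $\brac{\alpha}{\beta}$ as Hamiltonian, together with the fact that an exact 1-form $df$ has zero Hamiltonian vector field, one sees that $\brac{\alpha}{df} = \ip{df}\ip{\alpha}\omega$ must equal $d$ of something in $L_1$; indeed $\brac{\alpha}{df} = \lie{\alpha}{f}$ up to the relevant sign, so the chain-map condition is exactly the statement that $\brac{\cdot}{\cdot}$ restricted to the image of $d$ lands compatibly. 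Skew-symmetry of $\blankbrac$ is immediate from Proposition \ref{semi-bracket}(2). The skew-symmetry of $J$ is clear from its definition as a fully antisymmetric contraction $\ip{\alpha}\ip{\beta}\ip{\gamma}\omega$.

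Next I would verify that $J$ is the required chain homotopy. The homotopy condition says precisely that the difference of the two chain maps (the ``associator'' $[\alpha,[\beta,\gamma]] - [[\alpha,\beta],\gamma] - [\beta,[\alpha,\gamma]]$ in degree 0, and the analogous expressions in higher degree) equals $dJ + Jd$ applied appropriately. In degree 0 this is exactly the content of Proposition \ref{semi-bracket}(3): the Jacobiator defect equals $d\ip{\alpha}\ip{\beta}\ip{\gamma}\omega = dJ(\alpha,\beta,\gamma)$. In the mixed degrees, where one argument lies in $\cinf(M)$, both the brackets and $J$ are arranged so that the required identities reduce to elementary relations among Lie derivatives and contractions; these I expect to be routine bookkeeping.

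The main obstacle, and the real heart of the proof, is establishing the coherence law \eqref{big_J}. This is a relation among ten terms, each a bracket or Jacobiator applied to the others, and all of them land in $L_1 = \cinf(M)$. Every term is ultimately a function built from contracting $\omega$ (a closed 3-form) against four Hamiltonian vector fields $v_\alpha, v_\beta, v_\gamma, v_\delta$. The strategy is to expand each of the ten terms using the definitions of $\brac{\cdot}{\cdot}$ and $J$, use Proposition \ref{semi-bracket}(1) to rewrite $v_{\brac{\alpha}{\beta}}$ as $[v_\alpha, v_\beta]$, and then reduce everything to expressions of the form $\ip{\delta}\ip{\gamma}\ip{\beta}\ip{\alpha}\omega$ together with terms involving $\iota_{[v_i,v_j]}\omega$. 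The closedness $d\omega = 0$, combined with Cartan's magic formula and the Jacobi identity for the Lie bracket of vector fields, should make all the $[v_i,v_j]$ contractions cancel in the required pattern, leaving the identity. I expect this to be the lengthiest computation; the key simplification is that $d\omega = 0$ forces a cyclic-symmetry relation among the contractions that precisely matches the combinatorics of the ten terms in \eqref{big_J}.
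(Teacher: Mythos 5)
The paper's own ``proof'' of Theorem \ref{semistrict} is a citation to Theorem 5.2 of \cite{Rogers:2010nw}, and the argument given there is precisely the direct verification you outline, so your overall strategy is the right one and would succeed. Two points need repair before the plan is airtight. First, your justification of the chain-map condition contains a wrong equation: you assert $\brac{\alpha}{df} = \lie{\alpha}{f}$ ``up to the relevant sign,'' but in fact $\brac{\alpha}{df} = \iota_{v_{df}}\iota_{v_{\alpha}}\omega = 0$ outright, because $v_{df}=0$ by nondegeneracy (the same fact that makes the complex $\cinf(M) \stackrel{d}{\to} \ham$ well defined). The identity you seem to have in mind, $\L_{v_{\alpha}}(df) = d(\iota_{v_{\alpha}}df)$, is Lemma \ref{calc1} specialized to $\beta = df$ and is consistent with $\brac{\alpha}{df}=0$ rather than a restatement of it; the chain-map condition $[\alpha,df]=d[\alpha,f]$ then reads $0=0$ and nothing further is needed. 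Second, for the coherence law (Eq.\ \ref{big_J}) the decisive observation --- which you should make explicit instead of proposing to ``expand each of the ten terms'' --- is that the four terms of the form $[x,J(y,z,w)]$, $[J(x,y,z),w]$, $[z,J(x,y,w)]$, $[y,J(x,z,w)]$ vanish identically, since the bracket is trivial whenever one argument has degree 1. What survives is a six-term identity among contractions $\iota_{v_{a}}\iota_{v_{b}}\iota_{v_{c}}\omega$ in which exactly one vector field is a commutator $[v_{i},v_{j}]$ (inserted via part 1 of Prop.\ \ref{semi-bracket}), and this does follow from expanding $0=d\omega(v_{x},v_{y},v_{z},v_{w})$ together with the exact-term identity of part 3 of Prop.\ \ref{semi-bracket}, as you anticipate. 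With these corrections your proposal coincides with the proof in the cited reference.
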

\begin{proof}
See Theorem 5.2 in \cite{Rogers:2010nw}.
\end{proof}
\subsection{The Lie $2$-algebra from a Courant algebroid}
Similarly, given any Courant algebroid $C \rightarrow M$ with bilinear form
$\innerprod{\cdot}{\cdot}$, bracket $\cbrac{\cdot}{\cdot}$, and anchor
$\rho \maps C \to TM$, one can construct a
$2$-term chain complex 
\[
L_{\bullet} \quad = \quad \cinf(M) \stackrel{D}{\rightarrow} \Gamma(C),
\]
with differential $D=\rho^{\ast}d$ where $d$ is the de Rham differential. The bracket
$\cbrac{\cdot}{\cdot}$ on global sections can be extended to a chain
map $\sbrac{\cdot}{\cdot} \maps L_{\bullet} \tensor L_{\bullet} \to L_{\bullet}$. If $e_1,e_2$ are
degree 0 chains then $\sbrac{e_{1}}{e_{2}}$ is the original bracket.
If $e$ is a degree 0 chain and $f,g$ are degree 1 chains, then we
define:
\begin{align*}
\sbrac{e}{f} &= -\sbrac{f}{e} = \half \innerprod{e}{D f}  \\
\sbrac{f}{g}&=0.
\end{align*}   
It was shown by Roytenberg and Weinstein \cite{Roytenberg-Weinstein}
that this extended bracket gives a $L_{\infty}$-algebra. Roytenberg's
later work \cite{Roytenberg_graded,Roytenberg_L2A} implies that a
brutal truncation of this $L_{\infty}$-algebra is a Lie 2-algebra
whose underlying complex is $L_{\bullet}$.  For the Courant algebroid
$C$ constructed in Section \ref{geometric}, their result implies:
\begin{theorem}\label{courant_L2A}
If $C$ is the exact Courant algebroid given in Proposition \ref{2plectic_courant}
then there is a  Lie $2$-algebra  
$L_{\infty}(C)=(L_{\bullet},\blankbrac, J)$ where:
\begin{itemize}
\item $L_{0}=\Gamma(C)$,
\item $L_{1}=\cinf(M)$,
\item the differential $L_{1} \stackrel{D}{\to} L_{0}$ is $D=\rho^{\ast}d$,
\item{the bracket $\sbrac{\cdot}{\cdot}$ is 
\[
[e_{1},e_{2}]= \tcbrac{e_{1}}{e_{2}} \quad  \text{in degree 0}
\]
and
\[
  [e,f]=-[f,e]=\half \innerprodp{e}{df} \quad \text{in degree 1},
\]
}
\item the Jacobiator is the linear map $J\maps \Gamma(C) \tensor \Gamma(C) \tensor \Gamma(C) \to \cinf(M)$ defined by
\begin{align*} 
J(e_{1},e_{2},e_{3}) &= -T(e_{1},e_{2},e_{3})\\
&=-\frac{1}{6}
\left(\innerprodp{\tcbrac{e_1}{e_2}}{e_3} +
\innerprodp{\tcbrac{e_3}{e_1}}{e_2} \right. \\
 & \left.\quad + \innerprodp{\tcbrac{e_2}{e_3}}{e_1} \right). 
\end{align*}
\end{itemize}
\end{theorem}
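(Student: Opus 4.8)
The plan is to verify directly that the data listed satisfy the three requirements of Definition \ref{L2A}: that $\sbrac{\cdot}{\cdot}$ is a skew-symmetric chain map, that $J=-T$ is a skew-symmetric chain homotopy between the maps $x\tensor y\tensor z\mapsto [x,[y,z]]$ and $x\tensor y\tensor z\mapsto [[x,y],z]+[y,[x,z]]$, and that the coherence law (\ref{big_J}) holds. Since Proposition \ref{2plectic_courant} already establishes that $C$ is a Courant algebroid in the sense of Definition \ref{courant_algebroid}, every step is to be a consequence of axioms (1)--(5) there; alternatively one may invoke the general theorem of Roytenberg and Weinstein \cite{Roytenberg-Weinstein} that the bracket of any Courant algebroid generates an $L_{\infty}$-structure and pass to the two-term truncation. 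Two observations organize the computation. First, $J$ raises degree by one and $L_{\bullet}$ is concentrated in degrees $0$ and $1$, so $J$ can be nonzero only on $L_0\tensor L_0\tensor L_0$; hence the only substantive instance of the Jacobiator, and of (\ref{big_J}), occurs with all arguments in $\Gamma(C)$. Second, for $e\in L_0$ and $h\in L_1$ one has the convenient formula $\sbrac{e}{h}=\half\innerprodp{e}{Dh}=\half\rho(e)(h)$, using the defining property $\innerprodp{Dh}{e}=\rho(e)h$ of $D$.

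I would first dispatch the bracket. Skew-symmetry is immediate: the degree-$0$ part is the skew Courant bracket $\tcbrac{\cdot}{\cdot}$, while in the remaining degrees the prescriptions $\sbrac{e}{h}=-\sbrac{h}{e}$ and $\sbrac{f}{g}=0$ build it in. The chain-map condition $d_{L}\circ\sbrac{\cdot}{\cdot}=\sbrac{\cdot}{\cdot}\circ d_{\tensor}$ is vacuous in total degree $0$ and reduces to two identities. On $L_0\tensor L_1$ it demands $D\sbrac{e}{f}=\tcbrac{e}{Df}$, i.e. $\tcbrac{e}{Df}=\half D\innerprodp{e}{Df}$; I would obtain this from the relation (\ref{dorfman}) to the Dorfman bracket together with the general fact $\tdbrac{e}{Df}=D(\rho(e)f)$, which follows from axiom (5) and nondegeneracy. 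On $L_1\tensor L_1$ the condition becomes $\innerprodp{Df}{Dg}=0$, which is exactly axiom (4).

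Next I would check that $J=-T$ is a chain homotopy (its total skew-symmetry follows from that of $\tcbrac{\cdot}{\cdot}$). By the degree observation the homotopy equation is nontrivial only in total degrees $0$ and $1$. In total degree $0$, with $x,y,z\in\Gamma(C)$, the term involving $d_{\tensor}$ drops out and the condition collapses precisely to axiom (1), namely $[x,[y,z]]-[[x,y],z]-[y,[x,z]]=-DT(x,y,z)$; this is why the Jacobiator must be $-T$. In total degree $1$, with $x,y\in L_0$ and $f\in L_1$, the $d_L J$ term vanishes and the condition reads $[x,[y,f]]-[[x,y],f]-[y,[x,f]]=-T(x,y,Df)$; using $\sbrac{e}{h}=\half\rho(e)(h)$ and axiom (2), both sides reduce to $-\tfrac14[\rho(x),\rho(y)](f)$, so the identity holds.

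The main obstacle is the coherence law (\ref{big_J}). With $x,y,z,w$ all in $\Gamma(C)$ this is an equality of functions in $\cinf(M)$, and unwinding $J=-T$, the symmetric pairing, and $\sbrac{e}{h}=\half\rho(e)(h)$ turns it into a lengthy identity among iterated brackets and pairings. One route is to reduce both sides with axioms (1)--(5) --- crucially the Jacobiator axiom (1) applied to triples built from $x,y,z,w$, and the compatibility axiom (5) of the pairing with the bracket --- but the bookkeeping is delicate. The cleaner and more reliable route, to my mind, is to adopt Roytenberg's description \cite{Roytenberg_graded,Roytenberg_L2A} of the Courant algebroid as a degree-$2$ symplectic graded manifold with cubic Hamiltonian $\Theta$ satisfying $\{\Theta,\Theta\}=0$: the $L_{\infty}$-brackets are then derived brackets, and (\ref{big_J}) becomes an automatic consequence of the graded Jacobi identity for the Poisson bracket together with $\{\Theta,\Theta\}=0$, so no term-by-term verification is needed.
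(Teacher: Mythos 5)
Your proposal is correct, and on the one genuinely hard point it ends up resting on the same external input as the paper: the paper offers no term-by-term verification at all, but derives the theorem from Example 5.4 of \cite{Roytenberg_L2A} and Section 4 of \cite{Roytenberg_graded} (equivalently, from the Roytenberg--Weinstein $L_{\infty}$-structure \cite{Roytenberg-Weinstein} on the three-term complex $0 \to \ker D \to \cinf(M) \to \Gamma(C)$, followed by the observation that the structure maps descend to the brutal two-term truncation because $l_{2}$ is trivial in degree $>1$, $l_{3}$ is trivial in degree $>0$, and the higher $l_{n}$ vanish). Your ``cleaner route'' for the coherence law (\ref{big_J}) --- the degree-$2$ symplectic graded manifold with cubic Hamiltonian satisfying $\{\Theta,\Theta\}=0$ --- is precisely the content of the cited Section 4 of \cite{Roytenberg_graded}, so there you and the paper coincide. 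What you add, and the paper omits, is a direct check of the lower-order conditions from the axioms of Definition \ref{courant_algebroid}: your identification of the chain-map condition with $\tcbrac{e}{Df}=\half D\innerprodp{e}{Df}$ and with axiom (4), and of the homotopy condition in total degrees $0$ and $1$ with axiom (1) and with the computation giving $-\tfrac{1}{4}[\rho(x),\rho(y)](f)$ on both sides, are all accurate (your orientation of ``homotopy from $\phi$ to $\psi$'' is the one that forces $J=-T$, consistent with the statement). The only caveat is that your fully elementary route for (\ref{big_J}) is left as ``delicate bookkeeping''; if the proof is meant to be self-contained rather than a citation, that verification is the entire substance of the theorem and would still need to be carried out.
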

More precisely, the theorem follows from Example 5.4 of
\cite{Roytenberg_L2A} and Section 4 of \cite{Roytenberg_graded}. On
the other hand, the original construction of Roytenberg and Weinstein
gives a $L_{\infty}$-algebra on the complex:
\[
0 \rightarrow \ker D \stackrel{\iota}{\rightarrow} \cinf(M)
\stackrel{D}{\rightarrow}\Gamma(C),
\]
with trivial structure maps $l_{n}$ for $n \geq 3$. Moreover, the map
$l_{2}$ (corresponding to the bracket $\sbrac{\cdot}{\cdot}$ given above) is trivial in
degree $>1$ and the map $l_{3}$ (corresponding to the Jacobiator $J$) is
trivial in degree $>0$. Hence these maps induce the above Lie 2-algebra
structure on $\cinf(M)\stackrel{D}{\rightarrow}\Gamma(C)$.

\section{The algebraic relationship between 2-plectic and
  Courant} \label{algebraic} In Section \ref{geometric}, we described
how one can construct over a 2-plectic manifold $(M,\omega)$, an exact
Courant algebroid
$(C,\tcbrac{\cdot}{\cdot},\innerprodp{\cdot}{\cdot},\rho)$ equipped
with a splitting $s \maps TM \to C$ whose curvature is $-\omega$. In
this section, we show there is a complementary algebraic
relationship. We can interpret these results as the 2-plectic
analogues of those given in Section \ref{symp_preserve}.
\begin{theorem} \label{main_thm} Let $(M,\omega)$ be a $2$-plectic
  manifold and let $C$ be its corresponding Courant
  algebroid. Let $L_{\infty}(M,\omega)$ and $L_{\infty}(C)$ be the 
  Lie 2-algebras corresponding to $(M,\omega)$ and $C$,
  respectively.  There exists a morphism of Lie 2-algebras embedding
  $L_{\infty}(M,\omega)$ into $L_{\infty}(C)$.
\end{theorem}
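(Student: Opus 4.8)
The plan is to construct the morphism $(\phi_{\bullet},\Phi)$ explicitly and verify the three requirements of Definition \ref{homo}. Because $C$ carries the isotropic splitting $s\maps TM \to C$ of Proposition \ref{2plectic_courant}, every section is uniquely $s(v)+\beta$ with $v\in\X(M)$ and $\beta\in\Omega^{1}(M)$; the natural embedding sends a Hamiltonian $1$-form to the section built from its Hamiltonian vector field, in direct analogy with the section $s(v_{f})+f$ of Proposition \ref{lie_alg_iso}. I would thus set
\[
\phi_0\maps\ham\to\Gamma(C),\quad \phi_0(\alpha)=s(v_{\alpha})+\alpha,\qquad \phi_1=\id\maps\cinf(M)\to\cinf(M).
\]
Since $\phi_0$ returns $\alpha$ as its $1$-form part, $\phi_{\bullet}$ is injective and the morphism is an embedding. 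That $\phi_{\bullet}$ is a chain map is immediate from nondegeneracy of $\omega$: an exact form $df$ has Hamiltonian vector field $0$, so $\phi_0(df)=\rho^{\ast}(df)=D\phi_1(f)$, which is compatibility with the differentials $d$ and $D=\rho^{\ast}d$.

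I would next measure the failure of $\phi_0$ to intertwine the brackets. Expanding $\tcbrac{\phi_0(\alpha)}{\phi_0(\beta)}$ by the twisted Courant bracket (\ref{twist_bracket}) and applying Cartan's formula together with $d\alpha=-\ip{\alpha}\omega$, $d\beta=-\ip{\beta}\omega$, the Lie-derivative terms collapse to give
\[
\tcbrac{\phi_0(\alpha)}{\phi_0(\beta)}=s([v_{\alpha},v_{\beta}])+\brac{\alpha}{\beta}+\half\,\rho^{\ast}d\bigl(\ip{\alpha}\beta-\ip{\beta}\alpha\bigr).
\]
By Proposition \ref{semi-bracket}, $v_{\brac{\alpha}{\beta}}=[v_{\alpha},v_{\beta}]$, so the first two terms are exactly $\phi_0(\brac{\alpha}{\beta})$ and the discrepancy is the exact section $D\bigl(\half(\ip{\alpha}\beta-\ip{\beta}\alpha)\bigr)$. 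This forces the only nonzero (degree $+1$) component of the homotopy to be
\[
\Phi\maps\ham\tensor\ham\to\cinf(M),\qquad \Phi(\alpha,\beta)=\half\bigl(\ip{\alpha}\beta-\ip{\beta}\alpha\bigr)=\half\,\innerprodm{v_{\alpha}+\alpha}{v_{\beta}+\beta},
\]
with overall sign chosen to match the homotopy convention in Definition \ref{homo}. I would then confirm $\Phi$ is a genuine chain homotopy in the remaining degrees: the mixed-degree case reduces, using $v_{df}=0$ and the degree-$1$ bracket $\half\innerprodp{e}{df}$ of Theorem \ref{courant_L2A}, to $\Phi(\alpha,df)=\pm\half\ip{\alpha}df$, while the top-degree case is trivial because $[f,g]$ vanishes in both $L_{\infty}(M,\omega)$ and $L_{\infty}(C)$.

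The substantive step, and the one I expect to be the main obstacle, is the coherence identity (\ref{coherence}) relating the two Jacobiators. Its left-hand side is $\phi_1(J(\alpha,\beta,\gamma))-J'(\phi_0\alpha,\phi_0\beta,\phi_0\gamma)$, where $\phi_1(J(\alpha,\beta,\gamma))=\ip{\alpha}\ip{\beta}\ip{\gamma}\omega$ (Theorem \ref{semistrict}) and $J'=-T$ (Theorem \ref{courant_L2A}). I would expand each $\innerprodp{\tcbrac{\phi_0(\alpha)}{\phi_0(\beta)}}{\phi_0(\gamma)}$ using the bracket computation above and the split pairing (\ref{split_innerprod}); every such term breaks into a totally antisymmetric piece $\omega(v_{\alpha},v_{\beta},v_{\gamma})$, a term $\gamma([v_{\alpha},v_{\beta}])$, and a contraction $\half\ip{\gamma}d(\ip{\alpha}\beta-\ip{\beta}\alpha)$. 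On the right-hand side the terms $\Phi(\alpha,\brac{\beta}{\gamma})$, etc., and the mixed brackets $[\Phi(\alpha,\beta),\phi_0(\gamma)]'=-\half\innerprodp{\phi_0(\gamma)}{d\Phi(\alpha,\beta)}$ generate matching combinations of the same three kinds. The computation closes via the identity
\[
\alpha([v_{\beta},v_{\gamma}])+\beta([v_{\gamma},v_{\alpha}])+\gamma([v_{\alpha},v_{\beta}])=3\,\omega(v_{\alpha},v_{\beta},v_{\gamma})+\Sigma,
\]
where $\Sigma$ is the cyclic sum of $\ip{\gamma}d(\ip{\alpha}\beta-\ip{\beta}\alpha)$; this is precisely the global de Rham formula for $d\alpha,d\beta,d\gamma$ fed by $d\alpha=-\ip{\alpha}\omega$ (closedness of $\omega$ entering through Proposition \ref{semi-bracket}, which identifies $v_{\brac{\alpha}{\beta}}$). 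After substitution both sides reduce to $\pm\tfrac{1}{4}\Sigma$, and the delicate part is keeping the factors of $\half$ and the six signs of (\ref{coherence}) consistent so that they agree; everything else is formal. Once (\ref{coherence}) is verified, $(\phi_{\bullet},\Phi)$ is the required embedding.
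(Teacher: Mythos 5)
Your proposal follows essentially the same route as the paper: the same embedding $\phi_{0}(\alpha)=s(v_{\alpha})+\alpha$, $\phi_{1}=\id$, the same homotopy $\Phi(\alpha,\beta)=\mp\half\innerprodm{v_{\alpha}+\alpha}{v_{\beta}+\beta}$ (the paper's convention fixes the sign to $-\half$), and your closing identity $\gamma([v_{\alpha},v_{\beta}])+\cp=3\,\omega(v_{\alpha},v_{\beta},v_{\gamma})+\Sigma$ is exactly the paper's Lemma \ref{calc2}, which likewise drives its verification of the coherence condition. The reduction of both sides of Eq.\ \ref{coherence} to $\tfrac{1}{4}\Sigma$ matches the paper's computation, so the argument is correct as outlined.
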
 

Before we prove the theorem, we introduce some technical
lemmas to ease the calculations. Recall from Eq.\ \ref{skew} that the
formula for the standard skew-symmetric pairing
on $\X(M) \oplus \Omega^{1}(M)$:
\[
\innerprodm{v_{1} + \alpha_{1}}{v_{2} + \alpha_{2}} =
\ip{1}\alpha_{2} - \ip{2}\alpha_{1}.
\]
In what follows, by the symbol ``$\cp$'' we mean cyclic permutations of the symbols $\alpha,\beta,\gamma$.
\begin{lemma}\label{calc1}
If $\alpha, \beta \in \ham$ with corresponding Hamiltonian vector fields
$v_{\alpha},v_{\beta}$, then
$\L_{v_{\alpha}}\beta=\brac{\alpha}{\beta} + d \ip{\alpha}\beta$.
\end{lemma}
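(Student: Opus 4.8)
The plan is to compute $\L_{v_{\alpha}}\beta$ directly using Cartan's magic formula and then substitute the defining property of a Hamiltonian 1-form. Cartan's formula gives
\[
\L_{v_{\alpha}}\beta = d \ip{\alpha}\beta + \ip{\alpha} d\beta.
\]
The first term on the right is already exactly the $d\ip{\alpha}\beta$ appearing in the claimed identity, so the entire content of the lemma is to identify the second term $\ip{\alpha} d\beta$ with the bracket $\brac{\alpha}{\beta}$.

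To do this I would invoke Definition \ref{hamiltonian}: since $\beta$ is Hamiltonian with Hamiltonian vector field $v_{\beta}$, we have $d\beta = -\ip{\beta}\omega$. Substituting,
\[
\ip{\alpha} d\beta = -\ip{\alpha}\ip{\beta}\omega = \ip{\beta}\ip{\alpha}\omega,
\]
where the last equality uses the standard fact that interior products of a fixed form anticommute, $\ip{\alpha}\ip{\beta} = -\ip{\beta}\ip{\alpha}$. By Definition \ref{semi-bracket.defn}, the right-hand side $\ip{\beta}\ip{\alpha}\omega$ is precisely $\brac{\alpha}{\beta}$. Combining the two terms yields $\L_{v_{\alpha}}\beta = \brac{\alpha}{\beta} + d\ip{\alpha}\beta$, as desired.

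There is essentially no obstacle here; the only point requiring a moment's care is the sign bookkeeping when anticommuting the two interior products, and making sure the Hamiltonian condition is applied to $\beta$ (not $\alpha$) so that the $\ip{\alpha}$ left over from Cartan's formula hits $d\beta = -\ip{\beta}\omega$ to reproduce the bracket. This lemma is purely a rewriting of Cartan's formula using the 2-plectic definitions, and it will serve as a convenient intermediate identity in the subsequent proof of Theorem \ref{main_thm}.
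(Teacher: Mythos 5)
Your proposal is correct and follows exactly the paper's argument: Cartan's formula, the substitution $d\beta=-\ip{\beta}\omega$, and the anticommutation of interior products to recognize $\brac{\alpha}{\beta}=\ip{\beta}\ip{\alpha}\omega$. The only difference is that you spell out the sign step explicitly, which the paper leaves implicit.
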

\begin{proof} Since $\L_{v} = \iota_v d + d \iota_v$,
\[  \L_{v_{\alpha}}{\beta} = 
\ip{\alpha} d \beta + d \ip{\alpha} \beta =
-\ip{\alpha}\ip{\beta} \omega + d \ip{\alpha} \beta =
\brac{\alpha}{\beta} + d \ip{\alpha} \beta .\]
\end{proof}

\begin{lemma} \label{calc2}
If $\alpha,\beta,\gamma \in \ham$ with corresponding Hamiltonian vector fields
$v_{\alpha},v_{\beta},v_{\gamma}$, then
\begin{align*}
\iota_{[v_{\alpha},v_{\beta}]} \gamma + \cp &= 
-3\ip{\alpha}\ip{\beta}\ip{\gamma}\omega +
\ip{\alpha}d\innerprodm{v_{\beta}+\beta}{v_{\gamma} + \gamma}\\ & \quad  +
\ip{\gamma}d\innerprodm{v_{\alpha}+\alpha}{v_{\beta} + \beta} +
\ip{\beta}d\innerprodm{v_{\gamma}+\gamma}{v_{\alpha} + \alpha} \\
\end{align*}
\end{lemma}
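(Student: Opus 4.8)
The plan is to prove the identity for the single, unsymmetrized term $\iota_{[v_{\alpha},v_{\beta}]}\gamma$ first, and then sum over cyclic permutations. The main tool for the single term is the Cartan-calculus commutator identity $\iota_{[v_{\alpha},v_{\beta}]} = \L_{v_{\alpha}}\iota_{v_{\beta}} - \iota_{v_{\beta}}\L_{v_{\alpha}}$, which I would apply to the $1$-form $\gamma$. Since $\ip{\beta}\gamma$ is a function, the first piece is $\L_{v_{\alpha}}(\ip{\beta}\gamma) = \ip{\alpha}d\ip{\beta}\gamma$. For the second piece I would invoke Lemma \ref{calc1} to write $\L_{v_{\alpha}}\gamma = \brac{\alpha}{\gamma} + d\ip{\alpha}\gamma$, so that $\ip{\beta}\L_{v_{\alpha}}\gamma = \ip{\beta}\brac{\alpha}{\gamma} + \ip{\beta}d\ip{\alpha}\gamma$.

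Next I would extract the purely $2$-plectic contribution. By Definition \ref{semi-bracket.defn}, $\brac{\alpha}{\gamma} = \ip{\gamma}\ip{\alpha}\omega$, hence $\ip{\beta}\brac{\alpha}{\gamma} = \ip{\beta}\ip{\gamma}\ip{\alpha}\omega$; the antisymmetry of the $3$-form $\omega$ then gives $\ip{\beta}\ip{\gamma}\ip{\alpha}\omega = \ip{\alpha}\ip{\beta}\ip{\gamma}\omega$, since both equal $-\omega(v_{\alpha},v_{\beta},v_{\gamma})$. Collecting these computations yields the single-term formula $\iota_{[v_{\alpha},v_{\beta}]}\gamma = \ip{\alpha}d\ip{\beta}\gamma - \ip{\beta}d\ip{\alpha}\gamma - \ip{\alpha}\ip{\beta}\ip{\gamma}\omega$.

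The final step is the cyclic summation $+ \cp$. For the $\omega$ term, cyclic invariance of $\omega(v_{\alpha},v_{\beta},v_{\gamma})$ shows that each of the three permutations contributes the same quantity $-\ip{\alpha}\ip{\beta}\ip{\gamma}\omega$, producing the coefficient $-3$ on the right-hand side. For the four remaining interior-product terms, I would expand the three skew pairings on the right using $\innerprodm{v_{\beta}+\beta}{v_{\gamma}+\gamma} = \ip{\beta}\gamma - \ip{\gamma}\beta$ from Eq.\ \ref{skew}, and verify that the six resulting monomials of the form $\pm\,\iota_{v}d\iota_{v'}$ coincide, term by term and with matching signs, with the six monomials produced by cyclically summing $\ip{\alpha}d\ip{\beta}\gamma - \ip{\beta}d\ip{\alpha}\gamma$.

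I expect the only real difficulty to be clerical rather than conceptual: keeping the order and signs of the nested interior products straight, both in the identification $\ip{\beta}\brac{\alpha}{\gamma} = \ip{\alpha}\ip{\beta}\ip{\gamma}\omega$ (where the antisymmetry of $\omega$ must be used correctly) and in matching the six $\iota_{v}d\iota_{v'}$ terms across the cyclic sum. Once the single-term formula is established and the skew pairings are expanded, no new idea is needed and the identity falls out by direct comparison.
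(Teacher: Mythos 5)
Your proposal is correct and follows essentially the same route as the paper: the Cartan commutator identity $\iota_{[v_{\alpha},v_{\beta}]}=\L_{v_{\alpha}}\ip{\beta}-\ip{\beta}\L_{v_{\alpha}}$ combined with Lemma \ref{calc1} and the definition of the bracket to get the single-term formula, followed by the cyclic sum and regrouping via Eq.\ \ref{skew}. The only cosmetic difference is that you normalize each $\omega$-term to $-\ip{\alpha}\ip{\beta}\ip{\gamma}\omega$ immediately, whereas the paper carries the three cyclically permuted (and equal) forms of that term and collects them at the end; both sign bookkeepings check out.
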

\begin{proof}
The identity $\iota_{[v_{\alpha},v_{\beta}]}=\L_{v_{\alpha}}\ip{\beta}
-\ip{\beta}\L_{v_{\alpha}}$ and Lemma \ref{calc1} imply:
\begin{align*}
\iota_{[v_{\alpha},v_{\beta}]} \gamma &=
\lie{\alpha}{\ip{\beta}}\gamma - {\ip{\beta}}\lie{\alpha}\gamma\\
&=\lie{\alpha}{\ip{\beta}}\gamma -\ip{\beta}
\left(\brac{\alpha}{\gamma} + d \ip{\alpha}\gamma\right)\\
&=\ip{\alpha}d\ip{\beta}\gamma -\ip{\beta}\ip{\gamma}\ip{\alpha}\omega 
  - \ip{\beta}d\ip{\alpha}\gamma,
\end{align*}
where the last equality follows from the definition of the bracket.

Therefore we have:
\begin{align*}
\iota_{[v_{\gamma},v_{\alpha}]} \beta  
&=\ip{\gamma}d\ip{\alpha}\beta -\ip{\alpha}\ip{\beta}\ip{\gamma}\omega 
  - \ip{\alpha}d\ip{\gamma}\beta,\\
\iota_{[v_{\beta },v_{\gamma}]}\alpha  
&=\ip{\beta}d\ip{\gamma}\alpha -\ip{\gamma}\ip{\alpha}\ip{\beta}\omega 
  - \ip{\gamma}d\ip{\beta}\alpha,
\end{align*}
and Eq.\ \ref{skew} implies
\[
\ip{\alpha}d\ip{\beta}\gamma - \ip{\alpha}d\ip{\gamma}\beta=
\ip{\alpha}d\innerprodm{v_{\beta}+\beta}{v_{\gamma} + \gamma}.
\]
The statement then follows.
\end{proof}

\begin{lemma}\label{calc3}
If $\alpha, \beta \in \ham$ with corresponding Hamiltonian vector fields
$v_{\alpha},v_{\beta}$, then
\[
\lie{\alpha}{\beta}-\lie{\beta}{\alpha}=2\brac{\alpha}{\beta} + d\innerprodm{v_{\alpha}
  + \alpha}{v_{\beta} + \beta}.
\]
\end{lemma}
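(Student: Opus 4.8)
The plan is to reduce the identity directly to Lemma~\ref{calc1}, which already expresses each Lie derivative in terms of the bracket and an exact $1$-form. First I would apply that lemma twice, once to $\lie{\alpha}{\beta}$ and once to $\lie{\beta}{\alpha}$, obtaining
\[
\lie{\alpha}{\beta} = \brac{\alpha}{\beta} + d\ip{\alpha}\beta,
\qquad
\lie{\beta}{\alpha} = \brac{\beta}{\alpha} + d\ip{\beta}\alpha.
\]
Subtracting the second relation from the first is then a matter of bookkeeping.

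For the bracket terms I would invoke the skew-symmetry of $\brac{\cdot}{\cdot}$ established in Proposition~\ref{semi-bracket}(2), so that $\brac{\alpha}{\beta} - \brac{\beta}{\alpha} = 2\brac{\alpha}{\beta}$. For the exact terms I would collect $d\ip{\alpha}\beta - d\ip{\beta}\alpha = d\bigl(\ip{\alpha}\beta - \ip{\beta}\alpha\bigr)$ and recognize the parenthesized expression, via the definition of the standard skew-symmetric pairing in Eq.~\ref{skew}, as $\innerprodm{v_{\alpha}+\alpha}{v_{\beta}+\beta}$. Assembling the two pieces yields the claimed formula.

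There is really no hard step here: the essential content is already packaged in Lemma~\ref{calc1}, and the remaining work is the sign accounting in the subtraction together with the identification of the skew pairing. The only point demanding any care is keeping the arguments of the pairing in the correct order, since swapping $\alpha$ and $\beta$ flips the sign of both $\innerprodm{\cdot}{\cdot}$ and $\brac{\cdot}{\cdot}$; one should check that these sign conventions line up so that the exact term emerges with a plus sign as stated.
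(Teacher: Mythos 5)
Your proposal is correct and is exactly the argument the paper intends: the paper's proof simply states that the result ``follows immediately from Lemma \ref{calc1} and Eq.\ \ref{skew},'' and your two applications of Lemma \ref{calc1}, the skew-symmetry of $\brac{\cdot}{\cdot}$, and the identification of $\ip{\alpha}\beta - \ip{\beta}\alpha$ with the skew pairing are precisely the bookkeeping behind that one-line proof.
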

\begin{proof}
Follows immediately from Lemma \ref{calc1} and Eq.\ \ref{skew}.
\end{proof}

\begin{proof}[Proof of Theorem \ref{main_thm}]
We will construct a morphism from $L_{\infty}(M,\omega)$ to $L_{\infty}(C)$.
Let
\begin{gather*}
L_{\bullet}= \cinf(M) \stackrel{d}{\to} \ham,\\
\lbrac{\cdot}{\cdot} \maps L_{\bullet} \tensor L_{\bullet}\to L_{\bullet},\\
 J_{L} \maps L_{\bullet} \tensor L_{\bullet} \tensor L_{\bullet}  \to L_{\bullet}
\end{gather*}
denote the underlying chain complex, bracket, and Jacobiator of the Lie
2-algebra $L_{\infty}(M,\omega)$. Similarly,
\begin{gather*}
L'_{\bullet} = \cinf(M) \stackrel{d}{\to} \Gamma(C),\\
\lpbrac{\cdot}{\cdot} \maps L'_{\bullet} \tensor L'_{\bullet} \to L'_{\bullet},\\
J_{L'} \maps L'_{\bullet} \tensor L'_{\bullet} \tensor L'_{\bullet} \to L'_{\bullet}
\end{gather*}
denotes the underlying chain
complex, bracket, and Jacobiator of the Lie 2-algebra
$L_{\infty}(C)$.
 
Let $s \maps TM \to C$ be the splitting. Let $\phi_{0} \maps
\ham \to \Gamma(C)$ be given by
\[
\phi_{0}(\alpha)=s(v_\alpha) +\alpha,
\]
where $v_{\alpha}$ is the Hamiltonian vector field corresponding to
$\alpha$.
Let $\phi_{1}\maps \cinf(M) \to \cinf(M)$ be the identity.
Then $\phi_{\bullet} \maps L_{\bullet} \to L'_{\bullet}$ is a chain
map, since the Hamiltonian vector field of an exact 1-form is zero.
Let $\Phi \maps \ham \tensor \ham \to \cinf(M)$ be given
by
\[
\Phi(\alpha,\beta)=-\half \innerprodm{v_{\alpha} + \alpha}{v_{\beta} + \beta}.
\]

Now we show $\Phi$ is a well-defined chain homotopy in the sense of
Definition \ref{homo}. We have
\begin{equation}\label{lpbrac_eq}
\begin{split}
\lpbrac{\phi_{0}(\alpha)}{\phi_{0}(\beta)}&=  \tcbrac{s(v_{\alpha}) +
  \alpha}{s(v_{\beta}) + \beta}\\
&=  s([v_{\alpha},v_{\beta}]) + \L_{v_{\alpha}}\beta - \L_{v_{\beta}}\alpha
 -\ip{\beta}\ip{\alpha} \omega\\
  &\quad - \half d \innerprodm{v_{\alpha} + \alpha }{v_{\beta} +
    \beta}\\
  &=s([v_{\alpha},v_{\beta}]) + \brac{\alpha}{\beta} + \half d
  \innerprodm{v_{\alpha} + \alpha }{v_{\beta} + \beta}\\
&=s([v_{\alpha},v_{\beta}]) + \lbrac{\alpha}{\beta} - d\Phi(\alpha,\beta).
\end{split}
\end{equation}
The second line above is just the definition of the twisted Courant
bracket (Eq.\ \ref{twist_bracket}), while the 
second to last line follows from Lemma \ref{calc3} and 
Def.\ \ref{semi-bracket.defn} of the bracket $\{\cdot,\cdot\}$.
By Prop.\ \ref{semi-bracket}, the Hamiltonian vector
field of $\brac{\alpha}{\beta}$ is $[v_{\alpha},v_{\beta}]$. Hence we
have:
\[
\phi_{0}(\lbrac{\alpha}{\beta}) -\lpbrac{\phi_{0}(\alpha)}{\phi_{0}(\beta)}
=d\Phi(\alpha,\beta).
\]

In degree 1, the bracket $\lbrac{\cdot}{\cdot}$ is trivial. It
follows from the definition of $\lpbrac{\cdot}{\cdot}$ that
\[
\phi_{1}(\lbrac{\alpha}{f})- 
\lpbrac{\phi_{0}(\alpha)}{\phi_{1}(f)} = -\half \innerprodp{s(v_{\alpha}) + \alpha}{df}.
\]
From Eq.\ \ref{split_innerprod}, we have
\[
\innerprodp{s(v_{\alpha}) + \alpha}{df}=\innerprodp{s(v_{\alpha}) +
  \alpha}{s(0)+ df}= \ip{\alpha}df.
\]
Therefore 
\[
\phi_{1}(\lbrac{\alpha}{f})- 
\lpbrac{\phi_{0}(\alpha)}{\phi_{1}(f)} = \Phi(\alpha,df),
\]
and similarly
\[
\phi_{1}(\lbrac{f}{\alpha})- 
\lpbrac{\phi_{1}(f)}{\phi_{0}(\alpha)} = \Phi(df,\alpha).
\]
Therefore $\Phi$ is a chain homotopy.

It remains to show the coherence condition (Eq.\ \ref{coherence} in
Definition \ref{homo}) is satisfied.  First we rewrite the Jacobiator
$J_{L'}$ using the second to last line of (\ref{lpbrac_eq}):
\begin{align*}
  J_{L'}(\phi_0(\alpha),\phi_0(\beta), \phi_0(\gamma))&=-\frac{1}{6}
  \innerprodp{\lpbrac{\phi_{0}(\alpha)}{\phi_0(\beta)}}{\phi_0(\gamma)}
  +
  \cp \\
  &=-\frac{1}{6}\innerprodp{
    s([v_{\alpha},v_{\beta}])
    +\brac{\alpha}{\beta}-d\Phi(\alpha,\beta)}{s(v_{\gamma})+\gamma}
  \\ & \quad +  \cp.
\end{align*}
From the definition of the bracket $\brac{\cdot}{\cdot}$ and the symmetric
pairing, we have
\begin{equation} \label{sub1}
  J_{L'}(\phi_0(\alpha),\phi_0(\beta), \phi_0(\gamma))=
 -\frac{1}{2}\ip{\gamma}\ip{\beta}\ip{\alpha}\omega 
-\frac{1}{6} \bigl( \iota_{[v_{\alpha},v_{\beta}]} \gamma-
\ip{\gamma}d\Phi(\alpha,\beta) + \cp \bigr).
\end{equation}
Lemma \ref{calc2} implies
\begin{equation} \label{lemma_implies}
\iota_{[v_{\alpha},v_{\beta}]} \gamma + \cp = -3 \ip{\alpha}\ip{\beta}\ip{\gamma}\omega 
- \bigl (2 \ip{\gamma}d\Phi(\alpha,\beta) + \cp \bigr),
\end{equation}
so  Eq.\ \ref{sub1} becomes
\[
 J_{L'}(\phi_0(\alpha),\phi_0(\beta), \phi_0(\gamma))=
\ip{\alpha}\ip{\beta}\ip{\gamma}\omega + \bigl(\frac{1}{2}
\ip{\gamma}d\Phi(\alpha,\beta) + \cp \bigr).
\]
By definition, $J_{L}(\alpha,\beta,\gamma)=\ip{\alpha}\ip{\beta}\ip{\gamma}\omega$.
Therefore, in this case, the left-hand side of Eq.\ \ref{coherence} is
\begin{equation}\label{LHS}
\phi_1(J_{L}(\alpha,\beta,\gamma)) - J_{L'}(\phi_0(\alpha),\phi_0(\beta), \phi_0(\gamma)) =
-\frac{1}{2}\ip{\gamma}d\Phi(\alpha,\beta) + \cp.
\end{equation}

Since the brackets and homotopy $\Phi$ are skew-symmetric, the right-hand side of Eq.\
\ref{coherence} can be rewritten as:
\begin{equation}\label{RHS}
\bigl (\Phi(\alpha,\lbrac{\beta}{\gamma}) + \cp \bigr)-
\bigl ( \lpbrac{\Phi(\alpha,\beta)}{\phi_{0}(\gamma)}  + \cp \bigr).
\end{equation}
Consider the first term in Eq.\ \ref{RHS}. The Hamiltonian vector field corresponding to
$\lbrac{\beta}{\gamma}=\brac{\beta}{\gamma}$ is
$[v_{\beta},v_{\gamma}]$. Therefore the definition of $\Phi$ implies
\[
\Phi(\alpha,\lbrac{\beta}{\gamma}) + \cp =-\frac{3}{2}\ip{\gamma}\ip{\beta}\ip{\alpha}\omega
+\frac{1}{2}\bigl( \iota_{[v_{\beta},v_{\gamma}]}\alpha + \cp\bigr).
\]
It then follows from Lemma \ref{calc2} (see Eq.\ \ref{lemma_implies}) that
\[
\Phi(\alpha,\lbrac{\beta}{\gamma}) + \cp = -  \ip{\gamma}d\Phi(\alpha,\beta) + \cp.
\]
By definition of the bracket $\lpbrac{\cdot}{\cdot}$, the second term
in Eq.\ \ref{RHS} can be written as
\[
\lpbrac{\Phi(\alpha,\beta)}{\phi_{0}(\gamma)}  + \cp = 
-\frac{1}{2} \ip{\gamma}d\Phi(\alpha,\beta) + \cp.
\]
Hence the coherence condition:
\begin{multline*}
\phi_1(J_{L}(\alpha,\beta,\gamma)) -
J_{L'}(\phi_0(\alpha),\phi_0(\beta), \phi_0(\gamma)) =\\
\Phi(\alpha,\lbrac{\beta}{\gamma}) - 
\lpbrac{\Phi(\alpha,\beta)}{\phi_{0}(\gamma)}  + \cp
\end{multline*}
is satisfied, and $(\phi_{\bullet},\Phi) \maps L_{\infty}(M,\omega)
\to L_{\infty}(C)$ is a morphism of Lie 2-algebras.
\end{proof}

We now focus on a particular sub-Lie 2-algebra of
$L_{\infty}(C)$. The following definition is due to
\v{S}evera \cite{Severa1} and is a generalization of Def.\ \ref{preserve_split_def}:
\begin{definition} \label{courant_conn_preserve_def}
Let $C$ be the exact Courant algebroid given in Prop.\ \ref{2plectic_courant}
equipped with a splitting $s \maps TM \to C$. We say 
a section $e=s(v)+\alpha$ {\bf preserves the splitting} iff 
$\forall v' \in \X(M)$
\[
\tdbrac{e}{s(v')}=s([v,v']).
\]
The subspace of sections that preserve the splitting is 
denoted as $\Gamma(C)^{s}$.
\end{definition}
Note that the twisted Dorfman bracket is used in the above definition
rather than the twisted Courant bracket. Since it satisfies the Jacobi
identity, it gives a `strict' adjoint action on sections of $C$. 
The 2-plectic analogue of Proposition \ref{preserve_lie_alg} is:
\begin{prop}
If $C$ is the exact Courant algebroid given in Proposition \ref{2plectic_courant}
equipped with the splitting $s \maps TM \to C$, then there is
a  Lie $2$-algebra
$L_{\infty}(C)^{s}=(L_{\bullet},\blankbrac, J)$ where:
\begin{itemize}
\item $L_{0}=\Gamma(C)^{s}$,
\item $L_{1}=\cinf(M)$,
\item the differential $L_{1} \stackrel{D}{\to} L_{0}$ is $D=\rho^{\ast}d$,
\item{the bracket $\sbrac{\cdot}{\cdot}$ is 
\[
[e_{1},e_{2}]= \tcbrac{e_{1}}{e_{2}} \quad  \text{in degree 0}
\]
and
\[
  [e,f]=-[f,e]=\half \innerprodp{e}{df} \quad \text{in degree 1},
\]
}
\item the Jacobiator is the linear map $J\maps \Gamma(C) ^{s} \tensor \Gamma(C) ^{s} \tensor \Gamma(C) ^{s} \to \cinf(M)$ defined by
\begin{align*} 
J(e_{1},e_{2},e_{3}) &= -T(e_{1},e_{2},e_{3})\\
&=-\frac{1}{6}
\left(\innerprodp{\tcbrac{e_1}{e_2}}{e_3} +
\innerprodp{\tcbrac{e_3}{e_1}}{e_2} \right. \\
 & \left.\quad + \innerprodp{\tcbrac{e_2}{e_3}}{e_1} \right). 
\end{align*}
\end{itemize}
\end{prop}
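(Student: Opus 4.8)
The plan is to show that the Lie 2-algebra structure of $L_{\infty}(C)$ from Theorem~\ref{courant_L2A} restricts to the subcomplex $\cinf(M) \stackrel{D}{\to} \Gamma(C)^{s}$, so that $L_{\infty}(C)^{s}$ is a sub-Lie 2-algebra of $L_{\infty}(C)$ and hence, carrying the same differential, brackets, and Jacobiator, is itself a Lie 2-algebra. Since the coherence equation~(\ref{big_J}) and the remaining structural identities of Definition~\ref{L2A} are universally quantified equalities already known to hold in $L_{\infty}(C)$, it suffices to verify that the structure maps close on the subcomplex: that $\Gamma(C)^{s}$ is a linear subspace of $\Gamma(C)$, that $\im D \subseteq \Gamma(C)^{s}$, and that $\Gamma(C)^{s}$ is closed under the degree-0 bracket $\tcbrac{\cdot}{\cdot}$. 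The degree-1 bracket $[e,f]=\half\innerprodp{e}{df}$ and the Jacobiator both take values in $L_{1}=\cinf(M)$, which is unchanged, so no further closure is required there.

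First I would record that $\Gamma(C)^{s}$ is a linear subspace: the defining condition $\tdbrac{e}{s(v')}=s([v,v'])$ of Definition~\ref{courant_conn_preserve_def} is linear in $e=s(v)+\alpha$, since $\tdbrac{\cdot}{s(v')}$ is linear in its first slot and $v\mapsto[v,v']$ (with $v=\rho(e)$ read off from the anchor) is linear; thus $\Gamma(C)^{s}$ is the kernel of a linear map. Next I would check that $\im D \subseteq \Gamma(C)^{s}$. For $f\in\cinf(M)$ the section $Df=\rho^{\ast}df$ has vanishing anchor and cotangent part $df$, so applying the twisted Dorfman bracket formula~(\ref{twist_bracket_dorf}) directly gives $\tdbrac{Df}{s(v')}=-\iota_{v'}d(df)=0=s([0,v'])$, confirming that $Df$ preserves the splitting.

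The main obstacle is closure under the degree-0 bracket, and here one must be careful, since preservation is phrased via the twisted Dorfman bracket $\tdbrac{\cdot}{\cdot}$ while the Lie 2-algebra bracket is the twisted Courant bracket $\tcbrac{\cdot}{\cdot}$. The strategy is to first establish closure under the Dorfman bracket and then transfer. For $e_{1},e_{2}\in\Gamma(C)^{s}$ with anchors $v_{1},v_{2}$, the Leibniz (Jacobi) identity of the Dorfman bracket, namely axiom (1) of Definition~\ref{alt_courant_algebroid}, yields $\tdbrac{\tdbrac{e_{1}}{e_{2}}}{s(v')}=\tdbrac{e_{1}}{\tdbrac{e_{2}}{s(v')}}-\tdbrac{e_{2}}{\tdbrac{e_{1}}{s(v')}}$. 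Applying preservation of $e_{1}$ and $e_{2}$ twice, followed by the Jacobi identity for vector fields, collapses the right-hand side to $s([[v_{1},v_{2}],v'])$; since the anchor of $\tdbrac{e_{1}}{e_{2}}$ equals $[v_{1},v_{2}]$ by axiom (2), this shows $\tdbrac{e_{1}}{e_{2}}\in\Gamma(C)^{s}$. This mirrors the symplectic argument of Proposition~\ref{preserve_lie_alg}, with the Dorfman Leibniz identity playing the role there played by the Jacobi identity of the algebroid bracket. Finally, rearranging~(\ref{dorfman}) as $\tcbrac{e_{1}}{e_{2}}=\tdbrac{e_{1}}{e_{2}}-\half D\innerprodp{e_{1}}{e_{2}}$ exhibits the Courant bracket as a difference of an element of $\Gamma(C)^{s}$ (just established) and an element of $\im D\subseteq\Gamma(C)^{s}$; because $\Gamma(C)^{s}$ is a linear subspace, $\tcbrac{e_{1}}{e_{2}}\in\Gamma(C)^{s}$. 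With all three structure maps shown to close, $L_{\infty}(C)^{s}$ is a sub-Lie 2-algebra of $L_{\infty}(C)$ and hence a Lie 2-algebra, completing the proof.
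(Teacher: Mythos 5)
Your proposal is correct and follows essentially the same route as the paper: both verify that $D$ lands in $\Gamma(C)^{s}$ via $\tdbrac{df}{s(v')}=0$, use the Jacobi identity of the twisted Dorfman bracket together with Eq.~\ref{dorfman} to get closure under the twisted Courant bracket, and then inherit the Lie 2-algebra axioms from Theorem~\ref{courant_L2A}. Your reorganization (Dorfman-closure first, then transfer by linearity) is just a mild repackaging of the paper's direct computation of $\tdbrac{\tcbrac{e_{1}}{e_{2}}}{s(v')}$.
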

\begin{proof}
  Let $v'$ be a vector field on $M$. By the definition
  of the twisted Dorfman bracket (Eq.\ \ref{twist_bracket_dorf}), it follows
  that $\tdbrac{df}{s(v')}=0$ $\forall f\in \cinf(M)$. Hence the
  complex $L_{\bullet}$ is well-defined. 
 We now show that
  $\Gamma^{s}(C)$ is closed under the twisted Courant
  bracket.
Suppose $e_{1}$ and $e_{2}$ are sections preserving the
splitting. Let $e_{i}=s(v_{i})+\alpha_{i}$. Since the twisted Dorfman
bracket and the Lie bracket of vector fields satisfy the Jacobi
identity, we have:
\[
\tdbrac{\tdbrac{e_{1}}{e_{2}}}{s(v')}=  s([[v_{1},v_{2}],v']).
\]
From  Eq.\ \ref{dorfman}, we have the identity:
\[
\tcbrac{e_{1}}{e_{2}}=\tdbrac{e_{1}}{e_{2}} - \half d\innerprodp{e_{1}}{e_{2}}.
\]
Therefore:
\begin{align*}
\tdbrac{\tcbrac{e_{1}}{e_{2}}}{s(v')} &= 
\tdbrac{\tdbrac{e_{1}}{e_{2}}}{s(v')} - \half \tdbrac{d\innerprodp{e_{1}}{e_{2}}}{s(v')}\\
&=s([[v_{1},v_{2}],v']).
\end{align*}
It follows from Theorem \ref{courant_L2A} that the Lie 2-algebra
axioms are satisfied.
\end{proof}

This next result is essentially a corollary of Thm.\
\ref{main_thm}. However, it is important since
it is the 2-plectic analogue of Prop.\
\ref{lie_alg_iso}. 
\begin{prop} \label{lie_2_alg_iso}
$L_{\infty}(M,\omega)$ and $L_{\infty}(C)^{s}$ are isomorphic
as Lie 2-algebras.
\end{prop}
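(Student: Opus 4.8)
The plan is to realize the desired isomorphism as the corestriction of the morphism $(\phi_{\bullet},\Phi)\maps L_{\infty}(M,\omega)\to L_{\infty}(C)$ built in the proof of Theorem \ref{main_thm}, where $\phi_{0}(\alpha)=s(v_{\alpha})+\alpha$ and $\phi_{1}=\id$. The crux is the 2-plectic analogue of the fact (used in Prop.\ \ref{lie_alg_iso}) that $s(v)+f$ preserves the splitting iff $v=v_{f}$: namely, that the image of $\phi_{0}$ is exactly the space $\Gamma(C)^{s}$ of sections preserving the splitting.

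To establish this I would take an arbitrary section $e=s(v)+\alpha$ and compute, for any $v'\in\X(M)$, the twisted Dorfman bracket appearing in Definition \ref{courant_conn_preserve_def} via Eq.\ \ref{twist_bracket_dorf}, in which the Lie-derivative term vanishes because the second entry $s(v')$ has trivial 1-form part:
\[
\tdbrac{s(v)+\alpha}{s(v')}=s([v,v'])-\iota_{v'}d\alpha-\iota_{v'}\iota_{v}\omega
=s([v,v'])-\iota_{v'}\bigl(d\alpha+\iota_{v}\omega\bigr).
\]
By Definition \ref{courant_conn_preserve_def}, $e$ preserves the splitting iff $\iota_{v'}\bigl(d\alpha+\iota_{v}\omega\bigr)=0$ for every $v'$, and since a form annihilated by all interior products vanishes this is equivalent to $d\alpha=-\iota_{v}\omega$. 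Comparing with Definition \ref{hamiltonian}, this says precisely that $\alpha$ is Hamiltonian with (unique) Hamiltonian vector field $v=v_{\alpha}$. Hence $\Gamma(C)^{s}=\{\,s(v_{\alpha})+\alpha \mid \alpha\in\ham\,\}=\im\phi_{0}$.

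It then follows that $\phi_{0}\maps\ham\to\Gamma(C)^{s}$ is a linear bijection: surjectivity is the displayed description of $\Gamma(C)^{s}$, and injectivity follows because applying the anchor $\rho$ to $\phi_{0}(\alpha)$ returns $v_{\alpha}$, which together with the unique decomposition $e=s(v)+\alpha$ of sections recovers $\alpha$. As $\phi_{1}=\id$ is a bijection, $\phi_{\bullet}$ is an isomorphism of underlying chain complexes; and since $\phi_{0}$ takes values in $\Gamma(C)^{s}$ and $\Phi$ takes values in $\cinf(M)$, the morphism of Theorem \ref{main_thm} factors through the strict inclusion $L_{\infty}(C)^{s}\hookrightarrow L_{\infty}(C)$, yielding a Lie 2-algebra morphism $(\phi_{\bullet},\Phi)\maps L_{\infty}(M,\omega)\to L_{\infty}(C)^{s}$ with invertible underlying chain map. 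To finish I would invoke the standard fact that such a morphism is an isomorphism of Lie 2-algebras, with inverse $(\phi_{\bullet}^{-1},\Phi')$ where $\Phi'(e_{1},e_{2})=-\phi_{1}^{-1}\bigl(\Phi(\phi_{0}^{-1}(e_{1}),\phi_{0}^{-1}(e_{2}))\bigr)$; checking the coherence condition (Eq.\ \ref{coherence}) for $\Phi'$ and that both composites are the identity morphism is routine. The only genuine computation is the twisted Dorfman bracket identity that pins down $\Gamma(C)^{s}$; everything else is bookkeeping, so I do not anticipate a serious obstacle.
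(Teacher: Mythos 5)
Your proposal is correct and follows essentially the same route as the paper: it corestricts the morphism $(\phi_{\bullet},\Phi)$ of Theorem \ref{main_thm} and uses the twisted Dorfman bracket computation $\tdbrac{s(v)+\alpha}{s(v')}=s([v,v'])-\iota_{v'}\bigl(d\alpha+\iota_{v}\omega\bigr)$ to identify $\Gamma(C)^{s}$ with the image of $\phi_{0}$, which is exactly the paper's argument. The only difference is that you spell out the construction of the inverse morphism and its homotopy, which the paper leaves implicit with ``Since this map is also injective, the statement follows.''
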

\begin{proof}
Recall that in Theorem \ref{main_thm} we constructed a morphism of
Lie 2-algebras given by a chain map $\phi_{\bullet} \maps L_{\infty}(M,\omega) \to
L_{\infty}(C)$:
\[
\phi_{0}(\alpha)=s(v_{\alpha}) + \alpha, \quad \phi_{1}=\id,
\]
and a homotopy $\Phi \maps \ham \tensor \ham \to \cinf(M)$:
\[
\Phi(\alpha,\beta)=-\half \innerprodm{v_{\alpha} + \alpha}{v_{\beta} + \beta}.
\]
Let $v' \in \X(M)$ and $e = s(v) + \alpha$. 
By definition of the twisted Dorfman bracket, $\tdbrac{e}{s(v')}=s[v,v']$ if and only if 
$\iota_{v'} \bigl(d\alpha +\iota_{v}\omega \bigr)=0$.
Hence a section  of $C$
preserves the splitting if and only if it lies in the image of the
chain map $\phi_{\bullet}$. Since this map is also injective, the statement follows.
\end{proof}

\section{Categorified prequantization} \label{2plectic_quant} 
In this section, we introduce a prequantization scheme for 2-plectic
manifolds, and provide a brief exposition on the higher geometric
structures which naturally appear. The relationship between the
Courant algebroid $C$ and the 2-plectic manifold $(M,\omega)$ has an
interesting interpretation when we consider the special case of
prequantized 2-plectic manifolds.  In particular, we will see that $C$
acts as the 2-plectic analogue of the Atiyah algebroid $A$ described
in Sec.\ \ref{symp_prequant}. 

\begin{definition}
A 2-plectic manifold $(M,\omega)$ admits a {\bf prequantization} iff
the cohomology class $[\omega]$ lies in the image
of the map $H^{3}(M,\Z) \to H^{3}(M,\R) \cong H^{3}_{\mathrm{DR}}(M)$. 
\end{definition}
Let $(M,\omega)$ be prequantizable. By using the maps $c \maps
H^{2}(M,D_{2}^{\bullet})\epi H^{3}(M,\Z)$ and $\kappa \maps
H^{2}(M,D_{2}^{\bullet}) \to \cOmega^{3}(M)$ discussed in Section
\ref{Deligne_sec}, we can find a Deligne class in
$H^{2}(M,D_{2}^{\bullet})$ whose 3-curvature is $\omega$. By
definition, a representative of this class defined on a good cover
$\{U_{i}\}$ is a set of 2-forms $\{B_{i} \in
\Omega^{2}(U_{i})\}$, a set of 1-forms $\{A_{ij}
\in\Omega^{1}(U_{ij}) \}$
 on double intersections, and a set of $U(1)$-valued functions
$\{g_{ijk} \maps U_{ijk} \to U(1)\}$ on triple intersections such that
\begin{equation} \label{2-cocycle}
\begin{split}
\omega&=dB_{i} ~ \text{on $U_{i}$},\\
B_{j} -B_{i} &= dA_{ij} ~ \text{on $U_{ij}$},\\
A_{jk} - A_{ik} + A_{ij} &= g^{-1}_{ijk}dg_{ijk} ~ \text{on $U_{ijk}$},\\
g_{jkl}g^{-1}_{ikl}g_{ijl}g^{-1}_{ijk}&=1 ~ \text{on $U_{ijkl}$}.
\end{split}
\end{equation}
A 2-plectic manifold equipped with such a Deligne 2-cocycle is said to
be \textbf{prequantized}.  We can use the 2-cocycle to construct the
Courant algebroid $C$ over $M$ equipped with a splitting
given locally by the 2-forms $\{B_{i}\}$.  However, the
fact that the cocycle data includes the \v{C}ech 2-cocycle $\{g_{ijk} \maps
U_{ijk} \to U(1)\}$ implies that there is an additional geometric
structure present on $M$. We would expect $C$ to be related to
this structure just as the Atiyah algebroid $A$ described in
Section \ref{symp_prequant} is related to its associated principal
bundle.

The geometric object we associate to the \v{C}ech 2-cocycle $\{g_{ijk}
\maps U_{ijk} \to U(1)\}$ is a $U(1)$-gerbe. The precise definition of
a gerbe is rather technical and can be found in Brylinski
\cite{Brylinski:1993} or Moerdijk \cite{Moerdijk:2002}.  However, in
what follows we hope to provide some intuitive geometric understanding
of these structures and motivate their proposed role in the
prequantization of 2-plectic manifolds. Additional details can be found
in Sections 5.5 and 7.2 of \cite{CLR_thesis}.

\subsection{$U(1)$-gerbes as stacks} \label{gerbes} For the purpose of
comparison, it is helpful to momentarily return to the `1-plectic'
case. Instead of associating a \v{C}ech 1-cocycle to a principal
$U(1)$- bundle $P \stackrel{\pi}{\to} M$, we can just as well
associate the cocycle to the bundle's sheaf of sections $\sh{P}$. The
sheaf $\sh{P}$ is a $\sh{U(1)}$-torsor. This means that the
sheaf of groups $\sh{U(1)}$ acts on $\sh{P}$ in such a way so that for
all $x \in M$ there exists a neighborhood $x\in U$ and an equivariant
isomorphism of sheaves $\sh{P}_{U}
\stackrel{\sim}{\to}\sh{U(1)}_{U}$. In other words, $\sh{P}$ is
locally isomorphic to the trivial torsor $\sh{U(1)}$. We recover the
\v{C}ech 1-cocycle from $\sh{P}$ in the obvious way: Choose a good
cover $\{U_{i}\}$ of $M$ such that $\sh{P}_{U_{i}}$ is isomorphic to
$\sh{U(1)}$ as a sheaf over $U_{i}$. Choose sections $\sigma_{i} \in
\sh{P}_{U_{i}}$, and consider the restrictions $\sigma_{i}
\vert_{U_{ij}}$, $\sigma_{j} \vert_{U_{ij}} \in \sh{P}(U_{ij})$. There exists sections $g_{ij} \in
\sh{P}(U_{ij})$ such that $ \sigma_{j}=\sigma_{i} \cdot g_{ij}$ on
$U_{ij}$ which obey the usual cocycle condition on $U_{ijk}$.

Now let us consider the higher analogue. Just as the $\sh{U(1)}$ torsor
$\sh{P}$ is a particular kind of sheaf, a $U(1)$-gerbe is a particular
kind of stack. A stack $\st{S}$ over $M$ is, very roughly, a
categorified sheaf over $M$. To every open set $U$ of $M$, one assigns
a groupoid $\st{S}(U)$. To every inclusion of open sets $V
\stackrel{\iota}{\to} U$, one assign a functor $ \st{S}(\iota) \maps
\st{S}(U) \to \st{S}(V)$, which pulls back, or `restricts', objects
and morphisms over $U$ to those over $V$. However,
given a composition of inclusions:
\[
\xymatrix{
        W \ar[dr]_{\iota_{VW}} \ar[rr]^{\iota_{UW}=\iota_{UV} \circ \iota_{VW}} &   & U \\
        &V \ar[ur]_{\iota_{UV}}&\\}
\]
one requires the corresponding functors $\st{S}(\iota_{UW})$ and
$\st{S}(\iota_{VW}) \circ \st{S}(\iota_{UV})$ to be equivalent via a
coherent natural isomorphism instead of being equal.  Just as the sheaf
axioms involve gluing together local sections (i.e.\ elements of 
sets), the axioms for a stack involve gluing together objects and
morphisms of groupoids. 

Perhaps the most intuitive example of a stack is the classifying stack
$\st{B}U(1)$, which assigns to every open set $U \subseteq M$ the groupoid
of principal $U(1)$-bundles over $U$. This stack has nice extra
properties. For example, for any open set $U$ and any two bundles
$P_{1},P_{2} \in \st{B}U(1)(U)$, there exists an open subset $V
\subseteq U$ such that the pullback bundles $P_{i}\vert_{V}$ are
isomorphic as objects in $\st{B}U(1)(V)$. Moreover, $V$ can be chosen
so that the automorphism groups $\Aut(P_{i} \vert_{V})$ are isomorphic
to the group of $U(1)$-valued functions $\sh{U(1)}(V)$.  Roughly,
these are the defining properties of a $U(1)$-gerbe. We may think of a
$U(1)$-gerbe $\st{G}$ over $M$ as a stack
with the additional property that there exists an open cover
$\{U_{i}\}$ of $M$ such that for all open sets $V \subseteq U_{i}$,
the groupoid $\st{G}(V)$ is equivalent (as a category) to
$\st{B}U(1)(V)$.

One obtains a \v{C}ech 2-cocycle from a $U(1)$-gerbe $\st{G}$ in the
following way: Choose a good open cover $\{U_{i}\}$ of $M$ such that
there exists objects $P_{i} \in \st{G}(U_{i})$, isomorphisms $u_{ij}
\maps P_{i} \vert_{U_{ij}} \to P_{j} \vert_{U_{ij}}$, and isomorphisms
$\Aut(P_{i}
\vert_{V}) \cong \sh{U(1)}(V)$ for all open subsets $V \subset
U_{i}$. Such a cover exists since $\st{G}$ is locally isomorphic to
$\st{B}U(1)$. By restricting these objects and isomorphisms to triple
intersections $U_{ijk}$, we obtain an automorphism $u^{-1}_{ik} u_{ij}
u_{jk}$ of $P_{k} \vert_{U_{ijk}}$. This gives a $U(1)$-valued
function $g_{ijk} \in \sh{U(1)}(U_{ijk})\cong \Aut(P_{k}
\vert_{U_{ijk}})$, which satisfies the cocycle condition on quadruple intersections.
One can show that the cohomology class given by the $g_{ijk}$ is
invariant with respect to all choices made in this construction. In
particular, $\st{B}U(1)$ gives the trivial class in
$H^{2}(M,\sh{U(1)})$. We refer the reader to Brylinski
\cite{Brylinski:1993} for the reverse construction which produces a
gerbe from a 2-cocycle. 

Since the `sections' of a $U(1)$-gerbe are locally principal
$U(1)$-bundles, they can be equipped with connections (local 1-forms) 
which give their curvatures (local 2-forms). This fact leads to the
notion of equipping the gerbe with a connection and curving.
One can proceed further and show that gerbes equipped with such structures
correspond to the aforementioned Deligne 2-cocycles (\ref{2-cocycle}). The precise
definitions of connections and curvings and their relationships to
Deligne cohomology are somewhat lengthy and technical, so we, again,
refer the reader to \cite{Brylinski:1993} for the details.

\subsection{Exact Courant algebroids as higher Atiyah algebroids} \label{higher_atiyah}
Recall that in Sec.\ \ref{symp_prequant}, we discussed how the
transitive Lie algebroid $A$ on a prequantized symplectic manifold is
isomorphic to the Atiyah algebroid associated to a principal
$U(1)$-bundle $P \to M$ equipped with a connection. Sections of the
Atiyah algebroid are $U(1)$-invariant vector fields on the total space
of the bundle. Therefore they act as infinitesimal $U(1)$-equivariant
diffeomorphisms on $P$. Prop.\ \ref{lie_2_alg_iso} implies that the
quantized Poisson algebra is the subalgebra of infinitesimal
diffeomorphisms that preserve the connection on $P$. Analogously, the
above discussion and Prop.\ \ref{2plectic_courant} imply that the
Courant algebroid $C$ on a prequantized 2-plectic manifold is
associated to a $U(1)$-gerbe $\mathcal{G} \to M$ equipped with a
connection and curving.  Furthermore, Prop.\ \ref{lie_2_alg_iso}
suggests that we interpret the Lie 2-algebra $L_{\infty}(C)^{s}$ as
the quantization of the Lie 2-algebra of `observables'
$L_{\infty}(M,\omega)$.  Clearly, these results further support the
idea that exact Courant algebroids play the role of higher Atiyah
algebroids \cite{Bressler-Chervov,Gualtieri:2007}.  However,
interpreting $L_{\infty}(C)^{s}$ as `operators' or as infinitesimal
symmetries of $\mathcal{G}$ is still a work in progress. 

One possible strategy for addressing these issues is to work 
with principal $BU(1)$ 2-bundles and Lie groupoids rather than
$U(1)$-gerbes and manifolds \cite{BaezSchreiber:2005,Bartels:2004}. 
$BU(1)$ is the one object Lie groupoid
\[
U(1) \rightrightarrows \ast
\]       
It is also an example of a strict Lie 2-group i.e.\ a Lie groupoid
that is equipped with a strict (and smooth) monoidal structure such
that all objects have inverses. The action of a Lie 2-group on a Lie
groupoid is the higher analogue of the action of a Lie group on a
manifold.  The correct morphisms between Lie groupoids are not smooth
functors, but rather `Morita maps', or `bibundles'. (See Def.\ 3.25 in
\cite{Lerman:2009}.) Since any manifold $M$ is a trivial Lie groupoid
$M \rightrightarrows M$, one can speak of a Lie groupoid morphism $M
\to BU(1)$.  By unfolding the definition of a bibundle, one can show
that such a morphism corresponds to a principal $U(1)$-bundle over
$M$. In other words, sections of the trivial principal $BU(1)$ 2-bundle
over $M$ correspond to principal $U(1)$-bundles over $M$, just as
sections of the trivial principal $U(1)$-bundle over $M$ corresponds
to $U(1)$-valued functions. Bartels \cite{Bartels:2004} showed that principal
$BU(1)$ 2-bundles are classified by the usual \v{C}ech cohomology
$H^{2}(M,\sh{U(1)})$. Given a 2-cocycle, the
corresponding $U(1)$-gerbe is the stack of sections of the
corresponding 2-bundle.  One can go further and equip a principal
$BU(1)$ 2-bundle with a `2-connection'.  These correspond to Deligne
2-cocycles \cite{BaezSchreiber:2005}.  One could try to understand how sections
of an exact Courant algebroid over a prequantized 2-plectic manifold
correspond to $BU(1)$-invariant vector fields on a principal $BU(1)$
2-bundle. This would be in complete analogy with the symplectic
case. We will, in fact, see in the next section that there is a
relationship between the Lie 2-algebra $L_{\infty}(C)^{s}$ and the Lie
2-algebra that integrates to $BU(1)$.

\section{Central extensions of Lie 2-algebras} \label{2plectic_extend}
In this section, we push the analogy between prequantization and
categorified prequantization further by constructing the 2-plectic
version of the Kostant-Souriau central extension, which we discussed
in Sec.\ \ref{symp_extend}.
First some preliminary definitions:
\begin{definition}
A Lie 2-algebra $(L_{\bullet},\blankbrac, J)$ is {\bf trivial} iff $L_{1}=0$.
\end{definition}
Any Lie algebra $\g$ gives a trivial Lie
2-algebra whose underlying  complex is
\[
0 \to \g.
\]
In particular, the Lie algebra of Hamiltonian vector fields $\Xham$ is a trivial Lie 2-algebra.
\begin{definition}
A Lie 2-algebra $(L_{\bullet},\blankbrac, J)$
is {\bf abelian} iff $\blankbrac=0$ and $J=0$.
\end{definition}
Hence an abelian Lie 2-algebra is just a 2-term chain complex. 

\begin{definition} \label{extend_def}
If $L$, $L'$, and $L''$ are Lie
2-algebras whose underlying chain complexes are
$L_{\bullet}$, $L_{\bullet}'$, and $L_{\bullet}''$, respectively, then
$L'$ is a {\bf strict extension} of $L''$ by $L$
iff there exists Lie 2-algebra morphisms
\[
(\phi_{\bullet},\Phi) \maps L \to L', \quad
(\phi'_{\bullet},\Phi') \maps L' \to L''
\]
such that 
\[
L_{\bullet} \stackrel{\phi_{\bullet}}{\to} L'_{\bullet} \stackrel{\phi'_{\bullet}}{\to} L''_{\bullet}
\]
is a short exact sequence of complexes. We say $L'$ is a {\bf strict central
  extension} of $L''$ iff
$L'$ is a strict extension of
$L''$ by $L$ and
\[
\left [ \im \phi_{\bullet}, L'_{\bullet} \right]' =0.
\]
\end{definition}
These definitions will be sufficient for our discussion here. However,
they are, in general, too strict. For example, one can have homotopies
between morphisms between Lie 2-algebras, and therefore we should
consider sequences that are only exact up to homotopy as
``exact''. 
In what
follows, by an extension we mean a strict extension in the sense of
Def. \ref{extend_def}.

We would like to understand how $L_{\infty}(M,\omega)$ is a central extension
of $\Xham$ as a Lie 2-algebra. Our first two results are quite general
and hold for any 2-plectic manifold $(M,\omega)$.
\begin{prop} \label{gen_extend}
If $(M,\omega)$ is a 2-plectic manifold, then
the Lie 2-algebra $L_{\infty}(M,\omega)$ is a central extension of the
trivial Lie 2-algebra $\Xham$ by the abelian Lie 2-algebra
\[
\cinf(M) \stackrel{d}{\to} \cOmega^{1}(M),
\]
consisting of smooth functions and closed 1-forms.
\end{prop}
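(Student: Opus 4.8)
The plan is to realize $L_\infty(M,\omega)$ as a strict central extension in the sense of Definition \ref{extend_def} by writing down the two structural morphisms explicitly and checking that each is a \emph{strict} morphism of Lie 2-algebras (so that every homotopy and every coherence term vanishes). Write $L$ for the abelian Lie 2-algebra $\cinf(M)\stackrel{d}{\to}\cOmega^{1}(M)$, $L'=L_\infty(M,\omega)$ with complex $\cinf(M)\stackrel{d}{\to}\ham$, and $L''=\Xham$ for the trivial Lie 2-algebra $0\to\Xham$. For the inclusion $(\phi_\bullet,\Phi)\maps L\to L'$ I would take $\phi_1=\id$ on $\cinf(M)$ and let $\phi_0\maps\cOmega^{1}(M)\to\ham$ be the inclusion of closed 1-forms into Hamiltonian 1-forms; this is legitimate because every closed 1-form is Hamiltonian with zero Hamiltonian vector field. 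For the quotient $(\phi'_\bullet,\Phi')\maps L'\to L''$ I would take $\phi'_1=0$ and $\phi'_0\maps\ham\to\Xham$, $\phi'_0(\alpha)=v_\alpha$. Both $\phi_\bullet$ and $\phi'_\bullet$ are chain maps, since the Hamiltonian vector field of an exact 1-form vanishes.

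The next step is to check that both homotopies may be taken to be zero and that the coherence condition (Eq.\ \ref{coherence}) holds. For the inclusion, the point is that $\phi_0(\alpha)$ has vanishing Hamiltonian vector field for closed $\alpha$, so $[\phi_0(\alpha),\phi_0(\beta)]'=\brac{\alpha}{\beta}=\ip{\beta}\ip{\alpha}\omega=0$; since $L$ is abelian, both of the chain maps that $\Phi$ must connect vanish, so $\Phi=0$ works. For the quotient, $\Phi'$ lands in $L''_1=0$ and is therefore automatically zero, while Proposition \ref{semi-bracket} gives $\phi'_0(\brac{\alpha}{\beta})=v_{\brac{\alpha}{\beta}}=[v_\alpha,v_\beta]=[\phi'_0(\alpha),\phi'_0(\beta)]$, so the two chain maps agree on the nose. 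Both coherence conditions then reduce to $0=0$, since $J_L=0$ (abelian), $J_{L''}=0$ (trivial), $\phi'_1=0$ annihilates $J_{L'}$, and $J_{L'}(\phi_0(\alpha),\phi_0(\beta),\phi_0(\gamma))=\ip{\alpha}\ip{\beta}\ip{\gamma}\omega$ vanishes on closed forms.

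Finally I would verify that $L_\bullet\to L'_\bullet\to L''_\bullet$ is a short exact sequence of complexes and that $[\im\phi_\bullet,L'_\bullet]'=0$. In degree 1 the sequence is $\cinf(M)\stackrel{\id}{\to}\cinf(M)\stackrel{0}{\to}0$, which is trivially exact; in degree 0 it is $\cOmega^{1}(M)\stackrel{\phi_0}{\to}\ham\stackrel{\phi'_0}{\to}\Xham$, where $\phi_0$ is injective and $\phi'_0$ is surjective by the definition of $\Xham$. Centrality is immediate, since the bracket of $L'$ is trivial except in degree 0 and there $\brac{\alpha}{\beta}=\ip{\beta}\ip{\alpha}\omega=0$ whenever $\alpha$ is closed. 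The one step I would treat most carefully---and the only genuine content---is the identity $\im\phi_0=\ker\phi'_0$, namely that a Hamiltonian 1-form $\alpha$ satisfies $v_\alpha=0$ if and only if $d\alpha=-\ip{\alpha}\omega=0$, i.e.\ if and only if $\alpha$ is closed. This is the exact 2-plectic analogue of the statement that the kernel of $f\mapsto v_f$ in the symplectic Kostant--Souriau sequence (Eq.\ \ref{KS1}) consists of the closed $0$-forms, and it is what makes both the degree-0 exactness and the centrality work.
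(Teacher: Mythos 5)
Your proposal is correct and follows essentially the same route as the paper: the same short exact sequence of complexes with $\jmath$ and $p(\alpha)=v_\alpha$ as the structural maps, strict morphisms with vanishing homotopies, and centrality from the fact that closed 1-forms have vanishing Hamiltonian vector field. You simply spell out the coherence checks and the identification $\im\phi_0=\ker\phi'_0$ in more detail than the paper does.
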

\begin{proof}
Consider the following short exact sequence of complexes:
\begin{equation}\label{ses}
    \xymatrix{
        \cOmega^1(M) \ar[r]^{\jmath} & \ham \ar[r]^{p}  & \Xham\\
        \cinf(M) \ar[u]^{d}\ar[r]^{\id}  & \cinf(M) \ar[u]^{d}
        \ar[r]& 0 \ar[u]}
\end{equation}
The map $\jmath \maps \cOmega^{1}(M) \to \ham$ is the inclusion, and
\[
p \maps \ham \to \Xham, \quad p(\alpha)=v_{\alpha}
\]
takes a Hamiltonian 1-form to its corresponding vector field. It
follows from Prop. \ref{semi-bracket} that $p$ preserves the bracket. In fact,
all of the horizontal chain maps give strict Lie 2-algebra morphisms
(i.e.\ all homotopies are trivial). The Hamiltonian vector field
corresponding to a closed 1-form is zero. Thus,
if $\alpha$ is closed, then for all $\beta \in \ham$
we have $\sbrac{\alpha}{\beta}_{L_{\infty}(M,\omega)}=\brac{\alpha}{\beta}=0$.
Hence $L_{\infty}(M,\omega)$ is a central extension of $\Xham$. 
\end{proof}

\begin{prop} \label{L2A_extend_1} 
Let $(M,\omega)$ be a 2-plectic manifold. Given $x\in M$, there
is a Lie 2-algebra  $L_{\infty}(\Xham,x)=(L_{\bullet},\blankbrac,J_{x})$ where
\begin{itemize}
\item{$L_{0}=\Xham$,}
\item{$L_{1}=\R$,}
\item{the differential $L_{1} \stackrel{d}{\to} L_{0}$ is trivial ($d=0$),}
\item{the bracket $\blankbrac$ is the Lie bracket on $\Xham$
in degree 0 and trivial in all other degrees}
\item{the Jacobiator is the linear map 
\[
J_{x} \maps \Xham \tensor
    \Xham \tensor \Xham \to \R
\] 
defined by
\[
J_{x}(v_{1},v_{2},v_{3})= \ip{1}\ip{2}\ip{3}\omega\vert_{x}.
\]
}
\end{itemize}
Moreover, $J_{x}$ is a 3-cocycle in the Chevalley-Eilenberg cochain
complex \linebreak $\Hom(\Lambda^{\bullet}\Xham, \R)$.
\end{prop}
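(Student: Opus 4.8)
The plan is to exploit the very special shape of the proposed Lie 2-algebra: the underlying complex is $\R\to\Xham$ with vanishing differential, the bracket is the ordinary Lie bracket in degree $0$ and zero in all higher degrees, and the one-dimensional summand $\R$ sits in degree $1$. I would first dispose of the cheap axioms of Definition \ref{L2A}. Skew-symmetry of the bracket is that of the Lie bracket on $\Xham$, and skew-symmetry of $J_{x}$ is immediate from the anticommutativity $\iota_{u}\iota_{w}=-\iota_{w}\iota_{u}$ of contractions, which makes $\ip{1}\ip{2}\ip{3}\omega$ totally antisymmetric in $v_{1},v_{2},v_{3}$. Because every differential vanishes, the bracket is automatically a chain map, and the condition that $J_{x}$ be a chain homotopy from $(x,y,z)\mapsto[x,[y,z]]$ to $(x,y,z)\mapsto[[x,y],z]+[y,[x,z]]$ collapses to the requirement that these two chain maps coincide: in degree $0$ this is the Jacobi identity in $\Xham$, and in the higher degrees both sides vanish because any bracket involving the degree-$1$ summand is zero.

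The crux is the coherence law (Eq.\ \ref{big_J}) together with the cocycle claim, and the key observation is that for a Lie 2-algebra of this shape the two are literally the same identity. In Eq.\ \ref{big_J} every term of the form $[\,\cdot\,,J(\cdots)]$ has one argument lying in the degree-$1$ space $\R$ and hence vanishes; discarding these five terms and writing $v_{0},v_{1},v_{2},v_{3}$ for the four arguments leaves
\[
J_{x}(v_{0},[v_{1},v_{2}],v_{3})+J_{x}(v_{0},v_{2},[v_{1},v_{3}]) = J_{x}(v_{0},v_{1},[v_{2},v_{3}])+J_{x}([v_{0},v_{1}],v_{2},v_{3})+J_{x}(v_{1},[v_{0},v_{2}],v_{3})+J_{x}(v_{1},v_{2},[v_{0},v_{3}]).
\]
Using the total skew-symmetry of $J_{x}$ to pull each bracketed entry into the first slot, I would verify that this is exactly the vanishing of the Chevalley--Eilenberg differential of $J_{x}$ on $v_{0},v_{1},v_{2},v_{3}$, the representation being trivial so that $d_{\mathrm{CE}}J_{x}=\sum_{i<j}(-1)^{i+j}J_{x}([v_{i},v_{j}],\ldots)$ carries only bracket terms. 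Thus establishing that $L_{\infty}(\Xham,x)$ is a Lie 2-algebra and establishing that $J_{x}$ is a $3$-cocycle become one and the same verification.

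To secure that single identity I would argue directly from the closedness of $\omega$. The invariant formula for the exterior derivative reads $d\omega(v_{0},v_{1},v_{2},v_{3})=(A)+(B)$, where $(A)=\sum_{i}(-1)^{i}v_{i}\bigl(\omega(v_{0},\ldots,\widehat{v_{i}},\ldots,v_{3})\bigr)$ is the derivative part and $(B)=\sum_{i<j}(-1)^{i+j}\omega([v_{i},v_{j}],v_{0},\ldots,\widehat{v_{i}},\ldots,\widehat{v_{j}},\ldots,v_{3})$ is the bracket part; up to an overall sign, $(B)$ evaluated at $x$ is $d_{\mathrm{CE}}J_{x}$. Each $v_{i}$ is Hamiltonian, so $\L_{v_{i}}\omega=\ip{i}d\omega+d\ip{i}\omega=0$, which lets me rewrite every Lie derivative $v_{i}(\omega(\cdots))$ as a sum of terms $\omega([v_{i},v_{k}],\ldots)$. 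Substituting into $(A)$ and collecting by bracket pair $\{v_{i},v_{k}\}$ yields $(A)=-2(B)$; combined with $d\omega=0$, i.e.\ $(A)+(B)=0$, this forces $(B)=0$, the desired cocycle condition. Alternatively, and more cheaply, I could invoke the coherence law of $L_{\infty}(M,\omega)$ from Theorem \ref{semistrict}: its Jacobiator is $J(\alpha,\beta,\gamma)=\ip{\alpha}\ip{\beta}\ip{\gamma}\omega$ and its degree-$1$ space is $\cinf(M)$, so the same five terms $[\,\cdot\,,J(\cdots)]$ drop out of Eq.\ \ref{big_J}; since $v_{\brac{\alpha}{\beta}}=[v_{\alpha},v_{\beta}]$ by Proposition \ref{semi-bracket}, the reduced law is an identity of smooth functions whose value at $x$ is precisely the cocycle condition for $J_{x}$, and surjectivity of $\ham\to\Xham$ lets one realize any quadruple of Hamiltonian vector fields in this way. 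The main obstacle I anticipate is entirely bookkeeping: matching the signs and argument orders of Eq.\ \ref{big_J} against the Chevalley--Eilenberg differential, and, on the direct route, the collection that produces the precise factor in $(A)=-2(B)$.
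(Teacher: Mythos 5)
Your proposal is correct, and it is more self-contained than the paper's own proof, which disposes of the proposition in four sentences by outsourcing both halves to citations: the verification that $J_{x}$ satisfies Eq.\ \ref{big_J} is said to follow ``immediately'' from Theorem \ref{semistrict} (i.e.\ by evaluating the coherence law of $L_{\infty}(M,\omega)$ at the point $x$ --- exactly your second, ``cheaper'' route), and the equivalence between Eq.\ \ref{big_J} and the Chevalley--Eilenberg $3$-cocycle condition is quoted from the Baez--Crans classification theorem (Thm.\ 55 of \cite{HDA6}) rather than checked. What you do differently is, first, to verify by hand that for a skeletal Lie 2-algebra of this shape the terms of the form $[\,\cdot\,,J(\cdots)]$ in Eq.\ \ref{big_J} vanish and that the surviving identity, after using total skew-symmetry of $J_{x}$, is literally $\delta J_{x}=0$; this is a correct short re-derivation of the relevant direction of the Baez--Crans result (a harmless slip: only four such bracket terms drop out, three on the left and one on the right, not five --- the reduced equation you display is nonetheless the right one). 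Second, you offer a direct proof of the cocycle identity from $d\omega=0$ via the Palais formula together with $\L_{v_{i}}\omega=\iota_{v_{i}}d\omega+d\iota_{v_{i}}\omega=0$ for Hamiltonian vector fields; the collection does give $(A)=-2(B)$, hence $0=d\omega=(A)+(B)=-(B)$, so your anticipated bookkeeping obstacle is not a real one. This route buys independence from Theorem \ref{semistrict} and makes visible that closedness of $\omega$ (together with invariance of $\omega$ under Hamiltonian flows) is the actual input, whereas the paper's route is shorter but proves nothing it does not cite.
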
 
\begin{proof}
  We have a bracket defined on a complex with trivial differential
  that satisfies the Jacobi identity ``on the nose''. Hence to show
  $L_{\infty}(\Xham,x)$ is a Lie 2-algebra, it sufficient to show that the
  Jacobiator $J_{x}(v_{1},v_{2},v_{3})$ satisfies Eq.\ \ref{big_J} in
  Def.\ \ref{L2A} for $x \in M$. This follows immediately from Thm.\
  \ref{semistrict}. The classification theorem of Baez and Crans
(Thm. 55 in \cite{HDA6}) implies that $J_{x}$ satisfying Eq.\
\ref{big_J} in the definition of a Lie 2-algebra is equivalent to
$J_{x}$  being a 3-cocycle with values in the trivial representation.
\end{proof}

Recall that in the symplectic case, if the manifold is connected, then 
the Poisson algebra is a central extension of the Hamiltonian vector
fields by the Lie algebra $\u(1) \cong \R$. The categorified analog of
the Lie algebra $\u(1)$ is the abelian Lie 2-algebra $b\u(1)$ whose
underlying chain complex is simply
\[
\R \to 0.
\]
This Lie 2-algebra integrates to the Lie 2-group $BU(1)$
discussed in Sec. \ref{higher_atiyah}.
It is natural to suspect that, under
suitable topological conditions, the abelian Lie algebra $\cinf(M)
\stackrel{d}{\to} \cOmega^{1}(M)$ introduced in Prop. \ref{gen_extend}
is related to $b\u(1)$. 

Let us first assume that the 2-plectic manifold is connected. Note
that the Jacobiator $J_{x}$ of the Lie 2-algebra $L_{\infty}(\Xham,x)$ 
introduced in Prop.\ \ref{L2A_extend_1} depends
explicitly on the choice of $x \in M$. However, if $M$ is connected,
then the cohomology class $J_{x}$ represents as a 3-cocycle does not
depend on $x$. This fact has important implications for $L_{\infty}(\Xham,x)$:

\begin{prop} \label{L2A_extend_2}
If $(M,\omega)$ is a connected 2-plectic manifold and 
$J_{x}$ is the 3-cocycle given in Prop.\ \ref{L2A_extend_1}, then
the cohomology class $[J_{x}]\in H^{3}_{\mathrm{CE}}(\Xham,\R)$
is independent of the choice of $x \in M$. Moreover, given any other point
$y \in M$, the Lie 2-algebras $L_{\infty}(\Xham,x)$ and
$L_{\infty}(\Xham,y)$ are quasi-isomorphic.
\end{prop}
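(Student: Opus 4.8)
The plan is to establish both claims by showing that the difference of the two Jacobiators, $J_x - J_y$, is a coboundary in the Chevalley-Eilenberg complex $\Hom(\Lambda^\bullet \Xham, \R)$, and then leveraging the classification of Baez and Crans to convert this cohomological fact into an explicit quasi-isomorphism. First I would fix two points $x, y \in M$. Since $M$ is connected (and a manifold, hence path-connected), I can choose a smooth path $\gamma$ from $x$ to $y$. The key observation is that for any Hamiltonian vector fields $v_1, v_2, v_3$, the function $g = \iota_{v_1}\iota_{v_2}\iota_{v_3}\omega$ is a smooth real-valued function on $M$, so that $J_x(v_1,v_2,v_3) = g(x)$ and $J_y(v_1,v_2,v_3) = g(y)$. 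The difference $g(y) - g(x) = \int_\gamma dg$ must then be expressed as a Chevalley-Eilenberg coboundary, i.e.\ as $(d\eta)(v_1,v_2,v_3)$ for some $2$-cochain $\eta \in \Hom(\Lambda^2 \Xham, \R)$.

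The natural candidate for the primitive $\eta$ comes from the bracket on Hamiltonian $1$-forms. For each Hamiltonian vector field $v$, choose a Hamiltonian $1$-form $\alpha$ with $v_\alpha = v$; then consider a cochain built from $\brac{\alpha}{\beta} = \ip{\beta}\ip{\alpha}\omega$ evaluated against the path or at the endpoints. More precisely, since $d\brac{\alpha}{\beta} = -\iota_{[v_\alpha,v_\beta]}\omega$ by Prop.\ \ref{semi-bracket}, the $2$-form data interpolating between $x$ and $y$ should produce exactly the shift in $J$. The main computation is to verify, using the Jacobi-up-to-exact-form identity in Prop.\ \ref{semi-bracket}(3), namely $\brac{\alpha}{\brac{\beta}{\gamma}} - \brac{\brac{\alpha}{\beta}}{\gamma} - \brac{\beta}{\brac{\alpha}{\gamma}} = d\,\ip{\alpha}\ip{\beta}\ip{\gamma}\omega$, that integrating or evaluating the bracket terms yields the Chevalley-Eilenberg differential applied to a suitable $\eta$. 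This is where the well-definedness subtlety arises: the choice of Hamiltonian $1$-forms representing the vector fields is not unique, but since any two choices differ by a closed $1$-form and $\brac{\cdot}{\cdot}$ vanishes on closed forms, the resulting cochain $\eta$ is independent of these choices.

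For the second claim, I would invoke the classification theorem of Baez and Crans (Thm.\ 55 in \cite{HDA6}), already cited in the proof of Prop.\ \ref{L2A_extend_1}, which gives a bijective correspondence between (equivalence classes of) Lie $2$-algebras with a given underlying Lie algebra and abelian core, and degree $3$ Chevalley-Eilenberg cohomology classes. Under this correspondence, two Lie $2$-algebras $L_\infty(\Xham,x)$ and $L_\infty(\Xham,y)$ whose Jacobiators represent the same cohomology class are equivalent, and an equivalence of Lie $2$-algebras in their sense is precisely a quasi-isomorphism. Explicitly, the primitive $\eta$ produced above furnishes the homotopy $\Phi$ in a morphism $(\phi_\bullet, \Phi)$ with $\phi_\bullet = \id$ on both $L_0 = \Xham$ and $L_1 = \R$; one checks the coherence condition (Eq.\ \ref{coherence}) reduces exactly to the identity $J_x - J_y = d\eta$ established in the first part.

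The hard part will be identifying the correct primitive $\eta$ and verifying the coherence condition, because the Jacobiators take values in $\R$ (pointwise evaluations of $\omega$) whereas the natural intermediary quantities, such as $\brac{\alpha}{\beta}$, are $1$-forms on $M$; bridging this gap requires carefully evaluating the relevant forms at the points $x$ and $y$ and confirming that the path-dependence washes out. Once the cochain $\eta$ is pinned down, both the independence of $[J_x]$ from the basepoint and the construction of the quasi-isomorphism follow from the same algebraic identity, so the cocycle computation is the single essential step.
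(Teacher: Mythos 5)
Your proposal is correct and follows essentially the same route as the paper: the paper's proof also picks a path $\Gamma$ from $x$ to $y$, defines the $2$-cochain $c(v,v')=\int_{\Gamma}\omega(v,v',\cdot)$, uses part 3 of Prop.\ \ref{semi-bracket} together with the fundamental theorem of calculus to show $J_{y}-J_{x}=\delta c$, and then cites the Baez--Crans classification to upgrade $[J_{x}]=[J_{y}]$ to a quasi-isomorphism. The only difference is that the paper writes the primitive down explicitly from the start rather than leaving its identification as the ``hard part,'' and it does not need your well-definedness remark since $\brac{\alpha}{\beta}$ depends only on the Hamiltonian vector fields.
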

\begin{proof}
To prove that $[J_{x}]$ is independent of $x$, we use a construction
similar to the proof of Prop.\ 4.1 in \cite{Brylinski:1990}. 
The Chevalley-Eilenberg differential
\[
\delta \maps \Hom(\Lambda^{n}
\Xham, \R)\to \Hom(\Lambda^{n+1} \Xham, \R)
\]
 is defined by
\[
(\delta c)(v_{1},\ldots,v_{n+1}) = \sum_{1 \leq i < j \leq n}
(-1)^{i+j}c([v_{i},v_{j}],v_{1},\cdots,
\hat{v}_{i},\cdots,\hat{v}_{j},\ldots,v_{n+1}).
\]
Note that if $c$ is an arbitrary 2-cochain then 
\[
(\delta c) (v_{\alpha},v_{\beta},v_{\gamma})   =-c([v_{\alpha},v_{\beta}],v_{\gamma}) +
c([v_{\alpha},v_{\gamma}],v_{\beta}) - c([v_{\beta},v_{\gamma}],v_{\alpha}).
\]
Now let $y \in M$. Let $\Gamma \maps [0,1] \to M$ be a path from $x$ to
$y$. Given $v_{\alpha},v_{\beta} \in \Xham$, define
\[
c(v_{\alpha},v_{\beta}) = \int_{\Gamma} \omega(v_{\alpha},v_{\beta},\cdot).
\]
Clearly, $c$ is a 2-cochain. We claim
\[
J_{y}(v_{\alpha},v_{\beta},v_{\gamma})
-J_{x}(v_{\alpha},v_{\beta},v_{\gamma}) = (\delta c) (v_{\alpha},v_{\beta},v_{\gamma}) 
\]
From part 3 of Prop. \ref{semi-bracket}, we have:
\[
d\ip{\alpha}\ip{\beta}\ip{\gamma}\omega = 
\brac{\alpha}{\brac{\beta}{\gamma}} -
    \brac{\brac{\alpha}{\beta}}{\gamma} 
    -\brac{\beta}{\brac{\alpha}{\gamma}}.
\]
By definition of the bracket $\brac{\cdot}{\cdot}$, this implies
\[
d\ip{\alpha}\ip{\beta}\ip{\gamma}\omega = 
-\omega([v_{\alpha},v_{\beta}],v_{\gamma},\cdot)  
+\omega([v_{\alpha},v_{\gamma}],v_{\beta},\cdot)  
- \omega([v_{\beta},v_{\gamma}],v_{\alpha},\cdot).  
\]
Integrating both sides of the above equation gives
\begin{align*}
\int_{\Gamma} d\ip{\alpha}\ip{\beta}\ip{\gamma}\omega  &=
J_{y}(v_{\alpha},v_{\beta},v_{\gamma})
-J_{x}(v_{\alpha},v_{\beta},v_{\gamma})\\ 
&= -\int_{\Gamma}\omega([v_{\alpha},v_{\beta}],v_{\gamma},\cdot)  
+ \int_{\Gamma} \omega([v_{\alpha},v_{\gamma}],v_{\beta},\cdot)  
- \int_{\Gamma}\omega([v_{\beta},v_{\gamma}],v_{\alpha},\cdot)\\
&= (\delta c) (v_{\alpha},v_{\beta},v_{\gamma}).
\end{align*}

It follows from Thm.\ 57 in Baez and Crans \cite{HDA6} that
$[J_{x}]=[J_{y}]$ implies $L_{\infty}(\Xham,x)$ and $L_{\infty}(\Xham,y)$ 
are quasi-isomorphic (or `equivalent' in their terminology). 
\end{proof}

Now we impose further conditions on our 2-plectic manifold.  From here
on, we assume $(M,\omega)$ is 1-connected (i.e.\ connected and simply
connected). This is the 2-plectic analogue of the requirement that the
symplectic manifold in Sec.\ \ref{symp_extend} be connected. It will
allow us to construct several elementary, yet interesting,
quasi-isomorphisms of Lie 2-algebras.

\begin{prop} \label{L2A_extend_3}
If $M$ is a 1-connected manifold, then the abelian
Lie 2-algebra $\cinf(M) \stackrel{d}{\to} \cOmega^{1}(M)$ is
quasi-isomorphic to $b\u(1)$.
\end{prop}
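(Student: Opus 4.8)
The plan is to exhibit an explicit strict morphism of Lie 2-algebras whose underlying chain map is an isomorphism on homology, and then to invoke the definition of quasi-isomorphism directly. The key simplification is that both objects are abelian: $b\u(1)$ has underlying complex $\R \to 0$ with vanishing bracket and Jacobiator, and $\cinf(M) \stackrel{d}{\to} \cOmega^{1}(M)$ is abelian by hypothesis. Consequently, for \emph{any} chain map equipped with the trivial homotopy $\Phi = 0$, the coherence condition (Eq.\ \ref{coherence}) collapses to $0 = 0$, so every chain map between these two Lie 2-algebras is automatically a strict morphism. It therefore remains only to produce a chain map that is a homology isomorphism.

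First I would compute the homology of both complexes. For $\cinf(M) \stackrel{d}{\to} \cOmega^{1}(M)$ we have $H_{1} = \ker d$, the locally constant functions, and $H_{0} = \cOmega^{1}(M)/d\cinf(M) = H^{1}_{\mathrm{DR}}(M)$. Since $M$ is connected, $\ker d = \R$ (the constant functions); since $M$ is simply connected, $H^{1}_{\mathrm{DR}}(M) = 0$. For $b\u(1) = (\R \to 0)$ we have $H_{1} = \R$ and $H_{0} = 0$. Thus the two complexes have isomorphic homology in each degree, and this is the only place the two topological hypotheses on $M$ enter.

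Next I would define the morphism $\psi = (\psi_{\bullet}, 0) \maps b\u(1) \to \bigl(\cinf(M) \stackrel{d}{\to} \cOmega^{1}(M)\bigr)$ by letting $\psi_{1} \maps \R \to \cinf(M)$ be the inclusion of $\R$ as the constant functions and $\psi_{0} \maps 0 \to \cOmega^{1}(M)$ be the zero map. This is a chain map because the de Rham differential of a constant function vanishes, hence $\psi_{1}$ lands in $\ker d$; it is a strict Lie 2-algebra morphism by the abelian observation above. On homology it induces the identity $\R \to \R$ in degree $1$ (its image is exactly $\ker d \cong \R$) and the only possible map $0 \to 0$ in degree $0$, both isomorphisms. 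By the definition of quasi-isomorphism, $\psi$ then exhibits the desired equivalence.

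There is no real obstacle here: the entire content is the homology computation, and the single point meriting care is checking that $\psi_{1}$ genuinely lands in $\ker d$, so that it descends to the claimed degree-$1$ isomorphism. Connectedness supplies the degree-$1$ identification $\ker d = \R$, while simple connectedness, through the identification $H_{0} = H^{1}_{\mathrm{DR}}(M)$, kills the degree-$0$ homology of the source; together they make $\psi$ a quasi-isomorphism.
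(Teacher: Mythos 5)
Your proof is correct and is essentially the paper's argument: the paper likewise just exhibits an explicit chain map that is a homology isomorphism (using $\mathrm{ev}_{x} \maps \cinf(M) \to \R$ in degree $1$ and the zero map in degree $0$, i.e.\ the map going in the opposite direction to yours), with connectedness and simple connectedness entering exactly where you say they do. Your additional observation that abelianness makes any chain map with $\Phi=0$ automatically a strict morphism is a point the paper leaves implicit, and your choice of direction (inclusion of constants rather than evaluation at a point) is an equally valid quasi-isomorphism.
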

\begin{proof}
Let $x \in M$. The chain map
\[
    \xymatrix{
       \cOmega^{1}(M) \ar[r] & 0 \\
         \cinf(M) \ar^{d}[u]\ar[r]^{~\mathrm{ev}_{x}}  & \R \ar[u]
}
\]
is a quasi-isomorphism.
\end{proof}

\begin{prop} \label{L2A_extend_4} 
If $(M,\omega)$ is a 1-connected
  2-plectic manifold and $x \in M$, then the Lie 2-algebras
  $L_{\infty}(M,\omega)$ and $L_{\infty}(\Xham,x)$ are quasi-isomorphic.
\end{prop}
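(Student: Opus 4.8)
The plan is to write down an explicit \emph{strict} morphism $(\psi_\bullet,0)\maps L_\infty(M,\omega)\to L_\infty(\Xham,x)$ and then check that its underlying chain map is a quasi-isomorphism, which is where $1$-connectedness enters. On the chain level I would take $\psi_0\maps\ham\to\Xham$ to be the map $\alpha\mapsto v_\alpha$ sending a Hamiltonian $1$-form to its Hamiltonian vector field (the map $p$ of Prop.\ \ref{gen_extend}), and $\psi_1\maps\cinf(M)\to\R$ to be evaluation at $x$, $\psi_1(f)=f(x)$. This is a chain map because the differential on $L_\infty(\Xham,x)$ is trivial while the Hamiltonian vector field of an exact form $df$ vanishes, so $\psi_0\circ d=0$.

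Next I would verify that the morphism with vanishing homotopy $\Phi=0$ satisfies the conditions of Def.\ \ref{homo}. The homotopy conditions reduce to genuine equalities of chain maps: in degree $0$ both $\psi_0(\brac{\alpha}{\beta})$ and $[v_\alpha,v_\beta]$ equal $[v_\alpha,v_\beta]$, since $v_{\brac{\alpha}{\beta}}=[v_\alpha,v_\beta]$ by Prop.\ \ref{semi-bracket}; in degree $1$ both sides vanish because $\brac{\cdot}{\cdot}$ is trivial in positive degree and the bracket of $L_\infty(\Xham,x)$ is trivial outside degree $0$. The only substantive point is the coherence condition (Eq.\ \ref{coherence}), which for $\Phi=0$ collapses to
\[
\psi_1\bigl(J(\alpha,\beta,\gamma)\bigr)=J_x\bigl(\psi_0\alpha,\psi_0\beta,\psi_0\gamma\bigr),
\]
all of the bracket terms on the right vanishing for the same degree reason. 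By Thm.\ \ref{semistrict} the left side is $\bigl(\ip{\alpha}\ip{\beta}\ip{\gamma}\omega\bigr)\vert_x$, and by Prop.\ \ref{L2A_extend_1} the right side is $\ip{\alpha}\ip{\beta}\ip{\gamma}\omega\vert_x$, so the two agree on the nose. This exact matching is precisely the reason $\psi_1$ must be evaluation at the chosen point $x$.

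Finally I would check that $\psi_\bullet$ is a quasi-isomorphism. Since $M$ is connected, $\ker\bigl(d\maps\cinf(M)\to\ham\bigr)$ is the constants, so the degree-$1$ homology of $L_\infty(M,\omega)$ is $\R$ and $\psi_1$ restricts to the identity $\R\to\R$. For degree $0$, the map $\psi_0=p$ is surjective onto $\Xham$ and descends to $\bar\psi_0\maps\ham/d\cinf(M)\to\Xham$ whose kernel is the space of closed modulo exact $1$-forms, namely $H^1_{\mathrm{DR}}(M)$; this vanishes because $M$ is simply connected, so $\bar\psi_0$ is an isomorphism. A more conceptual route to the same conclusion is to fit $\psi_\bullet$ into the morphism of short exact sequences of Prop.\ \ref{gen_extend}, realizing it as the identity on the quotient $\Xham$ and as the quasi-isomorphism $\cinf(M)\stackrel{d}{\to}\cOmega^1(M)\to b\u(1)$ of Prop.\ \ref{L2A_extend_3} on the kernel, and then applying the long exact homology sequence. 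I expect the main obstacle to be bookkeeping rather than genuine difficulty: confirming that no nonzero homotopy $\Phi$ is forced, which rests on the exact agreement of the two Jacobiators under $\mathrm{ev}_x$ together with the vanishing of all positive-degree brackets. Once that is in place, the quasi-isomorphism claim follows at once from $H^1_{\mathrm{DR}}(M)=0$.
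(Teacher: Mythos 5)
Your proposal is correct and follows essentially the same route as the paper: the same strict morphism with $\phi_0=p$, $\phi_1=\mathrm{ev}_x$, and trivial homotopy, with the coherence condition reducing to the on-the-nose agreement of the two Jacobiators under evaluation at $x$, and the quasi-isomorphism check using connectedness in degree 1 and $H^1_{\mathrm{DR}}(M)=0$ in degree 0. The only difference is cosmetic: you spell out the homotopy conditions and offer an alternative long-exact-sequence argument, while the paper states these verifications more tersely.
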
 

\begin{proof}
  We construct a quasi-isomorphism from $L_{\infty}(M,\omega)$ to
  $L_{\infty}(\Xham,x)$.  There is a chain map
\[
    \xymatrix{
        \ham \ar[r]^{p} & \Xham\\
        \cinf(M) \ar[u]^{d} \ar[r] ^{\mathrm{ev}_{x}}  & \R \ar[u]_{0}  
}
\]
with $\mathrm{ev}_{x}(f)=f(x)$ and $p(\alpha)=v_{\alpha}$. Since
$p$ preserves the bracket, we take $\Phi$ in Def. \ref{homo} to be
the trivial homotopy. Eq.\ \ref{coherence} holds
since:
\[
\mathrm{ev}_{x}(\omega(v_{\gamma},v_{\beta},v_{\alpha}))=J_{x}(v_{\alpha},v_{\beta},v_{\gamma}),
\]
and therefore we have constructed a Lie 2-algebra morphism. Since $M$
is connected, the homology of the complex $\cinf(M) \stackrel{d}{\to}
\ham$ is just $\R$ in degree 1 and $\ham/d\cinf(M)$ in degree 0.  The
kernel of the surjective map $p$ is the space of closed 1-forms, which
is $d\cinf(M)$ since $M$ is simply connected.
\end{proof} 

We can summarize the results given in Props.\ \ref{gen_extend}
\ref{L2A_extend_1} \ref{L2A_extend_3}, and \ref{L2A_extend_4} with
the following commutative diagram:
\[
\xymatrix{ \cOmega^{1}(M) \ar @{~>}[rd] \ar[rr]^{\jmath} && \ham
  \ar @{~>}[rd]_{p} \ar[rr]^{p} && \Xham \ar[rd]^{\id}\\
  &0 \ar[rr]  && \Xham \ar[rr] && \Xham \\
  \cinf(M) \ar'[r][rr] \ar[uu]^{d} \ar @{~>}[rd]_{\mathrm{ev}_{x}} && \cinf(M) \ar'[u][uu]
  \ar @{~>}[rd]_{\mathrm{ev}_{x}} \ar'[r][rr] && 
  0 \ar'[u][uu]\ar[rd] \\
  & \R \ar[uu] \ar[rr] && \R \ar[rr] \ar[uu] && 0 \ar[uu] }
\]
The back of the diagram shows $L_{\infty}(M,\omega)$ as the central extension of
the trivial Lie 2-algebra $\Xham$. The front shows 
$L_{\infty}(\Xham,x)$ as a central extension of $\Xham$ by
$b\u(1)$. The morphisms going from back to front are all
quasi-isomorphisms. Thus we have the 2-plectic analogue of the
Kostant-Souriau central extension:
\begin{corollary}
If $(M,\omega)$ is a 1-connected 2-plectic manifold, then
$L_{\infty}(M,\omega)$ is quasi-isomorphic to a central extension of
the trivial Lie 2-algebra $\Xham$ by $b\u(1)$.
\end{corollary}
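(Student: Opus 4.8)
The plan is to recognize that the Lie 2-algebra $L_{\infty}(\Xham,x)$ from Prop.~\ref{L2A_extend_1} is \emph{already} the central extension we seek, and then to transport $L_{\infty}(M,\omega)$ onto it using the quasi-isomorphism established in Prop.~\ref{L2A_extend_4}. Indeed, the underlying complex of $L_{\infty}(\Xham,x)$ is $\R \stackrel{0}{\to} \Xham$, which sits precisely between the complex $\R \to 0$ of $b\u(1)$ (with $\R$ in degree $1$) and the complex $0 \to \Xham$ of the trivial Lie 2-algebra $\Xham$ (with $\Xham$ in degree $0$). So the first task is simply to verify that $L_{\infty}(\Xham,x)$ is a strict central extension of $\Xham$ by $b\u(1)$ in the sense of Def.~\ref{extend_def}.

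To do this I would exhibit the two maps
\[
b\u(1) \stackrel{\iota}{\to} L_{\infty}(\Xham,x) \stackrel{\pi}{\to} \Xham,
\]
where $\iota$ is the identity $\R \to \R$ in degree $1$ and $\pi$ is the identity $\Xham \to \Xham$ in degree $0$ (both trivial in the complementary degree). These are chain maps, and the induced sequence of complexes is short exact degreewise: in degree $1$ it reads $\R \stackrel{\id}{\to} \R \to 0$, and in degree $0$ it reads $0 \to \Xham \stackrel{\id}{\to} \Xham$. Next I would check that $\iota$ and $\pi$ are strict Lie 2-algebra morphisms. Since $b\u(1)$ and the trivial $\Xham$ both have vanishing Jacobiator, and since the only nonzero bracket of $L_{\infty}(\Xham,x)$ lives between two degree-$0$ elements, the coherence condition (Eq.~\ref{coherence}) reduces in each case to an identity among terms that all vanish, so the homotopy may be taken trivial. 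Finally, centrality holds because $\im\iota$ consists of degree-$1$ elements, and bracketing any degree-$1$ element in $L_{\infty}(\Xham,x)$ with an arbitrary chain yields zero; hence $[\im\iota, L_{\infty}(\Xham,x)]' = 0$.

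With that in place, the corollary follows by combining this identification with Prop.~\ref{L2A_extend_4}, which (for $1$-connected $M$) supplies a quasi-isomorphism $L_{\infty}(M,\omega) \to L_{\infty}(\Xham,x)$. Thus $L_{\infty}(M,\omega)$ is quasi-isomorphic to the central extension $L_{\infty}(\Xham,x)$ of $\Xham$ by $b\u(1)$, which is exactly the front-to-back content of the commutative diagram preceding the statement. I expect no serious obstacle, as the result is genuinely a corollary; the only point demanding care is the bookkeeping of chain-complex degrees in Def.~\ref{extend_def}, making sure $b\u(1) = (\R \to 0)$ enters in degree $1$ while the trivial Lie 2-algebra $\Xham = (0 \to \Xham)$ enters in degree $0$, so that their extension is realized by the complex $\R \stackrel{0}{\to} \Xham$ underlying $L_{\infty}(\Xham,x)$.
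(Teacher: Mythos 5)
Your proposal is correct and follows exactly the route the paper takes: the front face of the summary diagram exhibits $L_{\infty}(\Xham,x)$ as the strict central extension $b\u(1) \to L_{\infty}(\Xham,x) \to \Xham$, and Prop.~\ref{L2A_extend_4} supplies the quasi-isomorphism $L_{\infty}(M,\omega) \to L_{\infty}(\Xham,x)$ for $1$-connected $M$. Your explicit verification of the conditions of Def.~\ref{extend_def} (degreewise exactness, strictness of the two morphisms, and centrality via the degree-$1$ placement of $\im\iota$) is exactly the bookkeeping the paper leaves implicit in the diagram.
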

Also, from Prop.\ \ref{lie_2_alg_iso} we know that
$L_{\infty}(M,\omega)$ is isomorphic to the Lie 2-algebra
$L_{\infty}(C)^{s}$ consisting of sections of the
Courant algebroid $C$ which preserve a chosen splitting $s
\maps TM \to C$. Therefore:
\begin{corollary}
If $(M,\omega)$ is a 1-connected 2-plectic manifold, then
$L_{\infty}(C)^{s}$ is quasi-isomorphic to a central extension of
the trivial Lie 2-algebra $\Xham$ by $b\u(1)$.
\end{corollary}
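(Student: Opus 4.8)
The plan is to obtain this corollary as a purely formal consequence of Proposition \ref{lie_2_alg_iso} together with the immediately preceding corollary, so that essentially no new geometric input is required. First I would recall that Proposition \ref{lie_2_alg_iso} provides an isomorphism of Lie 2-algebras $\psi \maps L_{\infty}(C)^{s} \to L_{\infty}(M,\omega)$, assembled from the chain map $\phi_{\bullet}$ and homotopy $\Phi$ of Theorem \ref{main_thm} restricted to the subalgebra of splitting-preserving sections. The underlying chain map of $\psi$ is an isomorphism of 2-term complexes, hence induces an isomorphism on homology; by the definition of quasi-isomorphism this shows that $\psi$ is in particular a quasi-isomorphism.

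Next I would invoke the preceding corollary, which furnishes a strict central extension, call it $E$, of the trivial Lie 2-algebra $\Xham$ by $b\u(1)$ together with a quasi-isomorphism relating it to $L_{\infty}(M,\omega)$ (concretely, the zig-zag of quasi-isomorphisms assembled from Props.\ \ref{gen_extend}, \ref{L2A_extend_1}, \ref{L2A_extend_3}, and \ref{L2A_extend_4}). The final step is to compose $\psi$ with this quasi-isomorphism. Composition of Lie 2-algebra morphisms is again a Lie 2-algebra morphism, and since homology is functorial the composite chain map again induces an isomorphism on homology; therefore the composite is a quasi-isomorphism between $L_{\infty}(C)^{s}$ and the same central extension $E$, which is exactly the assertion.

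The only point requiring a little care, rather than a genuine obstacle, is the bookkeeping of directions in this composition: the preceding corollary is phrased as ``$L_{\infty}(M,\omega)$ is quasi-isomorphic to $E$,'' and one must know that quasi-isomorphism behaves as an equivalence relation so that the two one-sided quasi-isomorphisms may be chained. This is precisely Theorem 57 of Baez and Crans \cite{HDA6}, which I would cite to legitimize the transitivity (and, if needed, symmetry) of the relation ``quasi-isomorphic'' on Lie 2-algebras. With that in hand the argument closes in a single line, and no estimate or explicit cocycle computation is needed.
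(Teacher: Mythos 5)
Your proof is correct and is essentially the paper's own argument: the paper deduces this corollary in one line by combining the isomorphism of Prop.\ \ref{lie_2_alg_iso} with the immediately preceding corollary, exactly as you propose. (Minor quibble: Thm.\ 57 of Baez--Crans is really a classification statement via 3-cocycles rather than a transitivity lemma, but the equivalence-relation property of quasi-isomorphism that you invoke is standard and unproblematic here.)
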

A comparison of the above corollary to
the results discussed in Sec.\ \ref{symp_extend} suggests that $L_{\infty}(C)^{s}$ be
interpreted as the quantization of $L_{\infty}(M,\omega)$ with
$b\u(1)$ giving rise to the quantum phase.

Finally, note that a splitting of the short exact sequence of complexes 
\[
    \xymatrix{
        0 \ar[r] & \Xham \ar[r]^{\id}  & \Xham\\
        \R \ar[u]\ar[r]^{\id}  & \R \ar[u]^{0} 
        \ar[r]& 0 \ar[u]}
\]
is the identity map in degree 0 and the trivial map in degree
1. Obviously the splitting preserves the bracket but does not preserve
the Jacobiator. Indeed, the failure of the splitting to be a strict
Lie 2-algebra morphism between $\Xham$ and $L_{\infty}(\Xham,x)$ is
due to the presence of the 3-cocycle $J_{x}$.

\section{Conclusion}
Let us summarize the main points of the previous sections:
If $(M,\omega)$
is a 0-connected, prequantized symplectic manifold, then 
there exists a principal $U(1)$-bundle over $M$ 
equipped with a connection whose curvature is $\omega$, and a corresponding Atiyah
algebroid $A \to M$ equipped with a splitting such that
the Lie algebra of sections of $A$ which preserve the splitting
is isomorphic to a central extension of the Lie algebra of Hamiltonian
vector fields:
\[
\u(1) \to \cinf(M) \to \Xham.
\]
This central extension gives a cohomology class in
$H^{2}_{\mathrm{CE}}(\Xham,\R)$ which can be represented by the
symplectic form evaluated at a point in $M$.

Analogously, if $(M,\omega)$
is a 1-connected, prequantized 2-plectic manifold, then 
there exists a $U(1)$-gerbe over $M$ 
equipped with a connection and curving whose 3-curvature is $\omega$, 
and a corresponding exact Courant algebroid $C \to M$
equipped with a splitting such that
the Lie 2-algebra of sections of $C$ which preserve the splitting
is quasi-isomorphic to a central extension of the (trivial) Lie 2-algebra of Hamiltonian
vector fields:
\[
b\u(1) \to L_{\infty}(\Xham) \to \Xham.
\]
This central extension gives a cohomology class in
$H^{3}_{\mathrm{CE}}(\Xham,\R)$ which can be represented by the
2-plectic form evaluated at a point in $M$.

In future work, we will develop this analogy further in order to
obtain a categorified geometric quantization procedure for
2-plectic manifolds. In doing so, it is likely that we will make
contact with related areas of interest including the representation
theory of loop groups and extended topological quantum field theories
(TQFTs). Such a procedure would also provided new insights into the
theory of Courant algebroids.

However, there are several open problems in prequantization that we
are currently addressing as we set our sights on full quantization. We
have mentioned some of these throughout the text, and we summarize
them here:
\begin{itemize}
\item{For every principal $U(1)$ bundle with connection, there is an
    associated hermitian line bundle with connection, whose
    global sections give a Hilbert space. What is the
    corresponding geometric object for a $U(1)$-gerbe equipped with a
    connection and curving? (One possible answer is described in Sec.\
    5.5 of \cite{CLR_thesis}.)
} 
\item{Sections of the Atiyah algebroid on a prequantized symplectic
    manifold are operators on this Hilbert
    space. How do sections of the Courant algebroid on a prequantized
    2-plectic manifold act as operators on the higher analogue of this
    Hilbert space?
}
\item{Sections of the Atiyah algebroid are 
    infinitesimal $U(1)$-equivariant symmetries of the corresponding
    principal $U(1)$-bundle. Integration gives elements of the gauge group
    i.e.\ equivariant diffeomorphisms of the principal bundle. 
    How can we understand sections of the Courant algebroid on a
    prequantized 2-plectic manifold as infinitesimal symmetries of
    the corresponding $U(1)$-gerbe?
}

\end{itemize}

\section{Acknowledgments}
We thank John Baez, Maarten Bergvelt,
Yael Fregier, Dmitry Roytenberg, Urs Schreiber,
Jim Stasheff and Marco Zambon for helpful conversations. We also
thank the organizers and participants of the G\"{o}ttingen Mathematics Institute
workshop: ``Higher Structures in Topology and Geometry IV'' for
their helpful comments and questions.


\begin{thebibliography}{99}
\bibitem{HDA6}
J.\ Baez and A.\ Crans, Higher-dimensional algebra VI: Lie 2-algebras, 
\textsl{Theory Appl.\ Categ.} {\bf 12} (2004), 492--528.  Also available as 
\href{http://arxiv.org/abs/math/0307263}{arXiv:math/0307263}.

\bibitem{Baez:2008bu}
J.\ Baez, A.\ Hoffnung, and C.\ Rogers, Categorified symplectic
geometry and the classical string, {\sl Comm.\ Math.\ Phys.} \textbf{293} (2010), 701--715.
Also available as \href{http://arxiv.org/abs/0808.0246}{arXiv:0808.0246}.



\bibitem{Baez:2009uu}
J.\ Baez and C.\ Rogers, Categorified symplectic geometry and the
string Lie 2-algebra, \textsl{Homology Homotopy Appl.} \textbf{12}
(2010), 221--236. Also available as \href{http://arxiv.org/abs/0901.4721}{arXiv:0901.4721}.

\bibitem{BaezSchreiber:2005} J.\ Baez and U.\ Schreiber, Higher
gauge theory, in {\sl Categories in Algebra, Geometry and Mathematical 
Physics}, eds.\ A.\ Davydov {\it et al}, {\sl Contemp. Math.} 
{\bf 431}, AMS, Providence, Rhode Island, 2007, pp.\ 7--30.
Also available as \href{http://arxiv.org/abs/math/0511710}{\texttt arXiv:math/0511710}.

\bibitem{Bartels:2004} T.\ Bartels, Higher gauge theory: 2-bundles,
  available as \href{http://arxiv.org/abs/math/0410328}{arXiv:math/0410328}.

\bibitem{Bressler-Chervov}P.\ Bressler and A.\ Chervov, Courant
  algebroids, \textsl{J.\ Math.\ Sci.\ (N.Y.)} \textbf{128} (2005), 3030--3053.
Also available as \href{http://arxiv.org/abs/hep-th/0212195}{arXiv:hep-th/0212195}.

\bibitem{Brylinski:1990}
J.-L.~Brylinski, Noncommutative Ruelle-Sullivan type currents, in 
\textsl{The Grothendieck Festschrift, Vol.\ I}, eds.\ P.\ Cartier
\textit{et al}, \textsl{Progr.\ Math.} \textbf{86} (1990), 477--498. 

\bibitem{Brylinski:1993}
J.-L.~Brylinski, {\sl Loop Spaces, Characteristic Classes and Geometric
Quantization}, Birkhauser, Boston, 1993.


\bibitem{CdS-Weinstein:1999}
A.\ Cannas da Silva and A.\ Weinstein, 
\textsl{Geometric Models for Noncommutative Algebras}, 
Berkeley Mathematics Lecture Notes \textbf{10}, Amer.\ Math.\ Soc.,
Providence, 1999.

  \bibitem{Cantrijn:1999} F.\ Cantrijn, A.\ Ibort, and M. De Leon, On
the geometry of multisymplectic manifolds, {\sl J.\ Austral.\ Math.\
  Soc.\ (Series A)} \textbf{66} (1999), 303--330.

\bibitem{Carey:2004} A.\ L.\ Carey, S.\ Johnson, and M.\ K.\ Murray,
  Holonomy on D-branes, \textsl{J.\ Geo.\ Phys.} \textbf{52} (2004),
  186--216.  Also available as
  \href{http://arxiv.org/abs/hep-th/0204199}{arXiv:hep-th/0204199v3}.

\bibitem{Courant} T.\ Courant, Dirac manifolds, \textsl{Trans.\ Amer.\
    Math.\ Soc.}  \textbf{319} (1990), 631--661.

\bibitem{Dorfman1} I.\ Dorfman, \textsl{Dirac structures and
    integrability of nonlinear evolution equations}, Nonlinear
  Science: Theory and Applications. John Wiley \& Sons, Ltd.,
  Chichester, 1993.

\bibitem{Dorfman2} I.\ Ya.\ Dorfman, Dirac structures of integrable
  evolution equations, \textsl{Phys.\ Lett.\ A.} \textbf{125} (1987), 240--246.





\bibitem{Gualtieri:2007} M.\ Gualtieri, Generalized complex geometry,
  \textsl{Ann.\ of Math.\ (2)} \textbf{174} (2011), 74--123.  Also
  available as \href{http://arxiv.org/abs/math/0703298}{arXiv:math/0703298}.

\bibitem{Helein} F.\ H\'{e}lein, Hamiltonian formalisms for
  multidimensional calculus of variations and perturbation theory, in
  \textsl{Noncompact Problems at the Intersection of Geometry}, eds.\
  A.\ Bahri \textit{et al}, AMS, Providence, Rhode Island, 2001, pp.\
  127--148. Also available as
  \href{http://arxiv.org/abs/math-ph/0212036}{arXiv:math-ph/0212036}.


\bibitem{Hitchin:2004ut} N.\ Hitchin, Generalized Calabi-Yau
  manifolds, \textsl{Quart.\ J.\ Math.\ Oxford Ser.} \textbf{54}
  (2003), 281--308. Also available as
  \href{http://arxiv.org/abs/math/0209099v1}{arXiv:math/0209099v1}.

\bibitem{Kijowski} J.\ Kijowski, A finite-dimensional canonical
formalism in the classical field theory, \textsl{Commun.\ Math.\
Phys.} \textbf{30} (1973), 99--128.

\bibitem{Kostant:1970}
B.\ Kostant, Quantization and unitary representations, \textsl{Lecture Notes in Math.}
\textbf{170} (1970), 87--208.

\bibitem{Lada-Markl} T.\ Lada and M.\ Markl, Strongly homotopy Lie
  algebras, \textsl{Comm.\ Algebra.} \textbf{23} (1995),
  2147--2161. Also available as
  \href{http://arxiv.org/abs/hep-th/9406095v1}{arXiv:hep-th/9406095}.


\bibitem{LS} T.\ Lada and J.\ Stasheff, Introduction to sh Lie
  algebras for physicists, {\sl Int.\ Jour.\ Theor.\ Phys.}
  \textbf{32} (7) (1993), 1087--1103.  Also available as
  \href{http://arxiv.org/abs/hep-th/9209099}{hep-th/9209099}.

\bibitem{Lerman:2009}
E.\ Lerman, Orbifolds as stacks?, \textsl{Enseign. Math. (2)}
\textbf{56} (2010), 315--363. Also available as \href{http://arxiv.org/abs/0806.4160}{arXiv:0806.4160v2}.

\bibitem{Liu:1997}
Z-J.\ Liu, A.\ Weinstein, and P.\ Xu, Manin triples for Lie
bialgebroids, \textsl{J.\ Differential Geometry} \textbf{45} (1997), 547--574.

\bibitem{Mackenzie:1987}
K.\ Mackenzie, \textsl{Lie groupoids and Lie algebroids in differential geometry},
London Math.\ Soc.\ Lecture Note Ser. {\bf 124},
Cambridge U.\ Press, Cambridge, 1987.

\bibitem{Moerdijk:2002} I.\ Moerdijk, Introduction to the language of
  stacks and gerbes. Available as
  \href{http://arxiv.org/abs/math/0212266}{arXiv:math/0212266v1}.

\bibitem{CLR_thesis} C.\ Rogers, \textsl{Higher Symplectic Geometry}, Ph.D.\
  thesis, UC Riverside, 2011. Also available as
\href{http://arxiv.org/abs/1106.4068}{arXiv:1106.4068}.

\bibitem{Rogers:2010nw} C.\ Rogers, $L_{\infty}$-algebras from
  multisymplectic geometry, \textsl{Lett.\ Math.\ Phys.} \textbf{100} (2012), 29--50.
  Also available as \href{http://arxiv.org/abs/1005.2230}{arXiv:1005.2230}.

\bibitem{RomanRoy:2005en}
N.\ Rom\'{a}n-Roy, Multisymplectic Lagrangian and Hamiltonian
formalisms of classical field theories, \textsl{SIGMA} \textbf{5}
(2009), 100, 25 pages. Also available as 
\href{http://arxiv.org/abs/math-ph/0506022}{arXiv:math-ph/0506022}.

\bibitem{Roytenberg-Weinstein}D.\ Roytenberg and A.\ Weinstein,
  Courant algebroids and strongly homotopy Lie algebras, 
 \textsl{Lett.\ Math.\ Phys.}  \textbf{46} (1998), 81--93. Also
   available as \href{http://arxiv.org/abs/math/9802118}{arXiv:math/9802118}.

\bibitem{Roytenberg_thesis}D.\ Roytenberg, \textsl{Courant Algebroids,
    Derived Brackets and Even Symplectic Supermanifolds}, Ph.D.\
  thesis, UC Berkeley, 1999. Also available as
\href{http://arxiv.org/abs/math/9910078}{arXiv:math/9910078}.

\bibitem{Roytenberg_graded} D.\ Roytenberg, On the structure of graded
  symplectic supermanifolds and Courant algebroids, in
  \textsl{Quantization, Poisson Brackets and Beyond}, ed.\ T.\
  Voronov, \textsl{Contemp. Math.}, \textbf{315}, AMS, Providence, RI,
  2002, pp.\ 169--185. Also available as
  \href{http://arxiv.org/abs/math/0203110}{arXiv:math/0203110}.

\bibitem{Roytenberg_L2A} D.\ Roytenberg, On weak Lie 2-algebras, 
in: P.\ Kielanowski \textit{et al} (eds.)\ XXVI Workshop on Geometrical Methods in Physics.
AIP Conference Proceedings \textbf{956}, pp. 180-198. 
American Institute of Physics, Melville (2007).
Also available as \href{http://arxiv.org/abs/0712.3461}
{\texttt arXiv:0712.3461}.


\bibitem{Schreiber:2005}
U.\ Schreiber, {\sl From Loop Space Mechanics to Nonabelian Strings},
Ph.D.\ thesis, Universit\"at Duisburg-Essen, 2005.  Also available as
\href{http://arxiv.org/abs/hep-th/0509163}{\texttt arXiv:hep-th/0509163}.



\bibitem{Severa1} P.\ \v{S}evera, Letter to Alan Weinstein, available
  at \url{http://sophia.dtp.fmph.uniba.sk/~severa/letters/}.

\bibitem{Severa-Weinstein} P.\ \v{S}evera and A.\ Weinstein, Poisson
  geometry with a 3-form background, \textsl{Prog.\ Theor.\ Phys.\
    Suppl.} \textbf{144} (2001), 145--154. Also available as
  \href{http://arxiv.org/abs/math/0107133}{arXiv:math/0107133}.


\bibitem{Souriau:1967}
J.-M.~ Souriau, Quantification g\'{e}om\'{e}trique: Applications, 
\textsl{Ann.\ Inst.\ H.\ Poincar\'{e} Sect.\ A (N.S.)} \textbf{6}
(1967), 311--341.



\end{thebibliography}
\end{document}